\declaretheorem[name=Theorem,numberwithin=section]{thm}
\declaretheorem[name=Lemma,numberwithin=section]{lem}
\renewenvironment{theorem}{\begin{thm}}{\end{thm}}
\renewenvironment{lemma}{\begin{lem}}{\end{lem}}
\newcommand{\sv}{\varphi}
\renewcommand{\cal}[1]{\mathcal{#1}}
\DeclareMathOperator*{\pr}{Pr}
\newcommand{\commentout}[1]{}
\DeclareMathOperator*{\Ex}{\mathbb{E}}
\newcommand{\R}{\mathbb{R}}
\DeclareMathOperator*{\E}{\mathbb{E}}
\newcommand{\var}{\mathrm{Var}}
\renewcommand{\vec}[1]{\bm{#1}}
\newcommand{\bin}{\mathrm{B}}
\newcommand{\nor}{\mathcal{N}}
\newcommand{\eps}{\varepsilon}
\DeclareMathOperator{\pois}{Pois}
\providecommand{\shapley}{\sv}
\providecommand{\pseq}{a}
\providecommand{\pseqall}{A}
\DeclareMathOperator{\Sym}{Sym}
\providecommand{\perms}[1]{\Sym_{#1}}
\providecommand{\setbinom}[2]{\genfrac{[}{]}{0pt}{}{#1}{#2}}
\providecommand{\inR}{\in_R}
\providecommand{\teps}{t_{\eps}}
\title{Power Distribution in Randomized Weighted Voting: the Effects of the Quota}
\author{
Joel Oren\inst{1}\and
Yuval Filmus\inst{2}\and
Yair Zick\inst{3}\and
Yoram Bachrach\inst{4}
}
\institute{
Department of Computer Science,\\
University of Toronto, Canada\\
\email{oren@toronto.edu}
\and
Institute of Advanced Study\\ 
Princeton University, USA\\
\email{yfilmus@ias.edu}
\and
School of Computer Science\\
Carnegie-Mellon University, USA\\
\email{yairzick@cs.cmu.edu}
\and
Microsoft Research, UK\\
\email{yobach@microsoft.com}
}
\date{}
\newif\ifcomments
\newcommand{\kibitz}[2]{{\color{#1}{#2}}}
\newcommand{\kibitz}[2]{}
\newcommand{\jo}[1]{\kibitz{magenta}{[Joel: #1]}}
\newcommand{\yz}[1]{\kibitz{red}{[Yair: #1]}}
\newcommand{\yf}[1]{\kibitz{green}{[Yuval: #1]}}
\begin{document}
\maketitle

\begin{abstract}
We study the Shapley value in weighted voting games. The Shapley value
has been used as an index for measuring the power of individual agents
in decision-making bodies and 
political organizations, where decisions are made by a majority vote
process. We characterize the impact of changing the quota (i.e., the
minimum number of seats in the parliament that are required to form a
coalition) on the Shapley values of the agents.
Contrary to previous studies, which assumed that the agent weights
(corresponding to the size of a caucus or a political party) are
fixed, we analyze new domains in which the weights are stochastically generated, modeling, for
example, elections processes.


We examine a natural weight generation process: the Balls and Bins
model, with uniform as well as exponentially decaying
probabilities. 
We also analyze weights that admit a super-increasing sequence,
answering several open questions pertaining to the Shapley values in such games.
\end{abstract}
\section{Introduction}
\label{sec:intro}
Weighted voting is a common method for making group decisions. This is the method used in parliaments: one can think of the political parties in a parliament as \emph{weighted agents}, where an agent's weight is the number of seats it holds in the parliament. 

Power dynamics in electoral systems have been the focus of academic study for several decades. One important observation is that the weight of a party is not necessarily equal to its electoral power. For example, consider a parliament that has three parties, two with $50$ seats, and one with $20$ seats. Assuming that a majority of the votes is required in order to pass a bill, all three parties have the same decision-making power: no single party can pass a bill on its own, whereas any two parties can. This contrasts the fact that one of the parties has significantly less weight than the other two. 
One of the most prominent measures of voting power is the Shapley-Shubik power index (also referred to as the Shapley value); it has played a central role in the analysis of real-life voting systems, such as the US electoral college~\cite{mann60sv, mann62sv}, the EU council of members~\cite{leech02vote,slom2006penrose,felsen98book}, and the UN security council~\cite{shapleyvalue}. 

Empirical studies of weighted voting present an interesting phenomenon: changes to the quota (i.e., the number of votes required in order to pass a bill, also called the threshold) can dramatically affect agent voting power. Changes to the quota have been proposed as a way to correct power imbalance in the EU council of members~\cite{slom2006penrose}; this is because quota changes are perceived as a preferable alternative to changes to agent weights (as is proposed by~\cite{penrose46}), and were thus argued for in~\cite{slom2006penrose,leech03quota}. 

The objective of this paper is to study the effects of changes to the quota on electoral power as measured by the Shapley-Shubik power index. Previous analytical studies of power indices as a function of the quota \jo{commented-out citations ---this should appear in the previous work, as it muddles the intro.~\cite{zick2011sv,zick13var, zuck12manip}} have mostly focused on the following question: given a set of weights, what would be the effect of changes to the quota on voting power? 

As mentioned, the effect that changing the quota has on the Shapley value has been studied to some extent. However, as these studies show, relatively little can be said about these effects in general. 
Instead of studying arbitrary weight vectors, we assume that weights are sampled from certain natural distributions. 
Modeling a parliamentary election process, we think of voters as casting their ballots according to a prescribed distribution, that determines the number of seats each party will hold. 
Using natural weight generation processes, we analyze the expected behavior of the Shapley value as a function of the quota. For example, some of our results show that even when weights are likely to be very similar, some choices of a quota will cause significant differences in voting power. 

\paragraph{Our Contributions}
\label{sec:contrib}



Our work focuses on the \emph{Balls and Bins} model ---a model that has received considerable recent attention in the computer science community~\cite{Mitzenmacher00thepower,upfalProbBook,raab98}. 
Informally, in this iterative process, in each round a ball is thrown into one of several bins according to a fixed probability distribution. 

In Section~\ref{sec:bbuniform}, we study a simple model, where each ball lands in one of the $n$ bins uniformly at random. 
We identify a repetitive fluctuation pattern in the Shapley values, with cycles of length $\frac mn$. We show that if the quota is sufficiently bounded away from the borders of its length-$\frac{m}{n}$ cycle, then the Shapley values of all agents are likely to be very close to each other. On the other hand, we show that due to noise effects, when the quota is situated close enough to small multiples of $\frac mn$, the highest Shapley value can be roughly double than that of the smallest one. In other words, even if one expects that candidate weights are identical with high probability, choosing a quota near a multiple of $\frac mn$ may result in a great difference between Shapley values. 

To complement our findings for the uniform case, in Section~\ref{sec:balls_bins_exponential} we consider the case in which the probabilities decay exponentially, with a decay factor no larger than $1/2$. We show that analyzing this case essentially boils down to characterizing the Shapley values in a game where weights are a super-increasing sequence (Section~\ref{sec:super-increasing}). Our results significantly strengthen previous results obtained for this case in~\cite{zuck12manip}: we fully characterize the Shapley value as a function of the quota for the super-increasing case. In addition to giving a closed-form formula for the Shapley values, we also provide conditions for the equality of consecutive agents.

\subsection{Related Work}
\label{sec:related}
Weighted voting games (WVGs) have been studied extensively, two classic power measures proposed by \cite{banzhaf} and by \cite{shapleyvalue,shapleyshubik} being the main object of analysis (see \cite{felsen98book}, \cite{gtbook} and \cite{coopbook} for expositions). 
From an economic point of view, the appeal of the Shapley value is that it is the {\em only} division rule that satisfies certain desirable axioms~\cite{shapleyvalue}.
Computing Shapley values in WVGs has also been the focus of several studies: power indices have been shown to be computationally intractable (see~\cite{compcoopbook} for a detailed overview), but easily approximable. 
Randomized sampling has been employed in order to efficiently approximate the Shapley value, with the earliest examples of this technique appears in~\cite{mann60sv}, with subsequent analysis in~\cite{bachrach10approx,fatima2008linear}. However, this type of analysis employs the inherent probabilistic nature of power indices, rather than inducing randomness in the weighted voting game itself.

If one makes no assumptions on weight distributions, very little can be said about the effects of the quota on WVGs.
Indeed, as demonstrated in~\cite{zick2011sv,zick13var,zuck12manip}, power measures are highly sensitive to varying quota values. While~\cite{zick13var} presents some preliminary results on the effects of the quota when weights are sampled from a given distribution, our work takes a more principled approach to the matter.

Several works have studied the effects of randomization on weighted voting games from a theoretical, computational and empirical perspective. The earliest study of randomization and its effects on voting power is due to \cite{penrose46}, who shows that the Banzhaf power index scales as the square root of players' weight when weights are drawn from bounded distributions.\footnote{The results shown by Penrose predate the work by Banzhaf, but can be applied directly to his work; see~\cite{felsen05voting} for details.} \cite{lindner2004wvg} shows certain convergence results for power indices, when players are sampled from some distributions; \cite{jelnov2012sv} show that when weights are sampled from the uniform distribution, the expected Shapley value of a player is proportional to its weight, assuming that the quota is 50\%. \cite{zick13var} considers a model where the quota is sampled from a uniform distribution, and bounds the variance of the Shapley value in this setting, both for general weights and for weights sampled from certain distributions.
The effects of changes to the quota have also been studied empirically, mostly in the context of the EU council of members~\cite{leech03quota,leech02vote,slom2006penrose}.

\section{Preliminaries}
\label{sec:prelim}
\paragraph{General notation}
Given a vector $\vec x \in \R^n$ and a set $S\subseteq \{1,\dots,n\}$, let $x(S) = \sum_{i \in S}x_i$. For a random variable $X$, we let $\E[X]$ be its expectation, and $\var[X]$ be its variance.
For a set $S$, we denote by $\setbinom{S}{k}$ the collection of subsets of $S$ of cardinality $k$. The notation $T \inR \setbinom{S}{k}$ means that the set $T$ is chosen uniformly at random from $\setbinom{S}{k}$.
We let $\bin(n,p)$ denote the binomial distribution with $n$ trials and success probability $p$.
We let $\nor(\mu,\sigma^2)$ denote the normal distribution with mean $\mu$ and variance $\sigma^2$.
We let $U(a,b)$ denote the uniform distribution on the interval $[a,b]$.

We let $O_p( \cdot )$ denote the usual big-O notation, conditioned on a fixed value of $p$. In other words, having $f(n)=O_p(g(n))$ means that there exist functions $K(\cdot), N(\cdot)$, such that for $n \geq N(p)$,  $f(n) \leq K(p) \cdot g(n)$.

Finally, for a distribution $D$ over $\R$, and some event $\mathcal{E}$, we simplify our notation by letting $\pr [\cal{E}(D)] = \pr_{x \sim D} [ \mathcal{E}(x) ]$. For example, for $a > 0$, we can write $\pr [\bin(n,p) \leq a ] = \pr_{x \sim \bin(n,p)}[x \leq a]$.

\paragraph{Weighted voting games} A {\em weighted voting game} (WVG) is given by a set of agents $N = \{1,\dots,n\}$, where each agent $i \in N$ has a positive weight $w_i$, and a {\em quota} (or {\em threshold}) $q$. Unless otherwise specified, we assume that the weights are arranged in non-decreasing order, $w_1\le \dots\le w_n$.
For a subset of agents $S \subseteq N$, we define $w(S) = \sum_{i \in S} w_i$.

A subset of agents $S \subseteq N$ is called {\em winning} (has value $1$) if $w(S) \ge q$ and is called losing (has value $0$) otherwise. 

\paragraph{The Shapley value} Let $\perms{n}$ be the set of all permutations of $N$. Given some permutation $\sigma \in \perms{n}$ and an agent $i \in N$, we let $P_i(\sigma) = \{j \in N : \sigma(j) < \sigma(i)\}$; $P_i(\sigma)$ is called the set of $i$'s predecessors in $\sigma$. Let us write $m_i(S)$ to be $v(S\cup \{i\}) - v(S)$; in other words, $m_i(S) =1$ if and only if $v(S) = 0$ but $v(S\cup\{i\}) = 1$. If $m_i(S) = 1$, we say that $i$ is {\em pivotal} for $S$; similarly, we write $m_i(\sigma) = m_i(P_i(\sigma))$, and say that $i$ is pivotal for $\sigma \in \perms{n}$ if $i$ is pivotal for $P_i(\sigma)$. The Shapley-Shubik power index (often referred to as as the Shapley value in the context of WVG's) is simply the probability that $i$ is pivotal for a permutation $\sigma \in \perms{n}$ selected uniformly at random. More explicitly,
$$\sv_i  = \frac{1}{n!}\sum_{\sigma \in \perms{n}}m_i(\sigma).$$
Since $\sigma^{-1}(i)$ is distributed uniformly when $\sigma$ is chosen at random from $\perms{n}$, we also have the alternative formula
\begin{equation} \label{eq:shapley-alt-formula}
\sv_i = \frac{1}{n} \sum_{\ell=0}^{n-1} \E_{S \inR \setbinom{N \setminus \{i\}}{\ell}} m_i(S).
\end{equation}

\paragraph{Properties of the Shapley value} For WVGs, it is not hard to show that $w_i \leq w_j$ implies $\sv_i \leq \sv_j$, and so if the weights are arranged in non-decreasing order, the minimal Shapley value is $\sv_1$ and the maximal one is $\sv_n$. Another useful property that follows immediately from the definitions is that $\sum_{i \in N} \sv_i = 1$, assuming $0 < q \leq \sum_{i \in N} w_i$. When we want to emphasize the role of the quota $q$, we will think of the Shapley values as functions of $q$: $\sv_i(q)$.

\section{An Overview of our Results}\label{sec:overview}
We begin by briefly presenting our three major contributions. 
\paragraph{The Balls and Bins Distribution: the Uniform Case}
In this work, we study the effects of the quota on agents' voting power, when agent weights are sampled from the balls and bins distribution. 
This distribution is appealing, as it can naturally model election dynamics under plurality voting: consider an election where $m$ voters vote for $n$ parties; the weight of each party is determined by the number of votes it receives. If we assume that each voter will vote for party $i$ with probability $p_i$, party seats are distributed according to the balls and bins distribution with the probability vector $\vec p$. 
We first study the case where balls are thrown into bins uniformly at random; that is, voters choose parties uniformly at random (the {\em impartial culture} assumption). 

\begin{figure}[ht!]%
\centering
\begin{subfigure}[t]{0.44\textwidth}
\includegraphics[width = \columnwidth]{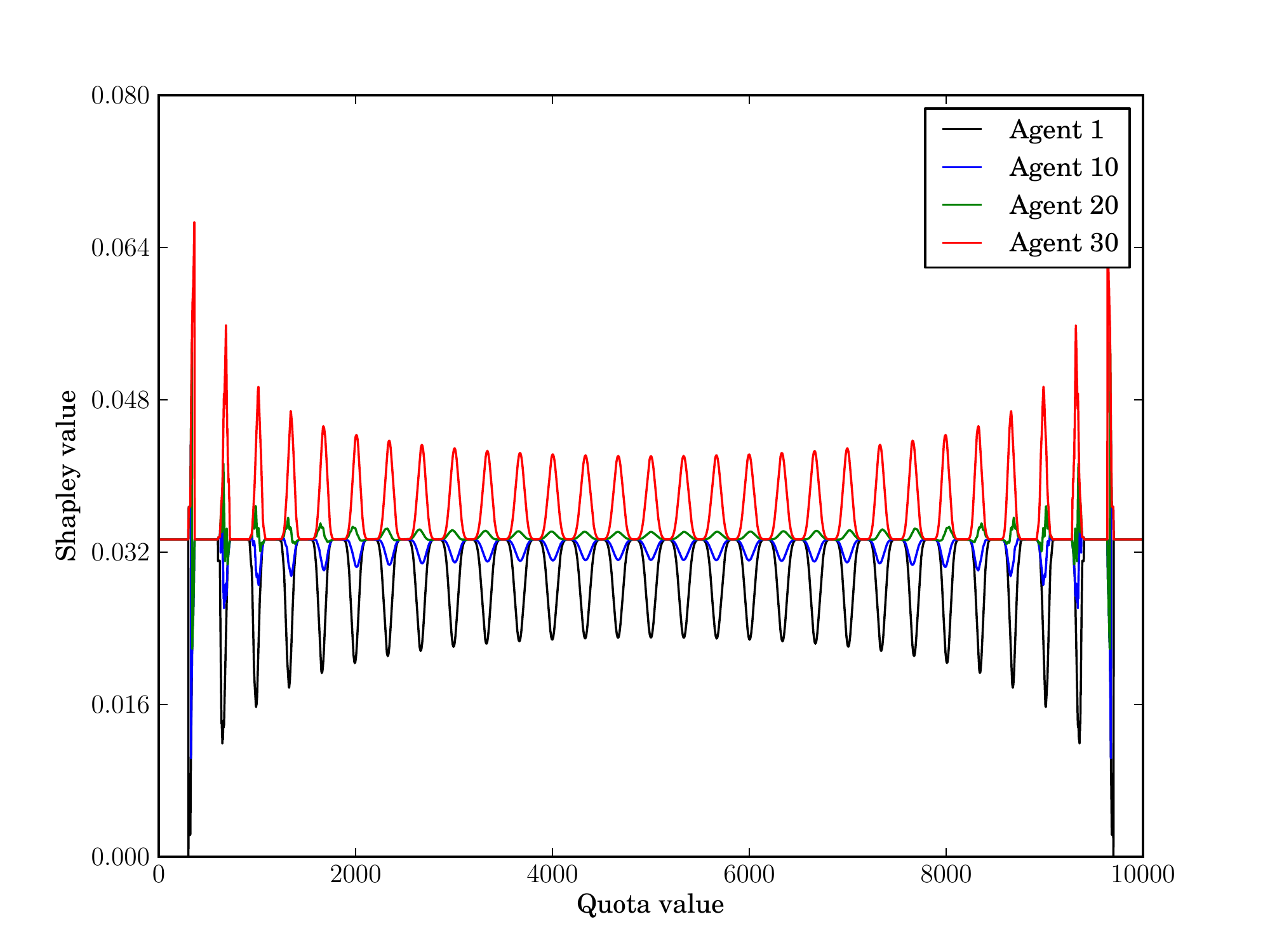}%
\caption{The Shapley values of agents 1, 10, 20 and 30 in a 30-agent WVG where weights were drawn from a balls and bins distribution with $m = 10000$ balls.}%
\label{fig:sv_30_players_10000_balls}
\end{subfigure}
\hspace{0.02\textwidth}
\begin{subfigure}[t]{0.44\textwidth}
\includegraphics[width = \columnwidth]{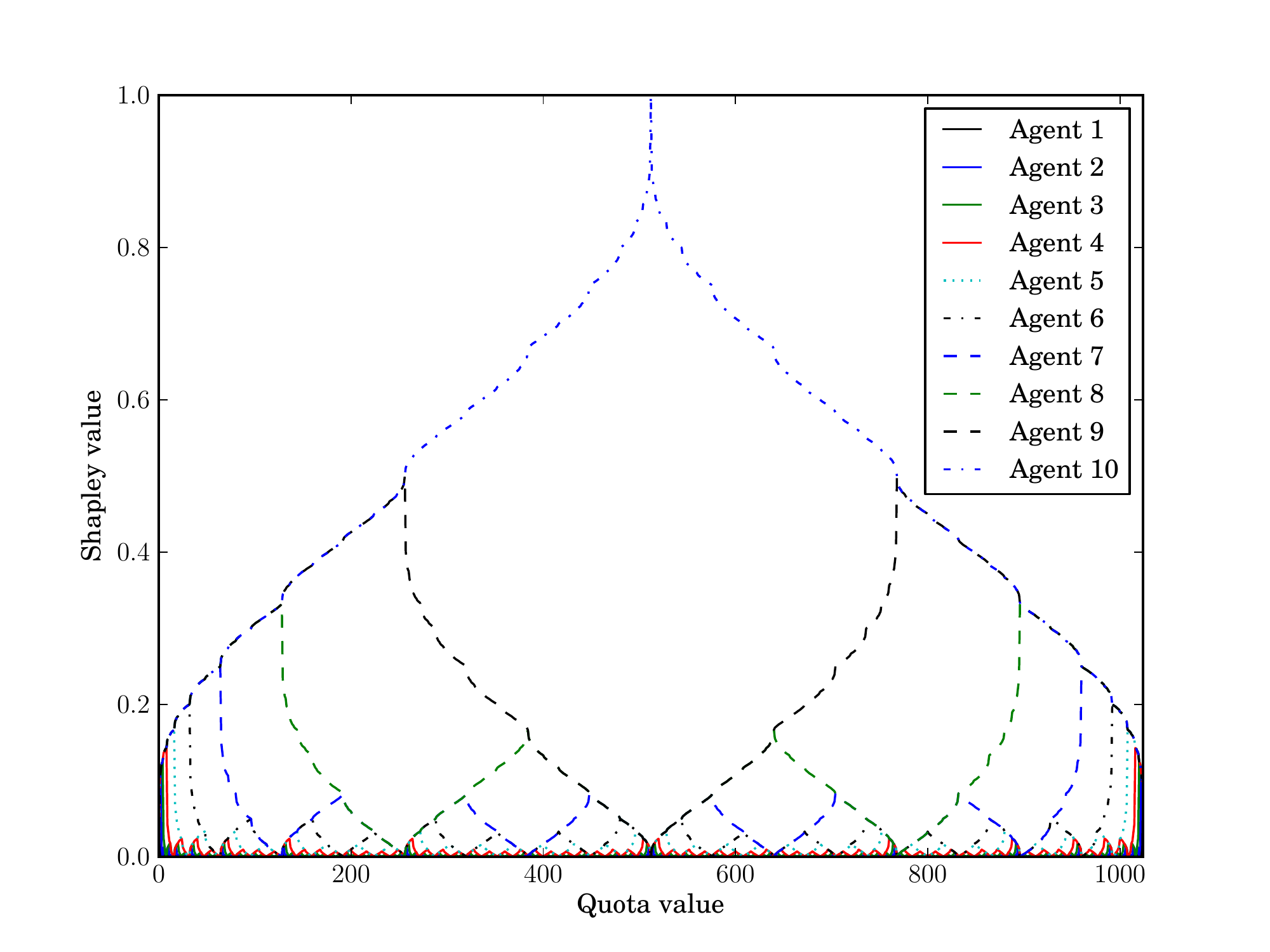}%
\caption{The Shapley values as a function of the quota in a 10-agent game where agent $i$'s weight is $2^{i-1}$.}%
\label{fig:sv_pows_of_two_10}%
\end{subfigure}
\label{fig:bigfig}
\end{figure}

When weights are drawn from a uniform balls and bins distribution with $m$ balls and $n$ bins, Shapley values follow a rather curious fluctuation pattern as the quota varies (see Figure~\ref{fig:sv_30_players_10000_balls}). Note that the fluctuation is quite regular, with power disparity occurring at regular intervals (these intervals are of length $\frac mn$). Our first result (Theorem~\ref{thm:plurality_positive}) shows that when we select a quota that is sufficiently far from an integer multiple of $\frac mn$, all agents' Shapley values tend to be the same. When the quota is an integer multiple of $\frac mn$, we distinguish between two cases; when the quota is far from the 50\% mark, power disparity is likely to occur, with the weakest agent's voting power sinking to less than half that of the strongest (Theorem~\ref{thm:min_shap_small_l}). However, disparity is mitigated when the quota is near the 50\% mark (Theorem~\ref{thm:min_shap_large_l}). These results indicate that even if weights are likely to be similar (as is the case for the uniform balls and bins distribution), power disparity is likely in certain quotas.

\paragraph{The Balls and Bins Distribution: the Exponential Case}
In Section~\ref{sec:balls_bins_exponential}, we explore the case where the voting probabilities are exponentially increasing, i.e. $\frac{p_i}{p_{i+1}} = \rho$ for some fixed constant $0 < \rho < \frac12$. In this case, we show (Theorem~\ref{thm:super-increasing3}) that agents' weights are very likely to be {\em super-increasing} (super-increasing weights were first studied by~\cite{zuck12manip}).  
Thus, in order to understand the expected behavior of voting power as a function of the quota in the exponential setting, it suffices to characterize the Shapley value for weighted voting games with super increasing weights.

\paragraph{Super-Increasing Weights}
Following the crucial observation made in Theorem~\ref{thm:super-increasing3}, we complete characterize voting power in WVGs with super-increasing weights in Section~\ref{sec:super-increasing}. First we show that in order to compute the Shapley value of an agent under a super-increasing sequence, it suffices to know his Shapley value when weights are powers of 2 (Lemma~\ref{lem:super-increasing-P}). This connection leads to a closed-form formula for the Shapley value when weights are super-increasing (Theorem~\ref{thm:super-increasing-formula}). 
Employing our formula, we are able to derive some interesting properties of the Shapley values as functions of the quota for super-increasing sequences. These results generalize those found in~\cite{zuck12manip}, providing a clear understanding of the mechanics of power distribution and the quota for the case of super-increasing sequences. 

We conclude our study with an interesting analysis of $\sv_i(q)$ when weights are finite prefixes of the sequence $\left(2^m\right)_{m = 0}^\infty$. The analysis explains in many ways the fractal shape of $\sv_i(q)$ when weights are powers of two, and shows when voting power will increase or decrease. 


\section{The Balls and Bins Distribution: the Uniform Case}
\label{sec:bbuniform}
We now consider a generative stochastic process called the Balls and Bins process. In its most general form, given a set of $n$ bins and a distribution represented by a vector $\mathbf{p} \in [0,1]^n$ such that $\sum_{i=1}^n p_i = 1$, the process unfolds in $m$ steps. At every step, a ball is thrown into one of the bins based on the probability vector $\vec p$. The resulting weights are then sorted in non-decreasing order $w_1 \leq \dots \leq w_n$.


We begin our study of the balls and bins process by considering the most commonly studied version of the balls and bins model, in which each ball is thrown into one of the bins with equal probability, i.e., $p_i=1/n$, for all $i \in N$.


As Figure~\ref{fig:sv_30_players_10000_balls} shows for the case of $n=30$, the behavior of the Shapley values demonstrates an almost perfect cyclic pattern, with intervals of length $m/n$. As can be seen in the figure, for quota values that are sufficiently distant from the interval endpoints, all of the Shapley values tend to be equal (as the Shapley values of the highest and lowest agents are equal in these regions).
As the number of balls grows, all of the bins tend to have nearly the same number of balls in them; however, low weight discrepancy does not immediately translate to low power discrepancy: we can guarantee nearly equal voting power in some quotas, but not in others. 

We begin by providing a formula for the differences between two Shapley values. 
\begin{restatable}{lem}{binomialformula} \label{lem:binomial-formula}
For all agents $i,j \in N$,
\[|\sv_j - \sv_i| = \frac{1}{n-1} \sum_{\ell=0}^{n-2} \Pr_{S \in_R \setbinom{N \setminus \{i,j\}}{\ell}}[q - \max(w_i,w_j) \leq w(S) < q - \min(w_i,w_j)].\]
\end{restatable}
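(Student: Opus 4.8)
The plan is to start from the alternative expression~\eqref{eq:shapley-alt-formula}, which (since $\tfrac1n\cdot\tfrac1{\binom{n-1}{|S|}}=\tfrac{|S|!\,(n-1-|S|)!}{n!}$) is the same as the classical weighted‑sum formula $\sv_i=\sum_{S\subseteq N\setminus\{i\}}\frac{|S|!\,(n-1-|S|)!}{n!}\,m_i(S)$, and to compute $\sv_j-\sv_i$ term by term after re‑indexing both sums so that they range over subsets of $N\setminus\{i,j\}$. In the sum for $\sv_i$ I would split the subsets $S\subseteq N\setminus\{i\}$ according to whether $j\in S$: the $j\notin S$ part is already indexed by $S\subseteq N\setminus\{i,j\}$, while in the $j\in S$ part I write $S=S'\cup\{j\}$ with $S'\subseteq N\setminus\{i,j\}$, so its coefficient becomes $\frac{(|S'|+1)!\,(n-2-|S'|)!}{n!}$ and $m_i(S'\cup\{j\})=\mathbf{1}[\,w(S')+w_j<q\le w(S')+w_j+w_i\,]$. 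I would do the symmetric split for $\sv_j$. By the monotonicity property recalled in the preliminaries, if we assume without loss of generality that $w_i\le w_j$ then $|\sv_j-\sv_i|=\sv_j-\sv_i$, and since the right‑hand side of the lemma depends on $i,j$ only through $\min$ and $\max$ this assumption is harmless.

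Subtracting the two expansions, the only real observation is that \emph{both} surviving differences of indicators collapse to the same event. For the $j\notin S$ part, $\mathbf{1}[w(S)<q\le w(S)+w_j]-\mathbf{1}[w(S)<q\le w(S)+w_i]=\mathbf{1}[w(S)+w_i<q\le w(S)+w_j]$, because (using $w_i\le w_j$) the smaller‑weight event is contained in the larger one. For the $j\in S$ part, the two intervals $(w(S')+w_i,\,w(S')+w_i+w_j]$ and $(w(S')+w_j,\,w(S')+w_i+w_j]$ share their upper endpoint, so their indicator difference is again $\mathbf{1}[w(S')+w_i<q\le w(S')+w_j]$. Hence, after renaming $S'$ back to $S$, every term that survives carries the single indicator $\mathbf{1}[w(S)+w_i<q\le w(S)+w_j]$, which is exactly $\mathbf{1}[\,q-\max(w_i,w_j)\le w(S)<q-\min(w_i,w_j)\,]$.

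It then remains to collect coefficients. For a fixed $S\subseteq N\setminus\{i,j\}$ with $|S|=\ell$, the contributions from the $j\notin S$ and $j\in S$ parts sum to $\frac{\ell!\,(n-1-\ell)!+(\ell+1)!\,(n-2-\ell)!}{n!}=\frac{\ell!\,(n-2-\ell)!}{n!}\bigl((n-1-\ell)+(\ell+1)\bigr)=\frac{\ell!\,(n-2-\ell)!}{(n-1)!}$, using the elementary identity $\ell!(n-1-\ell)!+(\ell+1)!(n-2-\ell)!=n\cdot\ell!(n-2-\ell)!$. Grouping the sum by $\ell$ and noting there are $\binom{n-2}{\ell}$ subsets at level $\ell$, the level‑$\ell$ contribution equals $\binom{n-2}{\ell}\frac{\ell!(n-2-\ell)!}{(n-1)!}\,\Pr_{S\in_R\setbinom{N\setminus\{i,j\}}{\ell}}[\,q-\max(w_i,w_j)\le w(S)<q-\min(w_i,w_j)\,]=\frac{1}{n-1}\,\Pr_{S\in_R\setbinom{N\setminus\{i,j\}}{\ell}}[\cdots]$, and summing over $\ell=0,\dots,n-2$ yields the claim. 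I do not expect a genuine obstacle: the argument is pure bookkeeping, and the two points to get right are the indicator collapse in the subtraction step and the shifted factorials in the re‑indexing $S=S'\cup\{j\}$; the factorial identity above is the only non‑routine piece of algebra and is immediate.
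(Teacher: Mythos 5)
Your proof is correct, and it takes a genuinely different route from the paper's. You work entirely on the "subset" side of the Shapley formula: expand $\sv_i$ and $\sv_j$ as weighted sums over subsets, split each sum according to whether the other agent is present, observe that both indicator differences collapse to the single event $q-w_j\le w(S)<q-w_i$, and then recombine the two factorial coefficients via the identity $\ell!(n-1-\ell)!+(\ell+1)!(n-2-\ell)!=n\cdot\ell!(n-2-\ell)!$ to produce the $\tfrac{1}{n-1}$ and the size-$(n-2)$ probability. The paper instead works on the permutation side: it pairs each permutation $\sigma$ with its transposition $T_{ij}(\sigma)$ (swapping $i$ and $j$), writes $\sv_j-\sv_i=\E_\sigma\bigl(m_j(T_{ij}\sigma)-m_i(\sigma)\bigr)$, handles the two orderings of $i$ and $j$ in $\sigma$ by inspection, and then notes that $P'_i(\sigma)=P_i(\sigma)\setminus\{j\}$ is distributed as a uniformly random subset of $N\setminus\{i,j\}$ of a uniformly random size in $\{0,\ldots,n-2\}$, which delivers the $\tfrac{1}{n-1}$ for free. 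The two approaches encode the same cancellation: your pair $(S,\,S\cup\{j\})$ for $\sv_i$ versus $(S,\,S\cup\{i\})$ for $\sv_j$ is exactly the combinatorial shadow of the paper's $\sigma\leftrightarrow T_{ij}(\sigma)$ coupling. The paper's transposition argument is more conceptual (it makes the cancellation geometric and dispenses with the factorial bookkeeping), while yours is more elementary and self-contained, requiring no observation about the distribution of $|P'_i(\sigma)|$ but paying for it with the explicit coefficient identity. Both are complete.
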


We now give a theoretical justification for the near-identity of Shapley values for quotas that are well-away from integer multiples of $\frac mn$.
\begin{restatable}{thm}{pluralitypositive}
\label{thm:plurality_positive}
Let $M = \frac{m}{3n^3}$.
Suppose that $|q-\frac{\ell m}{n}| > \frac{1}{\sqrt{M}} \frac{m}{n}$ for all integers $\ell$. Then with probability $1 - 2(\frac{2}{e})^n$, all Shapley values are equal to $1/n$.
\end{restatable}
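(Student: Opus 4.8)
The plan is to combine Lemma~\ref{lem:binomial-formula} with a concentration estimate for the balls‑and‑bins loads. Since $w_1\le\dots\le w_n$ forces $\sv_1\le\dots\le\sv_n$ and $\sum_i\sv_i=1$ (using the standing assumption $0<q\le m$), all Shapley values equal $1/n$ exactly when $\sv_1=\sv_n$. Applying Lemma~\ref{lem:binomial-formula} with $(i,j)=(1,n)$, this happens exactly when, for every $\ell\in\{0,\dots,n-2\}$, \emph{no} set $S$ with $|S|=\ell$ inside $N\setminus\{1,n\}$ satisfies $w(S)\in[q-w_n,\,q-w_1)$. So it suffices to exhibit an event of probability at least $1-2(2/e)^n$ on which this ``window'' $[q-w_n,q-w_1)$ is avoided for all $\ell$ simultaneously.

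Write $\mu=m/n$, let $\delta_{\max}=\max_i|w_i-\mu|$, and let $V=\sum_{i:\,w_i>\mu}(w_i-\mu)=\tfrac12\sum_{i=1}^n|w_i-\mu|$ (the two expressions agree because $\sum_iw_i=m$). Two elementary deterministic facts make the reduction go through. First, $\mu-w_1\le\delta_{\max}$ and $w_n-\mu\le\delta_{\max}$, so $[q-w_n,q-w_1)$ is contained in the interval of radius $\delta_{\max}$ centered at $q-\mu$. Second, for any $S\subseteq N$ with $|S|=\ell$ one has $|w(S)-\ell\mu|=|\sum_{i\in S}(w_i-\mu)|\le V$, so every attainable value $w(S)$ with $|S|=\ell$ lies in the interval of radius $V$ centered at $\ell\mu$. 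Hence the window and the attainable values are disjoint as soon as $|q-(\ell+1)\mu|>\delta_{\max}+V$, i.e.\ once $q$ is at distance more than $\delta_{\max}+V$ from every integer multiple of $m/n$. Since $\tfrac{1}{\sqrt M}\tfrac mn=\sqrt{3mn}$, on the event $\{\delta_{\max}+V\le\sqrt{3mn}\}$ the hypothesis of the theorem forces $\sv_1=\sv_n$, hence $\sv_i=1/n$ for all $i$.

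It remains to bound $\Pr[\delta_{\max}+V>\sqrt{3mn}]$ by $2(2/e)^n$; I would split this into $\Pr[\delta_{\max}\ge\sqrt{3m}]$ and $\Pr[V>\sqrt{3mn}-\sqrt{3m}]$, aiming to keep each at most $(2/e)^n$. The first is routine: each $w_i$ is marginally $\bin(m,1/n)$, so the multiplicative Chernoff bound gives $\Pr[|w_i-\mu|\ge\sqrt{3m}]\le 2e^{-n}$, and a union bound over the $n$ bins leaves $2ne^{-n}\le(2/e)^n$. The second is the crux, and the place where a careless estimate breaks the argument: the aggregate deviation $\sum_i|w_i-\mu|=2V$ must be controlled at the scale $\sqrt{mn}$, not $n\sqrt m$, so it is \emph{not} enough to know that each $|w_i-\mu|$ is at most $\sqrt{3m}$ (that only yields $2V\le n\sqrt{3m}$, losing a factor of $\sqrt n$). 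Instead one exploits that $\sum_i|w_i-\mu|$ is a function of the $m$ ball placements with bounded differences $2$ (moving one ball changes it by at most $2$) and mean of order $\sqrt{mn}$, so that McDiarmid's inequality --- equivalently, a Chernoff bound using negative association of the multinomial counts --- gives $\Pr[\sum_i|w_i-\mu|>c\sqrt{mn}]\le e^{-\Omega(n)}$; the constant comes out small enough that $\sqrt{3m}+V\le\sqrt{3mn}$ once $n$ is past a small absolute constant (for smaller $n$, or when $m$ is not large compared to $n^3$, no admissible quota can be that far from every multiple of $m/n$ and the statement is vacuous). Summing the two failure probabilities gives the claimed $2(2/e)^n$, and $M=m/(3n^3)$ is precisely the constant bookkeeping that turns $\tfrac1{\sqrt M}\tfrac mn$ into the threshold $\sqrt{3mn}$.
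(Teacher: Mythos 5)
Your proof is \emph{correct in its overall structure} but takes a genuinely different route from the paper, and the constants do not come out cleanly by your method for small $n$.

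The paper's proof drops the sorting assumption (so the bins stay independent), states a symmetric criterion (its Proposition that all Shapley values equal $1/n$ iff $q\notin(w(S\cup\{i\}),w(S\cup\{j\})]$ for all disjoint $i,j,S$), and then directly concentrates \emph{every} coalition weight: $w(S)\sim\bin(m,|S|/n)$, Chernoff gives $\Pr\bigl[|w(S)-|S|m/n|>\sqrt{3nm}\bigr]\le 2e^{-n}$, and a union bound over all $2^n$ coalitions yields exactly the advertised $2^n\cdot 2e^{-n}=2(2/e)^n$. That is where the constant comes from, and it works for every $n$. Your reduction instead controls only two scalar, sorting-invariant statistics $\delta_{\max}$ and $V=\tfrac12\sum_i|w_i-\mu|$, and deterministically deduces the same per-coalition containment from them. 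That decomposition is sound, and it cleverly avoids the exponential union bound; your warning that bounding $V$ via $n\delta_{\max}$ would lose a $\sqrt n$ factor and that one must concentrate $V$ separately (McDiarmid, or negative association) is exactly the right observation. The price is that you only get $e^{-\Omega(n)}$ for the $V$ tail rather than the sharp constant, so you must check that the $\Omega(\cdot)$ is large enough.

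The one concrete gap is the dismissal of small $n$ as ``vacuous.'' The bound $1-2(2/e)^n$ is negative only for $n\le 2$; for $n=3$ and $n=4$ the statement is non-vacuous, yet the McDiarmid estimate as you have sketched it does not close. Concretely, with the rigorous bound $\E\bigl[\sum_i|w_i-\mu|\bigr]\le\sqrt{mn(1-1/n)}$ and bounded differences $2$, the exponent $t^2/(2m)$ at the threshold $2(\sqrt{3mn}-\sqrt{3m})$ works out to roughly $3.04n-8.54\sqrt n+6$, which is below the required $n(1-\ln 2)\approx 0.307n$ exactly for $n\in\{3,4\}$. So either you need a sharper bound on $\E[V]$ in that range (the asymptotic $\sqrt{2/\pi}$ constant would just squeak through at $n=3$, but that is not a rigorous non-asymptotic bound), or you need to fall back on the $2^n$ union bound as the paper does. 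This is a constants issue rather than a conceptual one, but as written the proposal does not actually certify the stated probability for $n=3,4$, and the claim that those cases are vacuous is wrong.
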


The idea of the proof is the following. Suppose that $w_i \leq w_j$. According to Lemma~\ref{lem:binomial-formula}, $\sv_i \neq \sv_j$ only if for some $S \subseteq N \setminus \{i,j\}$, we have $q - w_j \leq w(S) < q - w_i$.
For a fixed  set of agents $|S| \in \setbinom{N \setminus \{i,j\}}{k}$ we have $S \sim \bin(m,k/n)$ --- as each ball enters into one of the bins corresponding to $S$ with probability $k/n$. As a result, $w(S)$ is concentrated around the mean $km/n$. On the other hand $q - w_j,q - w_i \approx q - \frac{m}{n}$. Therefore, if $q$ is far away from $\frac{(k+1)m}{n}$ for all $0 \leq k \leq n-2$, then the probability that $q - w_j \leq w(S) < q - w_i$ is very small. The details can be found in the appendix.

Returning to our voting setting, the interpretation of Theorem~\ref{thm:plurality_positive} is that if the voter population is much larger than the number of candidates, and the votes are assumed to be cast uniformly at random (i.e., a totally neutral distribution of preferences), then choosing a quota that is well away from a multiple of $\frac mn$, will most probably lead to an even distribution of power among the elected representatives (e.g., political parties). 

\subsection{How Weak Can the Weakest Agent Get in the Uniform Case?} \label{sec:bbuniform-weak}
As Theorem~\ref{thm:plurality_positive} demonstrates, if the quota is sufficiently bounded away from any integral multiple of $\frac mn$, then the distribution of power tends to be even among the agents. When the quota is close to an integer multiple of $\frac mn$, it may very well be that the resulting weighted voting game may not display such an even distribution of power, as a result of weight differences, as a result of the intrinsic ``noise'' of the process. Figure~\ref{fig:sv_30_players_10000_balls} provides an empirical validation of this intuition. Motivated by these observations, we now proceed to study the expected Shapley value of the weakest agent, $\sv_1$ (recall that we assume that the weights are given in non-decreasing order).

We now present two contrasting results. Let $q = \ell \cdot \frac{m}{n}$, for an integer $\ell$. When $\ell = o(\log n)$, we show that the expected minimal Shapley value is roughly $\frac{1}{2n}$, and so it is at least half the maximal Shapley value (in expectation).

\begin{theorem}
\label{thm:min_shap_small_l}
  Let $q=\ell \cdot \frac{m}{n}$ for some integer $\ell = o(\log n)$. For $m = \Omega(n^3\log n)$, $\E [\sv_1] = \frac{1}{2n} + o(\frac{1}{n})$.
\end{theorem}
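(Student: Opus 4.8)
The plan is to work directly from~\eqref{eq:shapley-alt-formula}. For the weakest agent, $m_1(S)=1$ exactly when $w(S)<q\le w(S)+w_1$, so
\[
n\cdot\sv_1 \;=\; \sum_{k=0}^{n-1}\Pr_{S\inR\setbinom{N\setminus\{1\}}{k}}\!\bigl[q-w_1\le w(S)<q\bigr],
\]
and, averaging over the balls-and-bins process, $n\cdot\E[\sv_1]=\sum_{k=0}^{n-1}\Pr[q-w_1\le w(S_k)<q]$, where the probability is now over the weights \emph{and} the random subset $S_k$. I will show this sum is $\tfrac12+o(1)$; the $o(1)$-probability ``atypical'' events used below cost only $o(1/n)$ in $\E[\sv_1]$ because always $\sv_1\le 1/n$ ($\sv_1$ is the smallest of $n$ Shapley values that sum to $1$). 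The concentration inputs I would use are the standard balls-into-bins facts: with probability $1-o(1)$ every load satisfies $w_i=\tfrac mn\pm O(\sqrt{(m/n)\log n})$ and moreover the minimum load is genuinely low, $w_1\le\tfrac mn-\Omega(\sqrt{(m/n)\log n})$; writing $\delta:=\tfrac mn-w_1$, this means $\delta=\Theta(\sqrt{(m/n)\log n})$ and in particular $\tfrac12\cdot\tfrac mn\le w_1\le\tfrac mn$. I also use that for a subset $T$ fixed in advance $w(T)\sim\bin(m,|T|/n)$, and that a uniform $k$-subset of the $n-1$ non-minimal bins is within total-variation distance $O(k/n)$ of a process-independent uniform $k$-subset of all $n$ bins, so $w(S_k)$ may be treated as $\bin(m,k/n)$ at an additive cost $o(1)$ for $k$ not too large (for $k$ close to $n$ one argues on the complementary bins instead).

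Since $w(S_k)$ concentrates around $k\tfrac mn$ while the target window $[q-w_1,q)$ has width $w_1=\Theta(m/n)$ and lies inside $[\tfrac{(\ell-1)m}{n},\tfrac{\ell m}{n})$, only $k\in\{\ell-1,\ell\}$ contribute non-negligibly. For $k=\ell$: $w(S_\ell)$ behaves like $\bin(m,\ell/n)$, which has mean exactly $q$ and standard deviation $\Theta(\sqrt{\ell m/n})=o(m/n)$ — here $m=\Omega(n^3\log n)$ and $\ell=o(\log n)$ enter — so $\Pr[w(S_\ell)<q]=\tfrac12+o(1)$ (median of a binomial, via Berry--Esseen), whereas $\Pr[w(S_\ell)<q-w_1]\le o(1)+\Pr[\bin(m,\ell/n)<(\ell-\tfrac12)\tfrac mn]=o(1)$ by a Chernoff bound; hence the $k=\ell$ term equals $\tfrac12+o(1)$. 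For $k\notin\{\ell-1,\ell\}$ the mean $k\tfrac mn$ of $w(S_k)$ lies at distance $\ge\tfrac mn$ from the window, which is $\omega(\sqrt{\log n})$ standard deviations (again using $m=\Omega(n^3\log n)$), so Chernoff — respectively a Bernstein inequality for sampling without replacement when $k$ is close to $n$ — bounds each such term by $n^{-\omega(1)}$, hence $o(1)$ over all $\le n$ of them.

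The delicate term is $k=\ell-1$. Here $w(S_{\ell-1})$ has mean $\tfrac{(\ell-1)m}{n}$, but the window \emph{starts} at $q-w_1=\tfrac{(\ell-1)m}{n}+\delta$; since $\delta=\Theta(\sqrt{(m/n)\log n})$ and $w(S_{\ell-1})$ has standard deviation $\Theta(\sqrt{(\ell-1)m/n})$, the window begins $\Theta(\sqrt{\log n/(\ell-1)})$ standard deviations above the mean, which is $\omega(1)$ precisely because $\ell=o(\log n)$; coupling with $\bin(m,(\ell-1)/n)$ and applying a Chernoff/Bernstein bound then gives that this term is $\exp(-\Omega(\log n/\ell))=o(1)$. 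This is the main obstacle: it genuinely uses the lower bound $w_1\le\tfrac mn-\Omega(\sqrt{(m/n)\log n})$ on the minimum load — the trivial $w_1\le\tfrac mn$ would only make the window touch the mean of $w(S_{\ell-1})$, and then this term could be $\Theta(1)$ (indeed it is $1$ in the degenerate all-equal configuration), which is exactly the difference between the answer $\tfrac1{2n}$ and $\tfrac1n$ — together with a sharp enough upper-tail estimate for a sum of $\ell-1$ random loads at deviation $\Theta(\sqrt{(m/n)\log n})$, for which the naive Hoeffding/Serfling bound through the range $O(\sqrt{(m/n)\log n})$ is too weak.

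A secondary technical point running through the argument is that the bin carrying weight $w_1$ is the process-dependent argmin while $S_k$ ranges over the \emph{other} bins; this is absorbed by the exchangeability of the balls-and-bins process and the coupling above at $o(1)$ cost. Combining the three regimes, $n\cdot\E[\sv_1]=(\tfrac12+o(1))+o(1)+o(1)=\tfrac12+o(1)$, i.e.\ $\E[\sv_1]=\tfrac1{2n}+o(\tfrac1n)$.
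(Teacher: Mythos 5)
Your proof is correct and follows essentially the same route as the paper: reduce via~\eqref{eq:shapley-alt-formula} to $\sum_k \Pr[q-w_1 \le w(S_k) < q]$, observe that only $k\in\{\ell-1,\ell\}$ contribute, extract the $\tfrac12$ from a Berry--Esseen median estimate for $\bin(m,\ell/n)$ at $q$, and use the Poisson-approximation lower bound on $\tfrac mn - w_1$ together with a Chernoff bound on the binomial to show the $k=\ell-1$ term is $o(1)$ when $\ell=o(\log n)$ --- these are exactly Lemmas~\ref{lem:min_shapley_bound}, \ref{lem:prob-est}, \ref{lem:min_weight} and the Chernoff step in the paper's proof. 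Your observation that the $k=\ell-1$ term is the crux --- that the trivial $w_1\le m/n$ would fail, and that the one-sided deviation must be measured against the binomial variance rather than the loads' range --- correctly identifies the role of Lemma~\ref{lem:min_weight}; the only cosmetic difference is that the paper routes the $\tfrac12$ through the exact combinatorial identity of Lemma~\ref{lem:relating-ell-ell-1} (with $O(1/n^2)$ error) whereas you use an $O(\ell/n)$ coupling between uniform $\ell$-subsets of $N\setminus\{1\}$ and of $N$, which loses nothing at the $o(1/n)$ precision this theorem requires.
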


In contrast, when $\ell = \Omega(n)$, this effect disappears.

\begin{theorem} \label{thm:min_shap_large_l}
Let $q = \ell \cdot \frac{m}{n}$ for $\ell \in \{1,\ldots,n\}$ such that $\gamma \leq \frac{\ell}{n} \leq 1-\gamma$ for some constant $\gamma > 0$. Then for $m = \Omega(n^3)$, 
$\E[\sv_1] \geq \frac{1}{n} - O_\gamma \left(\sqrt{\frac{\log n}{n^3}}\right).$
\end{theorem}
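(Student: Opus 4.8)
The plan is to prove a bound that holds pointwise on a high-probability event and then take expectations; since $\sv_1 \ge 0$ always, it suffices to lower-bound $\sv_1$ on that event. Fix $q = \ell m/n$ with $\gamma n \le \ell \le (1-\gamma)n$, and let $\mathcal{E}$ be the event on which (i) every weight satisfies $|w_i - m/n| \le \delta$ for $\delta = O(\sqrt{(m/n)\log n})$ (standard max-/min-load concentration for balls and bins, valid since $m = \Omega(n^3)$), and (ii) the empirical variance $s^2 = \frac1n\sum_i (w_i - m/n)^2$ is at least $c\,m/n$ for a fixed constant $c>0$ (the mean $m/n$ is deterministic, $\E[s^2] = \frac mn(1-\frac1n)$, and $\sum_i(w_i - m/n)^2$ concentrates by bounded differences). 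Both conditions fail with probability $\exp(-\Omega(n/\log n))$, negligible compared with $\sqrt{\log n/n^3}$.

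On $\mathcal{E}$ I would use $1 - n\sv_1 = \sum_{i=2}^n(\sv_i - \sv_1) \le (n-1)(\sv_n - \sv_1)$ (monotonicity of Shapley values in the weights) and apply Lemma~\ref{lem:binomial-formula} with $i=1$, $j=n$, giving $\sv_n - \sv_1 = \frac1{n-1}\sum_{k=0}^{n-2}\Pr_{S\in_R\setbinom{N\setminus\{1,n\}}{k}}[q - w_n \le w(S) < q - w_1]$. Since $q - w_n$ and $q - w_1$ both lie within $\delta$ of $(\ell-1)m/n$ on $\mathcal{E}$, while for $|S| = k$ the sum $w(S)$ concentrates around $\approx km/n$ with fluctuation $O(\sqrt{m\log n})$, and $m/n \gg \sqrt{m\log n}$ when $m = \Omega(n^3)$, a Hoeffding bound for sampling without replacement shows that every term with $k \ne \ell-1$ is $\exp(-\Omega(m/(n^2\log n))) = \exp(-\Omega(n/\log n))$; summing the $\le n$ of them still leaves a negligible contribution. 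For the single surviving term $k = \ell-1$, write $\Pr_S[q - w_n \le w(S) < q - w_1] \le (w_n - w_1 + 1)\cdot\sup_x \Pr_S[w(S) = x]$ (as $w(S)$ is integer-valued), and use $w_n - w_1 \le 2\delta$ on $\mathcal{E}$.

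The crux is the anti-concentration estimate $\sup_x \Pr_{S\in_R\setbinom{N\setminus\{1,n\}}{\ell-1}}[w(S) = x] = O_\gamma(1/\sqrt m)$. On $\mathcal{E}$ the weights in $N\setminus\{1,n\}$ still have variance $\Omega(m/n)$ (deleting two weights within $\delta$ of the mean changes $\sum_a(w_a - \text{mean})^2$ by only $O((m/n)\log n) = o(m)$), so $\var_S[w(S)] = (\ell-1)\tfrac{(n-2)-(\ell-1)}{n-3}\cdot\Omega(m/n) = \Omega_\gamma(m)$ using $\gamma n \le \ell \le (1-\gamma)n$; an anti-concentration inequality for sums over uniformly random subsets then yields $\sup_x \Pr_S[w(S) = x] = O(1/\sqrt{\var_S[w(S)]}) = O_\gamma(1/\sqrt m)$. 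Plugging in, the $k=\ell-1$ term is $O(\delta)\cdot O_\gamma(1/\sqrt m) = O_\gamma(\sqrt{\log n/n})$, so $\sv_n - \sv_1 = O_\gamma(\tfrac1n\sqrt{\log n/n}) = O_\gamma(\sqrt{\log n/n^3})$, hence $1 - n\sv_1 \le (n-1)(\sv_n - \sv_1) = O_\gamma(\sqrt{\log n/n})$ and $\sv_1 \ge \tfrac1n - O_\gamma(\sqrt{\log n/n^3})$ on $\mathcal{E}$; taking expectations and absorbing the $\Pr[\mathcal{E}^c]$-loss into the error term finishes the proof.

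The main obstacle is exactly this anti-concentration step. A clean $O(1/\sqrt{\var})$ bound on $\sup_x \Pr_S[w(S) = x]$ is false in general (if all bin counts were congruent modulo a large $d$, $w(S)$ would live on a coarse lattice), so besides the variance lower bound one must add to $\mathcal{E}$ the high-probability guarantee that the bin counts are not trapped in a coarse arithmetic progression --- e.g., that some two of them differ by a bounded amount --- and then invoke an Esseen/Littlewood--Offord-type estimate for sampling without replacement. Verifying that this non-degeneracy holds with overwhelming probability for balls and bins, and pushing it through the anti-concentration inequality, is where the real work lies.
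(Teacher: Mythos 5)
Your decomposition $1 - n\sv_1 = \sum_{i\ge 2}(\sv_i - \sv_1) \le (n-1)(\sv_n - \sv_1)$, followed by Lemma~\ref{lem:binomial-formula} applied to $(i,j)=(1,n)$, is a genuinely different route from the paper, which instead derives the exact asymptotic formula for $\E[\sv_1]$ in Lemma~\ref{lem:min_shapley_bound} and plugs in the Berry--Esseen estimate from Lemma~\ref{lem:p_eps_bound}. Your route is shorter and needs less structure (only monotonicity and normalization of the Shapley values), but it only gives the lower bound, so it cannot recover the two-sided estimate of Theorem~\ref{thm:min_shap_small_l}; since Theorem~\ref{thm:min_shap_large_l} is one-sided, that is fine here.

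Where you declare a ``crux,'' however, you have manufactured a difficulty that is not there. You do not need a pointwise bound $\sup_x \Pr_S[w(S)=x] = O(1/\sqrt{m})$, and the Littlewood--Offord/lattice concerns that go with it, because the quantity you actually need to control is an \emph{interval} probability: $\Pr_S[\,q-w_n \le w(S) < q-w_1\,]$ for $|S|=\ell-1$. On the event $\{|w_i - m/n|\le \delta\ \forall i\}$ (which indeed holds with overwhelming probability for $\delta = C\sqrt{(m/n)\log n}$ --- though the failure probability is only $n^{1-\Theta(C^2)}$, polynomially small, not $\exp(-\Omega(n/\log n))$ as you wrote; polynomial is enough), the random interval $[q-w_n,\,q-w_1)$ is contained in the \emph{deterministic} interval $[(\ell-1)m/n - \delta,\,(\ell-1)m/n + \delta)$ of length $2\delta$. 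You can then drop the conditioning, push the expectation over the weights inside as the paper does in Lemma~\ref{lem:prob-est} (with a factor $\frac{n(n-1)}{(n-\ell+1)(n-\ell)} = O_\gamma(1)$ since you removed two indices), and reduce to $\Pr[\bin(m,\tfrac{\ell-1}{n}) \in [(\ell-1)m/n-\delta,\,(\ell-1)m/n+\delta)]$. Berry--Esseen for the binomial, which is exactly what the paper already invokes, bounds this by $\frac{2\delta}{\sigma\sqrt{2\pi}} + O(1/\sqrt m) = O_\gamma(\sqrt{\log n/n})$, where $\sigma^2 = m\tfrac{\ell-1}{n}(1-\tfrac{\ell-1}{n}) = \Theta_\gamma(m)$. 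This gives $\sv_n - \sv_1 = O_\gamma(\sqrt{\log n/n^3})$ and hence the claim. In particular, you do not need a local limit theorem, you do not need the empirical-variance part of your event $\mathcal{E}$, and you do not need an anti-concentration inequality for sampling without replacement: converting to a binomial, as the paper does, sidesteps all of it.

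One more streamlining note: rather than proving a pointwise bound on $\mathcal{E}$ and then ``taking expectations,'' it is cleaner to take expectations from the start. Both $1-n\sv_1$ and $\sv_n - \sv_1$ are nonnegative, so $\E[1-n\sv_1] \le (n-1)\E[\sv_n-\sv_1]$, and $\E[\sv_n-\sv_1]$ is directly an average over the joint randomness of weights and of $S$, which is the object Lemma~\ref{lem:prob-est} is built to handle. This avoids any delicate conditioning on $\mathcal{E}$ inside the binomial estimate.
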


The idea behind the proof of both theorems is the formula for $\sv_1$ given in Lemma~\ref{lem:min_shapley_bound}. In this formula and in the rest of the section, the probabilities are taken over both the displayed variables and the choice of weights.

\begin{restatable}{lem}{minshapleybound}
\label{lem:min_shapley_bound}
  Let $q=\ell \cdot \frac{m}{n}$, where $\ell \in \{1,\ldots,n-1\}$. For $m = \Omega(n^3\log n)$,
  \[
  \E [\sv_1] = \frac{1}{2(n-\ell)} - \frac{\ell}{n(n-\ell)} + \frac{1}{n-\ell} \pr_{A \inR \setbinom{N \setminus \{1\}}{\ell-1}}[w(A)+w_1 \geq q] \pm O\left(\frac{1}{n^2}\right).
  \]
\end{restatable}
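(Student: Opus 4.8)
The plan is to start from the alternative Shapley formula~\eqref{eq:shapley-alt-formula}, applied to agent~$1$, and to isolate the two terms among the $n$ that carry all the mass. Taking the expectation over the balls-and-bins weights and using $m_1(S)=\mathbf{1}[q-w_1\le w(S)<q]$ (valid since $w_1>0$), we get
\[
\E[\sv_1]=\frac1n\sum_{k=0}^{n-1}p_k,\qquad p_k:=\pr_{S\inR\setbinom{N\setminus\{1\}}{k}}\bigl[\,q-w_1\le w(S)<q\,\bigr],
\]
where each $p_k$ is over both $S$ and the weights. First I would record two facts: $w_1=\min_i w_i\le m/n$ deterministically, and, by a Chernoff bound for each $w_i\sim\bin(m,1/n)$ together with a union bound, with probability $1-O(1/n^2)$ every weight lies in $[\,m/n-\Delta,\ m/n+\Delta\,]$ with $\Delta=\Theta(\sqrt{(m/n)\log n})$; here $m=\Omega(n^3\log n)$ is exactly what forces $n\Delta\ll m/n$. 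On this ``good'' event, for $k\le\ell-2$ one has $w(S)\le(\ell-2)(m/n+\Delta)<(\ell-1)m/n\le q-w_1$, and for $k\ge\ell+1$ one has $w(S)\ge(\ell+1)(m/n-\Delta)>\ell m/n=q$; in both cases $m_1(S)=0$. Hence $p_k\le\pr[\text{bad}]=O(1/n^2)$ for $k\notin\{\ell-1,\ell\}$, so $\E[\sv_1]=\frac1n(p_{\ell-1}+p_\ell)\pm O(1/n^2)$.

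Next I would clean up the two surviving terms using the same good event. When $|S|=\ell-1$ we have $w(S)\le(\ell-1)(m/n+\Delta)<q$, so the upper constraint is vacuous and $p_{\ell-1}=\pr_{A\inR\setbinom{N\setminus\{1\}}{\ell-1}}[w(A)+w_1\ge q]\pm O(1/n^2)$. When $|S|=\ell$ we have $w(S)+w_1\ge(\ell+1)(m/n-\Delta)>q$, so the lower constraint is vacuous and $p_\ell=\pr_{S\inR\setbinom{N\setminus\{1\}}{\ell}}[w(S)<q]\pm O(1/n^2)$.

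Now I would use a combinatorial identity to link the two. Let $T\inR\setbinom{N}{\ell}$ be a uniform $\ell$-subset of \emph{all} $n$ agents. Conditioning on whether $1\in T$ (probability $\ell/n$), and observing that given $1\in T$ the set $T\setminus\{1\}$ is uniform in $\setbinom{N\setminus\{1\}}{\ell-1}$ while given $1\notin T$ the set $T$ is uniform in $\setbinom{N\setminus\{1\}}{\ell}$, we obtain
\[
\pr[w(T)\ge q]=\tfrac{\ell}{n}\,\pr_A[w(A)+w_1\ge q]+\tfrac{n-\ell}{n}\,\pr_S[w(S)\ge q].
\]
Solving for $\pr_S[w(S)<q]=1-\pr_S[w(S)\ge q]$ and substituting into $\E[\sv_1]=\frac1n(\pr_A[w(A)+w_1\ge q]+\pr_S[w(S)<q])\pm O(1/n^2)$, the coefficients telescope (using $\frac1n+\frac{\ell}{n(n-\ell)}=\frac1{n-\ell}$) to
\[
\E[\sv_1]=\tfrac1n-\tfrac1{n-\ell}\,\pr[w(T)\ge q]+\tfrac1{n-\ell}\,\pr_A[w(A)+w_1\ge q]\pm O(1/n^2).
\]

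Finally, the quantitative input: by exchangeability of the $n$ bin counts, $w(T)=\sum_{i\in T}w_i$ is distributed, over $T$ and the weights jointly, exactly as $\bin(m,\ell/n)$, whose mean is exactly $q=\ell m/n$. A Berry--Esseen (or local CLT) estimate then gives $\pr[w(T)\ge q]=\tfrac12+O\!\bigl(1/\sqrt{m(\ell/n)(1-\ell/n)}\bigr)$, and since $\ell\in\{1,\dots,n-1\}$ gives $(\ell/n)(1-\ell/n)\ge\frac1{2n}$ while $m=\Omega(n^3\log n)$, this error is $O(1/n)$, so $\tfrac1{n-\ell}\pr[w(T)\ge q]=\tfrac1{2(n-\ell)}\pm O(1/n^2)$. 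Plugging in and using $\tfrac1n-\tfrac1{2(n-\ell)}=\tfrac1{2(n-\ell)}-\tfrac{\ell}{n(n-\ell)}$ yields the claim. The step that needs care is this last one when $\ell$ is close to $n$: then $1/(n-\ell)$ is large and may amplify the CLT error beyond $O(1/n^2)$. For $\ell\le n/2$ this is automatic (then $1/(n-\ell)\le 2/n$); for $\ell>n/2$ it is handled either by a sharper tail bound exploiting the full strength of $m=\Omega(n^3\log n)$, or by the self-duality of the Shapley value, $\sv_1(q)=\sv_1(m-q)$ up to an $O(1)$ shift of the quota, which reduces to $\ell'=n-\ell\le n/2$. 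I expect this boundary bookkeeping (rather than the identity or the CLT estimate) to be the main obstacle to a fully clean write-up.
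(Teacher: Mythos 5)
Your proposal matches the paper's proof step for step: the same isolation of $k\in\{\ell-1,\ell\}$ from the alternative Shapley formula, the same removal of the vacuous side of each window to pass from $p_{\ell-1},p_\ell$ to your $p_A,p_S$, the same combinatorial identity (your conditioning on $1\in T$ is exactly the paper's split of $\sum_{A\in\setbinom N\ell}$ into the sets containing and not containing agent $1$ in the proof of Lemma~\ref{lem:relating-ell-ell-1}), and the same Berry--Esseen finish using $w(T)\sim\bin(m,\ell/n)$. The only cosmetic difference is that you condition once on a global good event where all $w_i$ lie within $\Delta$ of $m/n$, whereas the paper applies Lemma~\ref{lem:prob-est} plus Chernoff to each $p_k$ individually; the two bookkeeping styles are interchangeable.

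Your flag on the $\ell\to n$ boundary is a real observation, and it is in fact a small inaccuracy in the paper's own Lemma~\ref{lem:relating-ell-ell-1}: there the Berry--Esseen bound $|\pr[\bin(m,\ell/n)<q]-\tfrac12|\le 1/n$ gets multiplied by $n/(n-\ell)$, so the claimed deviation $\le 1/n$ should really read $\le 1/(n-\ell)$, and the error in Lemma~\ref{lem:min_shapley_bound} is $O\bigl(1/(n(n-\ell))\bigr)$ rather than a uniform $O(1/n^2)$. This is harmless downstream, because both Theorem~\ref{thm:min_shap_small_l} ($\ell=o(\log n)$) and Theorem~\ref{thm:min_shap_large_l} ($\ell\le(1-\gamma)n$) keep $n-\ell=\Omega(n)$; but as a statement for all $\ell\le n-1$ the lemma is slightly too strong. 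Of your two proposed repairs, a sharper tail estimate alone does not close the gap — the $1/\sqrt{p(1-p)}$ in Berry--Esseen is tight for the binomial, and the amplification by $1/(n-\ell)$ would require $m\gtrsim n^5/(n-\ell)^3$, beyond $\Theta(n^3\log n)$ — so the duality reduction to $\ell'=n-\ell$ is the route that would actually work, modulo care with the off-by-one shift in the dual quota.
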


The full details of the proof appear in the appendix.

In order to estimate the expression $\pr_{A \inR \setbinom{N \setminus \{1\}}{\ell-1}}[w(A)+w_1 \geq q]$, we need a good estimate for $w_1$. Such an estimate is given by the following lemma.
\begin{restatable}{lem}{minweight}\label{lem:min_weight}
With probability $1 - 2/n$, we have that $\sqrt{\frac{m\log n}{3n}} \leq \frac{m}{n} - w_1 \leq \sqrt{\frac{4m\log n}{n}}.$
\end{restatable}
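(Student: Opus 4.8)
The plan is to work with the unsorted bin counts. Let $X_i$ be the number of balls that land in bin $i$ during the process, so $w_1 = \min_{i \in N} X_i$ and each $X_i$ is marginally distributed as $\bin(m,1/n)$, with mean $\mu := m/n$ and variance $\mu(1-1/n) \le \mu$; all probabilities below are over the balls-and-bins process (the choice of weights). The two inequalities are established separately, and the final bound follows by a union bound over their two failure events.

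For the \emph{upper} bound on $\frac{m}{n} - w_1$ I would show that every bin is reasonably full. By the Chernoff bound for the lower tail of a binomial, $\pr[X_i \le \mu - t] \le \exp(-t^2/(2\mu))$ for any $t \ge 0$. Taking $t = \sqrt{4m\log n/n}$ gives $t^2/(2\mu) = 2\log n$, hence $\pr[X_i \le \mu - t] \le n^{-2}$, and a union bound over the $n$ bins yields $w_1 \ge \mu - \sqrt{4m\log n/n}$ with probability at least $1 - 1/n$.

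For the \emph{lower} bound on $\frac{m}{n} - w_1$ I need the complementary, anti-concentration statement that some bin is significantly underfull. Set $t = \sqrt{m\log n/(3n)}$, so the deviation in standard-deviation units is $t/\sqrt{\mu} \approx \sqrt{(\log n)/3}$. Since $m = \Omega(n^3\log n)$ the mean $\mu = m/n$ is polynomially large in $n$, so this deviation is well inside the moderate-deviation regime and a local-CLT / Bahadur--Rao type estimate gives a matching \emph{lower} bound $\pr[X_i \le \mu - t] \ge p^*$ with $p^*$ only polynomially small (concretely $p^* = n^{-\Theta(1)}$, of order $n^{-1/6}/\sqrt{\log n}$), so that $n p^* \gg \log n$. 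Because $(X_1,\dots,X_n)$ is a multinomial vector it is negatively associated, and the events $\{X_i > \mu - t\}$ are increasing, so
\[
\pr\!\left[\,w_1 > \mu - t\,\right] \;=\; \pr\!\left[\,\textstyle\bigcap_{i \in N} \{X_i > \mu - t\}\,\right] \;\le\; \prod_{i \in N}\pr[X_i > \mu - t] \;\le\; (1-p^*)^n \;\le\; e^{-np^*} \;\le\; \frac{1}{n}.
\]
Combining the two parts, both $\sqrt{m\log n/(3n)} \le \frac{m}{n} - w_1$ and $\frac{m}{n} - w_1 \le \sqrt{4m\log n/n}$ hold simultaneously with probability at least $1 - 2/n$.

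I expect the main obstacle to be making the anti-concentration lower bound in the last step fully rigorous: unlike the Chernoff upper bound there is no one-line argument, and one must either quote a moderate-deviations theorem for the binomial or expand the relevant binomial coefficients via Stirling's formula and sum over a window of width $\Theta(\sqrt{\mu/\log n})$ just below $\mu - t$, controlling the error terms. It is precisely here that the hypothesis $m = \Omega(n^3\log n)$ is needed, to guarantee that the Gaussian approximation is valid out to $O(\sqrt{\log n})$ standard deviations. The negative-association step (multinomial counts are negatively associated) and the Chernoff step are routine; one could alternatively replace negative association by a direct sequential-exposure or second-moment argument on the number of underfull bins, but negative association gives the cleanest bound.
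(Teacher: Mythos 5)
Your upper bound direction (the lower bound on $w_1$) matches the paper's Lemma~\ref{lem:min_weight_lb} exactly --- a Chernoff tail bound for each bin followed by a union bound --- and is correct. The lower bound direction, showing some bin is substantially underfilled, is where you genuinely depart from the paper. You decouple the dependent multinomial loads via negative association, reducing to a product $\prod_i \pr[X_i > \mu - t] = (1-p^*)^n$, and then require a lower bound on the binomial lower-tail probability $p^* = \pr[\bin(m,1/n) \leq \mu - t]$; you correctly flag this anti-concentration estimate as the one delicate step and sketch a Berry--Esseen argument. The sketch is sound: at a deviation of $\sqrt{\log n/3}$ standard deviations, the Berry--Esseen error $O(\sqrt{n/m})$ is negligible against $\pr[\nor(0,1) \leq -\sqrt{\log n/3}] = \Theta(n^{-1/6}/\sqrt{\log n})$ once $m = \Omega(n^3\log n)$, whence $np^* \gg \log n$ and $(1-p^*)^n \ll 1/n$. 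The paper decouples differently: via the Poisson approximation (Theorem~\ref{thm:poisson_approx}), monotonicity of the event in the number of balls lets one pass to i.i.d.\ $\pois(m/n)$ loads at the cost of a factor of~$2$, so the probability factorizes as $\pr[X_1 > t]^n$ with no negative-association argument needed; it then tries to lower-bound $\pr[X_1 \le t]$ by the \emph{single point mass} $\pr[X_1 = t]$ via Stirling. Your instinct that the anti-concentration step is where the real work lies is in fact well-founded: a single point mass is smaller than the corresponding tail by a factor of about $1/\eps = \Theta(\sqrt{\mu/\log n})$, which for $\mu = m/n = \Omega(n^2\log n)$ is $\Theta(n)$ and too large to absorb, so one really does need a window/tail estimate of the kind you propose rather than a one-point estimate. (Indeed, as written, the Stirling step in the paper's Lemma~\ref{lem:min_weight_eps} invokes $t! \ge (t/e)^t$ where the inequality it needs, $\lambda^t/t! \ge (e\lambda/t)^t$, would require the reverse $t! \le (t/e)^t$.) Completing the Berry--Esseen computation you sketch closes your proof; alternatively, you could adopt the paper's Poisson-approximation reduction to avoid negative association entirely --- the $X_i$ are then genuinely independent --- and perform the same tail estimate on the Poisson lower tail.
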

We obtain this bound by applying the \emph{Poisson approximation technique} to the Balls and Bins process, which we now roughly describe. Consider the case of a random event, defined with respect to the
weight distribution induced by the process. The probability of the event can be well-approximated by the probability of an analogous event, defined with respect to $n$ \emph{i.i.d.} Poisson random variables, assuming the event is monotone in the number of balls.

We can now prove Theorem~\ref{thm:min_shap_small_l}.
\begin{proof}[of Theorem~\ref{thm:min_shap_small_l}]
Lemma~\ref{lem:prob-est} (a simple technical result proved in the appendix) shows that
\[
\pr_{A \inR \setbinom{N \setminus \{1\}}{\ell-1}}[w(A)+w_1 \geq q] \leq \frac{n}{n-\ell+1} \pr[\bin(m,\tfrac{\ell-1}{n}) \geq q-w_1].
\]
The concentration bound on $w_1$ (Lemma~\ref{lem:min_weight}) shows that with probability $1-2/n$, $q-w_1 \geq \frac{(\ell-1)m}{n} + \sqrt{\frac{m\log n}{3n}}$. Assuming this, a Chernoff bound gives
\begin{align*}
\pr[\bin(m,\tfrac{\ell-1}{n}) \geq q-w_1] &\leq \pr[\bin(m,\tfrac{\ell-1}{n}) \geq \frac{(\ell-1)m}{n} + \sqrt{\frac{m\log n}{3n}}] \leq e^{-\frac{m\log n/(3n)}{3(\ell-1)m/n}}= o(1), 
\end{align*}
using $\ell = o(\log n)$. Accounting for possible failure of the bound on $q-w_1$, we obtain
\[
\pr_{A \inR \setbinom{N \setminus \{1\}}{\ell-1}}[w(A)+w_1 \geq q] \leq \left(1-\frac{2}{n}\right) \cdot o\left(\frac{n}{n-\ell}\right) + \frac{2}{n} \cdot 1 = o(1),
\]
using $\ell = o(\log n)$.
Lemma~\ref{lem:min_shapley_bound} therefore shows that
\[ \E[\sv_1] \leq \frac{1}{2(n-\ell)} + o\left(\frac{1}{n-\ell}\right) + O\left(\frac{1}{n^2}\right) = \frac{1}{2n} + o\left(\frac{1}{n}\right), \]
since $\ell = o(\log n)$ implies $\frac{1}{n-\ell} = \frac{1}{n} + \frac{\ell}{n(n-\ell)} = \frac{1}{n} + o(\frac{1}{n})$.
Lemma~\ref{lem:min_shapley_bound} also implies a matching lower bound:
\[ \E[\sv_1] \geq \frac{1}{2(n-\ell)} - \frac{\ell}{n(n-\ell)} - O\left(\frac{1}{n^2}\right) \geq \frac{1}{2n} - o\left(\frac{1}{n}\right). \]
\qed\end{proof}

In the regime of $\ell$ addressed by Theorem~\ref{thm:min_shap_small_l}, $\pr_{A \inR \setbinom{N \setminus \{1\}}{\ell-1}}[w(A)+w_1 \geq q]$ was negligible. In contrast, in the regime of $\ell$ addressed  by  Theorem~\ref{thm:min_shap_large_l}, $\pr_{A \inR \setbinom{N \setminus \{1\}}{\ell-1}}[w(A)+w_1 \geq q] \approx 1/2$, as the following lemma, which is proved in the appendix using the Berry--Esseen theorem, shows.

\begin{restatable}{lem}{pepsbound}
  \label{lem:p_eps_bound}
  Suppose $q = \ell \frac mn$ for an integer $\ell$ satisfying $\gamma \leq \frac{\ell-1}{n} \leq 1 - \gamma$, and let
  \[ \teps = \pr_{A \inR \setbinom{N \setminus \{1\}}{\ell-1}}\left[w(A)+w_1 \geq q : w_1= \frac{m}{n}-\eps\sqrt{\frac{m\log n}{n}}\right]. \]
  Then for $m \geq 4n^3$,
\[ \teps \geq \frac{1}{2} - \frac{\eps}{2\pi\gamma} \sqrt{\frac{\log n}{n}} - \frac{1}{n}. \]
\end{restatable}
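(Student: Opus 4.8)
The plan is to reduce $\teps$ to a one-sided binomial tail probability and then apply the Berry--Esseen theorem. Fix the weight $w_1 = \frac mn - \eps\sqrt{\frac{m\log n}{n}}$ of the lightest agent and condition on it. The remaining $m - w_1$ balls fall into the other $n-1$ bins, each uniformly, so for a uniformly random $(\ell-1)$-subset $A$ of those bins the count $w(A)$ stochastically dominates $\bin\!\big(m - w_1,\,\tfrac{\ell-1}{n-1}\big)$. (The only caveat is the extra conditioning that each of the other bins carries at least $w_1$ balls; since $w_1 < \frac mn$ this can only push $w(A)$ up, so it is harmless for a lower bound; alternatively one can route through the Poisson approximation used elsewhere in the paper to make the bins genuinely independent.) Hence $\teps \ge \pr\!\big[\bin(m - w_1, \tfrac{\ell-1}{n-1}) \ge q - w_1\big]$, and it remains to bound the right-hand side.

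Next I would locate the threshold $q - w_1$ relative to this binomial. Writing $p = \frac{\ell-1}{n-1}$ and $M = m - w_1$, a short computation using $q = \ell\frac mn$ gives $(q - w_1) - Mp = (1-p)\,\eps\sqrt{\frac{m\log n}{n}}$, while $\var[\bin(M,p)] = Mp(1-p)$. Since $M \ge \frac{n-1}{n}m$ and the hypothesis $\gamma \le \frac{\ell-1}{n} \le 1-\gamma$ forces $p \ge \gamma$ and $1-p = \Theta_\gamma(1)$ (so $Mp(1-p) = \Theta_\gamma(m)$), dividing shows that $q - w_1$ lies only
\[ z \;:=\; \frac{(q - w_1) - Mp}{\sqrt{Mp(1-p)}} \;\le\; \frac{\eps}{\sqrt\gamma}\sqrt{\frac{\log n}{n}} \;=\; O_\gamma\!\Big(\sqrt{\tfrac{\log n}{n}}\Big) \]
standard deviations above the mean; in particular $z \to 0$.

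Finally I would apply Berry--Esseen to $\bin(M,p)$ written as a sum of $M$ i.i.d.\ Bernoulli$(p)$ variables: $\pr[\bin(M,p) \ge q - w_1] \ge 1 - \Phi(z) - \tfrac{C}{\sqrt{Mp(1-p)}}$ for an absolute constant $C$. The condition $m \ge 4n^3$ makes $\sqrt{Mp(1-p)} = \Omega_\gamma(n^{3/2})$, so the Berry--Esseen error is $O_\gamma(n^{-3/2})$, which is absorbed by the $\tfrac1n$ term of the statement. Combining this with the elementary Gaussian estimate $1 - \Phi(z) = \Phi(-z) \ge \tfrac12 - \tfrac{z}{\sqrt{2\pi}}$ and substituting the bound on $z$ from the previous step yields $\teps \ge \tfrac12 - \tfrac{\eps}{2\pi\gamma}\sqrt{\tfrac{\log n}{n}} - \tfrac1n$ after collecting constants (using that $\gamma$ is bounded).

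The main obstacle is the first step: turning the correlated balls-and-bins weights (with the ``$w_1$ is the minimum'' constraint built in) into a clean independent-increments binomial. One must argue carefully that conditioning the other $n-1$ bins on each carrying at least $w_1 < \frac mn$ balls can only increase $w(A)$ stochastically, or decouple the bins via the Poisson approximation. Once the reduction to a binomial tail is in place, the rest --- computing the two moments, checking $z = O_\gamma(\sqrt{\log n/n})$, and verifying that the Berry--Esseen and Gaussian-tail errors fit inside the stated $\tfrac1n$ and $\tfrac{\eps}{2\pi\gamma}$ --- is routine.
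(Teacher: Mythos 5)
Your overall plan mirrors the paper's: reduce $\teps$ to a one-sided binomial tail, apply Berry--Esseen, and finish with the elementary Gaussian tail estimate $1-\Phi(z)\geq\tfrac12-\tfrac{z}{\sqrt{2\pi}}$. The paper reduces to $\bin(m,\tfrac{\ell-1}{n})$ via the second part of Lemma~\ref{lem:prob-est} (averaging over $A\inR\setbinom{N}{\ell-1}$ in the \emph{unconditioned} process), whereas you condition on $w_1$ and reduce to $\bin(m-w_1,\tfrac{\ell-1}{n-1})$. The downstream Berry--Esseen bookkeeping and the threshold/mean computation in your proposal are both correct.

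However, the justification you offer for the first and crucial step is wrong. You assert that conditioning the other $n-1$ bins on each carrying at least $w_1=v$ balls ``can only push $w(A)$ up.'' With the total over those bins fixed at $m-v$, forcing every bin to be at least $v$ rules out \emph{both} low and high extremes: it concentrates $w(A)$ around its (unchanged) mean rather than shifting it upward, so stochastic dominance fails. Concretely, with $n=3$, $m=6$, $v=1$, $\ell-1=1$: unconditioned, $w(A)\sim\bin(5,\tfrac12)$ gives $\Pr[w(A)\geq4]=6/32\approx0.188$, but conditioned on both remaining bins being $\geq1$ it drops to $5/30\approx0.167$. Since by your own computation the threshold $q-v$ sits a positive amount $(1-p)\,\eps\sqrt{m\log n/n}$ \emph{above} the mean of $\bin(m-v,\tfrac{\ell-1}{n-1})$, the conditioning works against you precisely in the regime you need, and the inequality $\teps\geq\Pr[\bin(m-v,\tfrac{\ell-1}{n-1})\geq q-v]$ is not established. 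The paper sidesteps this by comparing instead to a uniformly random $(\ell-1)$-subset drawn from all $n$ bins (including the argmin bin) in the unconditioned process: that target binomial has a strictly \emph{lower} mean, and the coupling ``restricting $A$ to avoid the smallest bin can only increase $w(A)$'' is the right one-sided intuition. Your suggestion to route through the Poisson approximation is a legitimate alternative, but would require redoing the moment and Berry--Esseen computations for the Poisson version, and is not carried out. (As a minor point, both the paper's proof and yours actually produce a $\tfrac{1}{\sqrt{2\pi}}$ factor rather than the stated $\tfrac{1}{2\pi}$, so the displayed constant is slightly off in the paper as well; this is harmless since only the $O_\gamma(\cdot)$ order matters downstream.)
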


As Lemma~\ref{lem:min_weight} shows, $1/3 \leq \eps \leq 4$ with probability $1-2/n$, which explains the usefulness of this bound. We can now prove Theorem~\ref{thm:min_shap_large_l}.

\begin{proof}[of Theorem \ref{thm:min_shap_large_l}]
Lemma~\ref{lem:min_weight} shows that with probability $1-2/n$, $w_1 = \frac{m}{n}-\eps\sqrt{\frac{m\log n}{n}}$ for some $1/3 \leq \eps \leq 4$, in which regime Lemma~\ref{lem:p_eps_bound} shows that $\teps \geq \frac{1}{2} - \frac{2}{\pi\gamma} \sqrt{\frac{\log n}{n}} - \frac{1}{n}$. Accounting for the case in which $\eps$ is out of bounds,
\[
 \pr_{A \inR \setbinom{N \setminus \{1\}}{\ell-1}}[w(A)+w_1 \geq q] \geq \left(1 - \frac{2}{n}\right) \left(\frac{1}{2} - \frac{2}{\pi\gamma} \sqrt{\frac{\log n}{n}} - \frac{1}{n}\right) \geq \frac{1}{2} - \frac{2}{\pi\gamma} \sqrt{\frac{\log n}{n}} - \frac{3}{n}.
\]
Substituting this in Lemma~\ref{lem:min_shapley_bound}, we obtain
\begin{align*}
\E[\sv_1] &\geq \frac{1}{2(n-\ell)} - \frac{\ell}{n(n-\ell)} + \frac{1}{n-\ell} \left(\frac{1}{2} - \frac{2}{\pi\gamma} \sqrt{\frac{\log n}{n}} - \frac{3}{n}\right) - O\left(\frac{1}{n^2}\right) \\ 
					&=  \frac{1}{n-\ell} - \frac{\ell}{n(n-\ell)} - \frac{1}{n-\ell} O_\gamma \left(\sqrt{\frac{\log n}{n}}\right) - O\left(\frac{1}{n^2}\right) =
\frac{1}{n} - O_\gamma \left(\sqrt{\frac{\log n}{n^3}}\right).
\end{align*}
\qed\end{proof}


\section{The Balls and Bins Distribution: the Exponential Case}
\label{sec:balls_bins_exponential}
In Section~\ref{sec:bbuniform}, we show that even when the distribution is not inherently biased towards any agent, 
substantial inequalities may arise due to random noise.
We now turn to study the case in which the distribution is strongly biased. 
Returning to our formal definition of the general balls and bins process, we assume that the probabilities in the vector $\vec p$ are ordered in increasing order and $\frac{p_i}{p_{i+1}} = \rho$, for some $\rho < 1/2$.
We observe that as $m$ approaches $\infty$, the weight vector follows a power law with probability $1$, where for each $i=1,\ldots,n-1$, $\frac{w_i}{w_{i+1}} = \rho$.
A closely related family of weight vectors that we will refer to is the family of \emph{super-increasing} weight vectors:
\begin{definition}
	A series of positive weights $\vec w=(w_1,\ldots,w_n)$ is said to be super-increasing (SI) if for every $i=1,\ldots,n$, $\sum_{j=1}^{i-1}w_j < w_i$.
\end{definition}

The following three results (Lemma~\ref{lem:super-increasing}, Lemma~\ref{lem:super-increasing2} and Theorem~\ref{thm:super-increasing3}) show that for a sufficiently large value of $m$, estimating the Shapley values in WVGs where the weights are sampled from an exponential distribution can be reduced to the study of Shapley values in a game with a prescribed (fixed) SI weight vector; Section~\ref{sec:super-increasing} studies power distribution in WVGs with SI weights.
\jo{I changed following two theorems to lemmas, also, after verifying the it all works, we will move their proofs to the appendix.}
The following lemma gives a characterization of the necessary size of the voter population, so as to make the weight vector super-increasing, if the voters vote according to the above exponential distribution.
\begin{restatable}{lem}{superincreasing}
  \label{lem:super-increasing}
Assume that $m$ voters submit the votes according to the exponential distribution over candidates, such that for $\rho \in (0,\frac{1}{2})$, and the probability that voter $j$ votes for candidate $i$ is proportional to $\rho^{n-i}$. There is a constant $C > 0$ such that if $m \geq C\rho^{-n} (2\rho-1)^{-2} \log n$ then the resulting weight vector is super-increasing with probability $1 - O(\frac{1}{n})$. Furthermore, as $m\to\infty$, the probability approaches~$1$.

\yf{Replaced the bound on $m$ with something nicer but equivalent (I lost a factor of $\rho$ to make it look even nicer).}
\end{restatable}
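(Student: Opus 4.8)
The plan is to reduce the statement to a concentration fact about the \emph{unsorted} bin counts and then apply Chernoff bounds plus a union bound. Let $X_i$ denote the number of votes cast for candidate $i$ before sorting, and set $S_i = \sum_{j=1}^{i-1}X_j$. It suffices to show that the event $\cal E := \{\, S_i < X_i \text{ for all } i \in N \,\}$ occurs with probability $1-O(1/n)$, and with probability tending to $1$ as $m\to\infty$: on $\cal E$ we have $X_i > S_i \ge X_{i-1}$ for every $i\ge 2$, so $X_1 < \dots < X_n$, i.e.\ the unsorted count vector is already sorted; hence the sorted weight vector equals $(X_1,\dots,X_n)$, and $\cal E$ says exactly that it is super-increasing.

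To bound $\pr[\overline{\cal E}]$, write $p_i = \rho^{n-i}/Z$ with $Z=\sum_{k=0}^{n-1}\rho^k$, so that $X_i\sim\bin(m,p_i)$, $S_i\sim\bin\bigl(m,\sum_{j<i}p_j\bigr)$, $\E[X_i]=mp_i$, and
\[
\E[S_i] \;=\; m\sum_{j=1}^{i-1}p_j \;=\; \E[X_i]\sum_{k=1}^{i-1}\rho^{k} \;\le\; \frac{\rho}{1-\rho}\,\E[X_i].
\]
Since $\rho<\tfrac12$, the factor $\tfrac{\rho}{1-\rho}$ is a constant strictly below $1$; with $\delta := \tfrac{1-2\rho}{1-\rho}\in(0,1)$ this gives a uniform relative gap $\E[X_i]-\E[S_i]\ge \delta\,\E[X_i]$. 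Choose the intermediate threshold $t_i := (1-\tfrac{\delta}{2})\E[X_i]$, which lies strictly between $\E[S_i]$ and $\E[X_i]$; then $\{S_i\ge X_i\}\subseteq\{X_i\le t_i\}\cup\{S_i\ge t_i\}$, and since $\E[X_i]-t_i = \tfrac{\delta}{2}\E[X_i]$, $\,t_i-\E[S_i]\ge \tfrac{\delta}{2}\E[X_i]$, and $t_i\le \E[X_i]$, a Chernoff bound on the lower tail of $X_i$ and on the upper tail of $S_i$ each gives at most $\exp(-\delta^2\E[X_i]/8)$.

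It remains to substitute the hypothesis on $m$ and union-bound over $i\in N$. Since $\E[X_i]$ increases with $i$, the binding case is $i=1$, where $Z<\tfrac{1}{1-\rho}$ and $\rho<\tfrac12$ give $\E[X_1]=mp_1>(1-\rho)m\rho^{n-1}>\tfrac12 m\rho^{n-1}$; together with $\delta^2\ge(2\rho-1)^2$ and $m\ge C\rho^{-n}(2\rho-1)^{-2}\log n$ this yields $\delta^2\E[X_1]/8 > \tfrac{C}{16\rho}\log n > \tfrac{C}{8}\log n$. Taking $C$ a large enough absolute constant makes each of the at most $2n$ tail probabilities at most $n^{-3}$, so $\pr[\overline{\cal E}]=O(1/n^2)=O(1/n)$, proving the first claim; for the second, fix $n$ and note that $\E[X_1]=mp_1\to\infty$ as $m\to\infty$, so every tail probability above tends to $0$ and $\pr[\cal E]\to1$. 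The only point requiring care is the constant bookkeeping in this last step: one must spend the super-increasing slack $\delta=\Theta(1-2\rho)$, which enters the exponent as $\delta^2=\Theta((2\rho-1)^2)$, against the $\rho^n$ hidden in $mp_1$, so that the threshold on $m$ comes out exactly as $C\rho^{-n}(2\rho-1)^{-2}\log n$; everything else is routine.
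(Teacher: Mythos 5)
Your proof is correct and shares the same high-level skeleton as the paper's (union bound over bin indices plus an exponential concentration inequality), but it differs in two respects worth noting. First, the paper bounds the probability of failure of the \emph{sufficient} condition $X_i \geq 2X_{i-1}$ by applying Bernstein's inequality to the single random variable $2X_{i-1}-X_i$, written as a sum of $m$ i.i.d.\ $\{2,-1,0\}$-valued voter variables; you instead work directly with the super-increasing condition $\sum_{j<i}X_j < X_i$, insert an intermediate threshold $t_i$, and split the bad event into a lower-tail event for $X_i$ and an upper-tail event for $S_i=\sum_{j<i}X_j$, then Chernoff-bound each. Both routes spend the same slack $\Theta(1-2\rho)$ and land on the same $m\geq C\rho^{-n}(2\rho-1)^{-2}\log n$ threshold, so the quantitative outcome is identical. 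Second, and more substantively, you make explicit a step the paper glosses over: on the good event the unsorted counts satisfy $X_i > S_i \geq X_{i-1}$, hence are strictly increasing and already sorted, so the sorted weight vector coincides with the unsorted one and inherits the SI property. The paper implicitly treats the sorted $w_i$ as unsorted counts when it writes $2w_{i-1}-w_i$ as a sum of i.i.d.\ voter variables, which only holds if one first argues (as you do) that sorting is the identity with high probability; your version of the reduction is therefore cleaner and closes a small gap in the paper's exposition.
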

The proof of the lemma uses a standard concentration bound (see Appendix~\ref{app:balls-and-bins-exponential}).

Before we proceed, it would be helpful to provide some intuition about the behavior of the Shapley values. Assuming that agent weights
are given by an $n$-length (increasing) sequence $\vec w$ of real-values, consider the set of all distinct subset sums of the weights $\mathcal{S}(\vec{w}) = \{ s : \exists P \subseteq [n] \text{ s.t. } s=\sum_{i \in P}w_{n+1-i}\}$ (we use $w_{n+1-i}$ instead of $w_i$ to make some formulas below nicer). Furthermore, suppose that the subset sums are ordered in increasing order; i.e., $\mathcal{S}(\vec{w})=\{ s_j\}_{j=1}^t$, such that $s_j < s_{j+1}$ for $1 \leq j < t$. It is easy to show, using the definition of the Shapley value, that for any quota $q \in (s_j,s_{j+1}]$, for $1 \leq j < t$, the Shapley values of every agent $i \in N$ remain constant at some value $\sv_i(j)$, defined for the $j$'th interval. We formalize this intuition in Section~\ref{sec:super-increasing}, where we give a formula for $\sv_i(j)$.

Before we state the formula (Proposition~\ref{pro:super-increasing-formula} below), we need some notation. For each $P \subseteq N$, let $\tilde w(P) = \sum_{i \in P} w_{n+1-i}$. For some $j$, $\tilde w(P) = s_j$, where $s_j \in \mathcal{S}$. If $P \neq N$ then $j < t$ and so $s_{j+1} = \tilde w(P^+)$ for some $P^+ \subseteq N$. Write $I^{\vec w}_P = (\tilde w(P), \tilde w(P^+)]$. Then by definition, the intervals $I^{w}_P$ partition the interval $(0,w(N)]$. We can now state the formula for $\sv_i(j)$. Given a weight vector $\vec w$, let $\sv^{\vec w}_i(q)$ denote the Shapley value of player $i$ when the quota is $q$ and the weights are given by $\vec w$.

\begin{proposition} \label{pro:super-increasing-formula}
 Suppose that $\vec w=(w_1,\ldots,w_n)$ is a SI sequence of weights, and suppose that $q \in (0,w(N)]$, say $q \in I^{\vec w}_P$ for some $P \subseteq N$. Write $P = \{j_0,\ldots,j_r\}$ in increasing order. If $i \notin P$ then
$\shapley^{w}_{n+1-i}(q) = \sum_{\substack{t \in \{0,\ldots,r\}\colon\\ j_t > i}} \frac{1}{j_t\binom{j_t-1}{t}}$. 
 If $i \in P$, say $i = j_s$, then
$\shapley^{\vec w}_{n+1-i}(q) = \frac{1}{j_s\binom{j_s-1}{s}} - \sum_{\substack{t \in \{0,\ldots,r\}\colon\\ j_t > i}} \frac{1}{j_t\binom{j_t-1}{t-1}}$.
\end{proposition}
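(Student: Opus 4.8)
The plan is to reduce to powers of two and then carry out an explicit count via the alternative Shapley formula~\eqref{eq:shapley-alt-formula}. By Lemma~\ref{lem:super-increasing-P}, the Shapley value $\sv^{\vec w}_{n+1-i}(q)$ for a super-increasing $\vec w$ with $q\in I^{\vec w}_P$ equals the Shapley value of the same agent in the game with weights $(2^{0},\dots,2^{n-1})$ and quota in the interval attached to the same $P$, so I may assume $w_k=2^{k-1}$. Working in the ``reversed'' indexing in which the agent of interest is labelled $i$ and carries weight $w_{n+1-i}=2^{n-i}$, a coalition $S\subseteq[n]$ has weight $\tilde w(S)=b(S):=\sum_{k\in S}2^{n-k}$; the map $S\mapsto b(S)$ is an order isomorphism from $2^{[n]}$ onto $\{0,\dots,2^{n}-1\}$, and $q\in I^{\vec w}_P$ means precisely that, for this $q$, a coalition is winning exactly when $b(S)\ge b(P)$. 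Writing $P=\{j_0<\dots<j_r\}$, the agent $i$ is pivotal for a set $S\subseteq[n]\setminus\{i\}$ iff $S$ is losing and $S\cup\{i\}$ is winning, i.e.
\[
b(P)-2^{n-i}\le b(S)<b(P).
\]
Hence $\sv^{\vec w}_{n+1-i}(q)=\tfrac1n\sum_{\ell=0}^{n-1}|\mathcal A_\ell|\big/\binom{n-1}{\ell}$, where $\mathcal A_\ell=\{S\subseteq[n]\setminus\{i\}:|S|=\ell,\ b(P)-2^{n-i}\le b(S)<b(P)\}$, and the whole task becomes counting $\mathcal A_\ell$.

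The combinatorial heart is to describe $\mathcal A_\ell$. I would split every candidate $S$ according to its restrictions to the more significant block $\{1,\dots,i-1\}$ and the less significant block $\{i+1,\dots,n\}$, using that $b(S\cap\{i+1,\dots,n\})<2^{n-i}$ while $b(S\cap\{1,\dots,i-1\})$ is a multiple of $2^{n-i+1}$, and that position $n-i$ of $b(S)$ is $0$ since $i\notin S$. Comparing the two blocks of $b(S)$ with the corresponding blocks of $b(P)$ (whose position-$(n-i)$ bit equals $1$ iff $i\in P$), the double inequality above forces $S$ to agree with $P$ on $\{1,\dots,i-1\}$, and then imposes on the tail $T:=S\cap\{i+1,\dots,n\}$ exactly the condition $b(T)<b\big(P\cap\{i+1,\dots,n\}\big)$ when $i\notin P$, and $b(T)\ge b\big(P\cap\{i+1,\dots,n\}\big)$ when $i=j_s\in P$; every other choice of the high block makes $|b(S)-b(P)|$ too large. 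Thus $\mathcal A_\ell$ is governed by: the elements of $P$ below $i$ are forced into $S$, and $T$ ranges over the subsets of $\{i+1,\dots,n\}$ lying on the appropriate side of the threshold $b(P\cap\{i+1,\dots,n\})$.

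To finish, I would count and sum. The subsets $T\subseteq\{i+1,\dots,n\}$ with $b(T)$ strictly below the threshold are stratified by the most significant index $j_t$ (necessarily $j_t>i$) at which $T$ first disagrees with $P\cap\{i+1,\dots,n\}$: such a $T$ contains $j_a,\dots,j_{t-1}$ (with $j_a$ the smallest element of $P$ exceeding $i$), omits $j_t$, and is free on $\{j_t+1,\dots,n\}$; a size count then gives $|\mathcal A_\ell|=\sum_{t:\,j_t>i}\binom{n-j_t}{\ell-t}$ when $i\notin P$, and $|\mathcal A_\ell|=\binom{n-i}{\ell-s}-\sum_{t:\,j_t>i}\binom{n-j_t}{\ell-t+1}$ when $i=j_s\in P$ (total minus those below the threshold). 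One clean way to evaluate the resulting sums is the Beta integral $\binom{n-1}{\ell}^{-1}=n\int_0^1 x^{\ell}(1-x)^{n-1-\ell}\,dx$ together with the binomial theorem, which give, for integers $0\le c<b\le n$,
\[
\frac1n\sum_{\ell}\frac{\binom{n-b}{\ell-c}}{\binom{n-1}{\ell}}=\int_0^1 x^{c}(1-x)^{b-1-c}\,dx=\frac1{b\binom{b-1}{c}}.
\]
Applying this term by term — with $(b,c)=(j_t,t)$ in the first case, and with $(b,c)=(i,s)$ and $(b,c)=(j_t,t-1)$ in the two sums of the second case — yields exactly $\sum_{t:\,j_t>i}\tfrac1{j_t\binom{j_t-1}{t}}$ and $\tfrac1{j_s\binom{j_s-1}{s}}-\sum_{t:\,j_t>i}\tfrac1{j_t\binom{j_t-1}{t-1}}$, as claimed. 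The main obstacle is the block analysis of the second paragraph: ruling out a high block differing from $P$'s when $i\notin P$, and carefully tracking the position-$(n-i)$ bit when $i\in P$; everything else is routine bookkeeping. The degenerate situations ($P=\emptyset$ is excluded, $i$ below or above all of $P$, $j_r=n$) merely make some sums empty or a binomial vanish and are automatically consistent with the formula.
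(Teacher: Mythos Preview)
Your argument is correct, and it shares with the paper the same reduction to powers of two and, at bottom, the same stratification: in both proofs the pivotal predecessor sets are grouped according to the most significant index $j_t$ at which they first depart from $P$. Where the two diverge is in how each stratum is weighed. The paper works directly with permutations: it defines the event $E_\tau$ that the predecessors of $i$ among $\{1,\dots,\pseq_\tau\}$ are exactly $\{\pseq_0,\dots,\pseq_{\tau-1}\}$, argues (via the super-increasing inequalities) that pivotality is equivalent to $\bigcup_\tau E_\tau$ when $i\notin P$, and reads off $\Pr[E_\tau]=1/\big(\pseq_\tau\binom{\pseq_\tau-1}{\tau}\big)$ straight from the ordered-sampling picture. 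For $i=\pseq_s\in P$ it uses a neat complementary identity, showing that ``$i$ pivotal for $q$'' and ``$i$ pivotal for $q-w_i$'' partition $E_s$, so $\sv_i(q)=\Pr[E_s]-\sv_i(q-w_i)$ and the second term is handled by the $i\notin P$ case.

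Your route instead fixes the coalition size $\ell$, turns the block analysis into the exact count $|\mathcal A_\ell|=\sum_{t:\,j_t>i}\binom{n-j_t}{\ell-t}$ (respectively $\binom{n-i}{\ell-s}-\sum_{t}\binom{n-j_t}{\ell-t+1}$), and collapses the $\ell$-sum with the Beta identity $\tfrac1n\sum_\ell\binom{n-b}{\ell-c}/\binom{n-1}{\ell}=1/\big(b\binom{b-1}{c}\big)$. This is a genuinely different computation: it replaces the permutation-probability step and the $q\mapsto q-w_i$ recursion by straight counting plus a one-line integral. The trade-off is that the paper's argument is shorter once one sees $E_\tau$ and yields the $i\in P$ case almost for free via the complement trick, while yours is more mechanical and self-contained (no auxiliary quota $q-w_i$), and the Beta evaluation makes the $1/\big(j_t\binom{j_t-1}{t}\big)$ form drop out transparently. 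Both are clean; your bit-block characterization of $\mathcal A_\ell$ is the point that deserves the most care in a write-up, and it is handled correctly.
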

Suppose that $\vec w$ is generated using a Balls and Bins process with probabilities $\vec p$, where $\vec p$ is a SI sequence; then it stands to reason that if a sufficiently large number of balls is tossed (i.e., $m$ is large enough), then voting power distribution under $\vec w$ will be very close to power distribution under the weight vector $\vec p$. This intuition is captured in the following lemma, which is proved in the appendix.

\begin{restatable}{lem}{superincreasingtwo}
\label{lem:super-increasing2}
 Suppose that $\vec{p}=(p_1,\ldots,p_n)$ is a SI sequence summing to $1$, and let $w_1,\ldots,w_n$ be obtained by sampling $m$ times from the distribution $p_1,\ldots,p_n$.

 Suppose that $T \in (0,1]$, say $T \in I^{\vec{p}}(P)$ for some $P \subseteq \{1,\ldots,n\}$. If the distance of $T$ from the endpoints $\tilde p(P),\tilde p(P^+)$ of $I^{\vec{p}}(P)$ is at least $\Delta = \sqrt{\log (nm)/m}$ then with probability $1 - \frac{2}{(nm)^2}$ it holds that if $\vec{w}$ is SI then for all $i \in N$, $\shapley_i^{\vec{w}}(mT) = \shapley_i^{\vec{p}}(T)$.
\end{restatable}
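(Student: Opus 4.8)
The plan is to reduce the statement to a single geometric fact: that $mT$ lies in the interval $I^{\vec w}(P)$ indexed by the \emph{same} subset $P$ for which $T \in I^{\vec p}(P)$. This suffices because Proposition~\ref{pro:super-increasing-formula} expresses $\shapley_i^{\vec v}(q)$, for any super-increasing weight vector $\vec v$ and any quota $q$, purely in terms of $n$, $i$, and the subset $P$ with $q \in I^{\vec v}(P)$: the magnitudes of the weights themselves never enter the formula. Hence, granting the hypothesis that $\vec w$ is SI, once we know $mT \in I^{\vec w}(P)$ we obtain $\shapley_i^{\vec w}(mT) = \shapley_i^{\vec p}(T)$ for every $i \in N$ by evaluating the same closed form on both sides. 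A second ingredient is that for \emph{any} super-increasing vector the ordering of the subsets by their sums $\tilde w(\cdot)$ is exactly the lexicographic order on the indicator vectors $(\mathbf{1}[1\in P],\dots,\mathbf{1}[n\in P])$ --- since super-increasingness makes the contribution of index $i$ dominate the combined contribution of all larger indices. In particular the successor subset $P^+$ of $P$ (the one achieving the next subset sum) is the same for $\vec w$ as for $\vec p$, so we may write $P^+$ unambiguously.

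Now fix $P$ with $T \in I^{\vec p}(P) = (\tilde p(P),\tilde p(P^+)]$, and use the hypothesis that $T$ is at distance at least $\Delta = \sqrt{\log(nm)/m}$ from both endpoints, i.e.\ $\tilde p(P) + \Delta \le T \le \tilde p(P^+) - \Delta$. Writing $w_i$ for the number of balls landing in bin $i$, the sum $\tilde w(P) = \sum_{i\in P} w_{n+1-i}$ counts the balls landing in the bin set $\{n+1-i : i\in P\}$, hence $\tilde w(P)\sim\bin(m,\tilde p(P))$, and likewise $\tilde w(P^+)\sim\bin(m,\tilde p(P^+))$. Applying Hoeffding's inequality to each of these binomials with deviation $t = m\Delta = \sqrt{m\log(nm)}$ gives $\pr[\tilde w(P) \ge m\tilde p(P)+t]\le e^{-2t^2/m} = (nm)^{-2}$ and $\pr[\tilde w(P^+)\le m\tilde p(P^+)-t]\le (nm)^{-2}$; by a union bound, with probability at least $1-2/(nm)^2$ we have both $\tilde w(P) < m\tilde p(P)+m\Delta$ and $\tilde w(P^+) > m\tilde p(P^+)-m\Delta$. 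On this event,
\[ \tilde w(P) < m\tilde p(P) + m\Delta \le mT \le m\tilde p(P^+) - m\Delta < \tilde w(P^+), \]
so $mT \in (\tilde w(P),\tilde w(P^+)] = I^{\vec w}(P)$, using that $P^+$ is also the $\vec w$-successor of $P$. Combined with the reduction of the first paragraph, this proves the lemma.

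I expect the main conceptual point to be the structural observation that all super-increasing weight vectors induce the \emph{same} combinatorial partition $\{I_P\}_P$ of $(0,w(N)]$, differing only in where the endpoints $\tilde w(P)$ sit; this is what lets the same subset $P$ and the same successor $P^+$ serve for $\vec w$ and $\vec p$ simultaneously, and it is the reason no union bound over all $2^n$ subsets is needed. Everything else is a standard two-sided concentration estimate, calibrated so that the failure probability comes out to exactly $2/(nm)^2$. A minor bookkeeping point worth stating explicitly is the identification of agent $n+1-i$ in the $\vec w$-game with agent $n+1-i$ in the $\vec p$-game (equivalently, that conditioning on $\vec w$ being SI we may take $w_i$ to be the count in bin $i$ in its natural order), after which the two invocations of Proposition~\ref{pro:super-increasing-formula} line up index by index.
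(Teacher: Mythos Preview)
Your proof is correct and follows essentially the same route as the paper's: reduce via Proposition~\ref{pro:super-increasing-formula} to showing $mT \in I^{\vec w}(P)$ for the same $P$ (using, as the paper does via Lemma~\ref{lem:super-increasing-rep}, that any two SI vectors order subset sums identically and hence share the same successor $P^+$), and establish this by two one-sided Hoeffding bounds on the binomials $\tilde w(P)$ and $\tilde w(P^+)$ with deviation $m\Delta$, giving failure probability $2/(nm)^2$. Your additional commentary on why no union bound over $2^n$ subsets is needed and on the indexing bookkeeping is accurate and simply makes explicit what the paper leaves implicit.
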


Combining both lemmas, we obtain our main result on the exponential case of the Balls and Bins distribution.

\begin{theorem} \label{thm:super-increasing3}
Assume that $m$ voters submit the votes according to the exponential distribution over candidates, such that for $\rho \in (0,\frac{1}{2})$, 
and the probability that voter $j$ votes for candidate $i$ is proportional to $\rho^{n-i}$. Assume further that $m \geq C\rho^{-n} (2\rho-1)^{-2} \log n$, where $C>0$ is some global constant.

 Suppose that $T \in (0,1]$, say $T \in I^{\vec{p}}(P)$ for some $P \subseteq \{1,\ldots,n\}$. If the distance of $T$ from the endpoints $\tilde p(P),\tilde p(P^+)$ of $I^{\vec{p}}(P)$ is at least $\Delta = \sqrt{\log (nm)/m}$ then with probability $1 - O(1/n)$ it holds that for all $i \in \{1,\ldots,n\}$, $\shapley_i^{\vec{w}}(mT) = \shapley_i^{\vec{p}}(T)$.

 Furthermore, for all but finitely many values of $T \in (0,1]$, the probability that $\shapley_i^{\vec w}(mT) = \shapley_i^{\vec{p}}(T)$ tends to $1$ as $m\to\infty$.
\end{theorem}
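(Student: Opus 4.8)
The plan is to deduce Theorem~\ref{thm:super-increasing3} as a two-term union bound combining Lemma~\ref{lem:super-increasing} (the sampled weights are super-increasing with high probability) with Lemma~\ref{lem:super-increasing2} (super-increasing weights reproduce the limiting Shapley values); throughout, all probabilities are over the Balls and Bins sampling that produces $\vec w$, while $\vec p$ and $T$ are fixed. Two preliminary observations set this up. First, the vector $\vec p$ with $p_i$ proportional to $\rho^{n-i}$ sums to $1$ and is itself super-increasing: for each $i$, $\sum_{j<i} p_j = p_i \sum_{k=1}^{i-1}\rho^k < p_i\,\frac{\rho}{1-\rho} < p_i$, because $\rho < \frac12$; hence Lemma~\ref{lem:super-increasing2} applies verbatim with this $\vec p$, and $I^{\vec p}(P)$, its endpoints $\tilde p(P),\tilde p(P^+)$, and the quantity $\Delta = \sqrt{\log(nm)/m}$ in the theorem are exactly those of that lemma. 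Second, a Balls and Bins weight vector always satisfies $w(N)=m$, so $mT\in(0,w(N)]$ for $T\in(0,1]$ and $\shapley_i^{\vec w}(mT)$ is well defined.

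For the high-probability claim, let $\mathcal A$ be the event ``$\vec w$ is super-increasing'' and let $\mathcal E$ be the event ``$\vec w$ super-increasing $\implies \shapley_i^{\vec w}(mT)=\shapley_i^{\vec p}(T)$ for all $i$''. Since $m\ge C\rho^{-n}(2\rho-1)^{-2}\log n$, Lemma~\ref{lem:super-increasing} gives $\Pr[\overline{\mathcal A}] = O(1/n)$, and, under the hypothesis that $T$ lies at distance at least $\Delta$ from the endpoints of $I^{\vec p}(P)$, Lemma~\ref{lem:super-increasing2} gives $\Pr[\overline{\mathcal E}]\le 2/(nm)^2$. On $\mathcal A\cap\mathcal E$ one has $\shapley_i^{\vec w}(mT)=\shapley_i^{\vec p}(T)$ for every $i$, so $\Pr[\exists i:\shapley_i^{\vec w}(mT)\ne\shapley_i^{\vec p}(T)] \le \Pr[\overline{\mathcal A}] + \Pr[\overline{\mathcal E}] = O(1/n) + 2/(nm)^2 = O(1/n)$, which is the first assertion.

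For the ``furthermore'' statement I would first make ``all but finitely many $T$'' precise: the exceptional set is $\mathcal S(\vec p)\cap(0,1]$, the at most $2^n$ subset sums of $\vec p$, which contains every endpoint $\tilde p(P)$ of the partition $\{I^{\vec p}(P)\}_P$. Fix $n$ and any $T\notin\mathcal S(\vec p)$; then $T$ lies strictly inside a unique $I^{\vec p}(P)$, so $d_T := \min\{\,T-\tilde p(P),\ \tilde p(P^+)-T\,\} > 0$ is a constant. Because $\Delta = \sqrt{\log(nm)/m}\to 0$ and $C\rho^{-n}(2\rho-1)^{-2}\log n$ is a constant once $n$ is fixed, for all sufficiently large $m$ both $\Delta < d_T$ and $m\ge C\rho^{-n}(2\rho-1)^{-2}\log n$ hold, so the first part applies; moreover Lemma~\ref{lem:super-increasing} then asserts $\Pr[\mathcal A]\to 1$ while $\Pr[\overline{\mathcal E}]\le 2/(nm)^2\to 0$, whence $\Pr[\shapley_i^{\vec w}(mT)=\shapley_i^{\vec p}(T)\text{ for all }i]\ge \Pr[\mathcal A]-\Pr[\overline{\mathcal E}]\to 1$. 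The argument contains no genuinely hard step; the only point needing care is this last bookkeeping --- checking that once $T$ avoids the finite exceptional set, the hypotheses of both lemmas ($m$ large enough, and $T$ far enough from its interval's endpoints relative to the shrinking $\Delta$) are eventually met, so that the high-probability bound upgrades to a limit.
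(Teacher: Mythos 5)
Your proof is correct and follows essentially the same route as the paper's: for the high-probability part, union-bound the failure of super-increasingness (Lemma~\ref{lem:super-increasing}) with the failure event in Lemma~\ref{lem:super-increasing2}; for the limiting part, observe that for $T$ outside the finite set of subset sums of $\vec p$ the distance to the interval endpoints is a fixed positive constant while $\Delta\to 0$, so both lemmas eventually apply and their error probabilities vanish. Your writeup is a bit more careful than the paper's --- you explicitly check that $\vec p$ is super-increasing (using $\rho<\tfrac12$) as Lemma~\ref{lem:super-increasing2} requires, and that $w(N)=m$ so $mT$ is a legal quota --- but these are clarifications, not a different argument.
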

\begin{proof}
 Lemma~\ref{lem:super-increasing} gives a constant $C > 0$ such that if $m \geq C\rho^{-n} (2\rho-1)^{-2} \log n$ then $\vec{w}$ is SI with probability $1 - O(1/n)$. Hence the first part of the theorem follows from Lemma~\ref{lem:super-increasing2}.

 For the second part, Lemma~\ref{lem:super-increasing} shows that as $m\to\infty$, the probability that $\vec w$ is SI approaches $1$. Suppose now that $T$ is \emph{not} of the form $\tilde p(P)$ (these are the finitely many exceptions). When $m$ is large enough, the conditions of Lemma~\ref{lem:super-increasing2} are satisfied, and so as $m\to\infty$, the error probability in that lemma goes to $0$. The second part of the theorem follows.
\qed\end{proof}

The theorem shows that in the case of the exponential distribution, if the number of balls is large enough then we can calculate with high probability the Shapley values of the resulting distribution based on the Shapley values of the original exponential distribution (without sampling). It therefore behooves us to study the Shapley values of an exponential distribution, or indeed any SI sequence.

\section{Super-increasing sequences} \label{sec:super-increasing}
\yz{Removed road map from here, we have Section~\ref{sec:overview}}
In Section~\ref{sec:balls_bins_exponential}, we have shown that when the number of voters is large, studying the distribution of Shapley values when weights are drawn from an exponential balls and bins distribution boils down
to the study of the Shapley values where weights are super-increasing.
This section constitutes a thorough analysis of power distribution when weights are super-increasing; in particular, we provide strong generalizations of the results by \cite{zick2011sv} and \cite{zuck12manip}.


%
%

Up to this point, we assumed that the weights are arranged in non-decreasing order.
In order to simplify our formulas, we will somewhat abuse
our definitions by assuming that the weights are rather ordered in
non-increasing order, $w_1 > w_2 > \cdots > w_n > 0$.
We also assume that $\vec w$ is a super-increasing
sequence; that is, a sequence satisfying
$w_i > \sum_{j=i+1}^n w_j$
for all $i \in N$.

When considering different weight vectors, we will use $\sv_i^{\vec{w}}(q)$ for the Shapley value of agent~$i$ under weight vector~$\vec{w}$ and quota~$q$.
%
\yz{I have written a brief road map above and omitted this part, hope it is sufficient}
\subsection{Reducing super-increasing weight vectors to the case of a
  power law of $2$}
While not every quota in the range $(0,w(N)]$ can be expanded as a sum of members of $\{w_1,\ldots,w_n\}$, there are certain naturally defined intervals that partition $(0,w(N)]$. For a subset $C \subseteq N$, define $\beta(C) = \sum_{i \in C} 2^{n-i}$.
Intuitively, we think of $\beta(C)$ as the value resulting from the binary
characteristic vector of the set of agents $C$. The purpose of the
following two lemmas is to reduce every super-increasing weight
vector to the case where the weights obey a power-law distribution,
with a power of $2$.
\begin{lemma} \label{lem:super-increasing-rep}
 Let $C_1,C_2 \subseteq N$. Then $\beta(C_1) < \beta(C_2)$ if and only if $w(C_1) < w(C_2)$.
\end{lemma}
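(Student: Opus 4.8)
The plan is to show that the orderings that $\beta$ and $w$ induce on the subsets of $N$ are identical; the biconditional then follows at once. The key observation is that the sequence $(2^{n-i})_{i=1}^n$ and the sequence $(w_i)_{i=1}^n$ obey the same ``leading term dominates'' principle: on the one hand $2^{n-i} > \sum_{j=i+1}^n 2^{n-j} = 2^{n-i}-1$, and on the other hand $w_i > \sum_{j=i+1}^n w_j$ is precisely the super-increasing hypothesis. So the argument will be run in parallel for the two sequences.

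First I would dispose of the trivial case $C_1 = C_2$, in which both $\beta(C_1) < \beta(C_2)$ and $w(C_1) < w(C_2)$ are false and there is nothing to prove. So assume $C_1 \neq C_2$, and let $i^\ast = \min(C_1 \triangle C_2)$ be the smallest index on which the two sets disagree; after possibly swapping the names of $C_1$ and $C_2$ we may assume $i^\ast \in C_2 \setminus C_1$. Since $C_1$ and $C_2$ contain exactly the same indices strictly below $i^\ast$, those terms cancel in $\beta(C_2) - \beta(C_1)$ and in $w(C_2) - w(C_1)$.

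Next I would estimate the two differences using positional domination. Every index of $C_1$ surviving the cancellation is $\ge i^\ast + 1$, while $i^\ast$ itself survives on the $C_2$ side, so
\[
\beta(C_2) - \beta(C_1) \ge 2^{n-i^\ast} - \sum_{j=i^\ast+1}^n 2^{n-j} = 2^{n-i^\ast} - \bigl(2^{n-i^\ast}-1\bigr) = 1 > 0,
\]
and the identical estimate with $w_j$ replacing $2^{n-j}$ gives, by super-increasingness,
\[
w(C_2) - w(C_1) \ge w_{i^\ast} - \sum_{j=i^\ast+1}^n w_j > 0.
\]
Hence, for $C_1 \neq C_2$, the condition $\beta(C_1) < \beta(C_2)$ is equivalent to $i^\ast \in C_2 \setminus C_1$ (by the displayed bound applied in each of the two possible orientations), and likewise $w(C_1) < w(C_2)$ is equivalent to $i^\ast \in C_2 \setminus C_1$; therefore $\beta(C_1) < \beta(C_2) \iff w(C_1) < w(C_2)$, as claimed.

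There is no substantial obstacle here: the only point needing care is the clean selection of $i^\ast$ as the least element of the symmetric difference and the accompanying without-loss-of-generality reduction, after which both strict inequalities drop out of the same one-line domination estimate — the very principle (common to all positional number systems and super-increasing sequences) that motivates the reduction to powers of $2$.
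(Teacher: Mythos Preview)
Your proof is correct. The approach differs slightly from the paper's: the paper reduces to \emph{adjacent} sets, i.e.\ pairs with $\beta(C_2)=\beta(C_1)+1$, observes that such a pair has the form $C_1=C\cup\{\ell+1,\ldots,n\}$ and $C_2=C\cup\{\ell\}$ for $\ell=\max(N\setminus C_1)$, and then invokes the super-increasing inequality $w_\ell>\sum_{j>\ell}w_j$; the general biconditional follows by chaining along the $\beta$-order. You instead compare two arbitrary distinct sets directly via the least index $i^\ast$ in the symmetric difference and apply the same domination estimate once. Your route avoids the implicit transitivity/chain argument and treats $\beta$ and $w$ symmetrically in a single computation; the paper's route makes explicit the ``carry'' structure of binary incrementing, which is what is actually used later when the intervals $(w(P^-),w(P)]$ are introduced. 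Either way the content is the single inequality $w_{i}>\sum_{j>i}w_j$, so the two arguments are minor variants of one another.
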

\begin{proof}
 In order to prove the claim, it suffices to observe adjacent sets $C_1,C_2\subseteq N$, i.e., ones satisfying $\beta(C_2) = \beta(C_1)+1$. Let $\ell = \max (N \setminus C_1)$, and define $C = C_1 \cap \{1,\ldots,\ell-1\}$. Then $C_1 = C \cup \{\ell+1,\ldots,n\}$ and $C_2 = C \cup \{\ell\}$.
 Therefore $w(C_2) - w(C_1) = w_\ell - w(\{\ell+1,\ldots,n\}) > 0$, since $w_1,\ldots,w_n$ is super-increasing.
\qed\end{proof}


For a non-empty set of agents $C \subseteq N$, we let
$P^- \subseteq N$ be the unique subset of agents
satisfying $\beta(P^-) = \beta(P) - 1$. Lemma~\ref{lem:super-increasing-rep}
shows that every quota $q \in (0,w(N)]$ belongs to a unique interval $(w(P^-),w(P)]$; we denote $P$ by $\pseqall(q)$.
We think of $\pseqall(q)$ as an increasing sequence $\pseq_0,\ldots,\pseq_r$ depending on $q$, for some value of $r$ which also depends on $q$. Whenever we write $P = \{a_0,\ldots,a_r\}$, we will always assume that $a_0 < \dots < a_r$.

\begin{lemma} \label{lem:super-increasing-P}
For all agents $i \in N$ and quotas $q \in (0,w(N)]$, $\sv_i^{\vec w}(q) = \sv_i^{\vec b}(\beta(\pseqall(q)))$, where $\vec b = (2^{n-1},\ldots,1)$.
\end{lemma}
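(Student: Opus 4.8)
The plan is to show that the game $(\vec w, q)$ and the game $(\vec b, \beta(\pseqall(q)))$ induce exactly the same pivotality structure, and then to read off the equality of Shapley values from the pivotality formula~\eqref{eq:shapley-alt-formula}. Write $b(T) = \sum_{j \in T} 2^{n-j} = \beta(T)$ for the weight of $T \subseteq N$ under $\vec b$ (note $\vec b$ is itself super-increasing), and for a weight vector $\vec u$ and quota $r$ let $m_i^{\vec u,r}(S)$ denote the marginal contribution $v(S\cup\{i\}) - v(S)$ in that game. By~\eqref{eq:shapley-alt-formula} it is enough to prove, for every agent $i \in N$ and every $S \subseteq N \setminus \{i\}$, that
\[ m_i^{\vec w,q}(S) = m_i^{\vec b,\,\beta(\pseqall(q))}(S), \]
i.e. that $i$ is pivotal for $S$ in $(\vec w,q)$ iff $i$ is pivotal for $S$ in $(\vec b, \beta(\pseqall(q)))$; summing over $\ell$ and $S$ then gives $\sv_i^{\vec w}(q) = \sv_i^{\vec b}(\beta(\pseqall(q)))$.

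First I would unpack both pivotality conditions. Since $q \in (0, w(N)]$ we have $q > w(\emptyset)=0$, so $P := \pseqall(q)$ is non-empty, $q \in (w(P^-), w(P)]$ by Lemma~\ref{lem:super-increasing-rep}, and $\beta(P^-) = \beta(P)-1 \geq 0$. Agent $i$ is pivotal for $S$ in $(\vec w, q)$ exactly when $w(S) < q \leq w(S\cup\{i\})$, and pivotal for $S$ in $(\vec b, \beta(P))$ exactly when $\beta(S) < \beta(P) \leq \beta(S\cup\{i\})$. So the whole lemma reduces to the equivalence, valid for every $T \subseteq N$: $w(T) < q \iff \beta(T) < \beta(P)$ (and its negation, $w(T) \geq q \iff \beta(T) \geq \beta(P)$). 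Apply the first with $T = S$ and the second with $T = S\cup\{i\}$ and the two pivotality conditions match.

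To establish this equivalence I would use that, by Lemma~\ref{lem:super-increasing-rep}, the map $C \mapsto \beta(C)$ is an order isomorphism from the subsets of $N$ ordered by $w(\cdot)$ onto $\{0,\dots,2^n-1\}$ ordered by $<$; in particular all subset sums $w(C)$ are distinct, and $w(P)$ is the immediate $w$-successor of $w(P^-)$ because $\beta(P) = \beta(P^-)+1$. Hence no subset sum lies strictly between $w(P^-)$ and $w(P)$, and since $w(P^-) < q \leq w(P)$ we get $w(T) < q \iff w(T) \leq w(P^-) \iff \beta(T) \leq \beta(P^-) \iff \beta(T) < \beta(P)$, the last step using that $\beta$ is integer-valued with $\beta(P^-) = \beta(P)-1$. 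The hard part is really just this last chain of equivalences: everything hinges on the remark that being an order isomorphism onto the \emph{consecutive} integers turns the order-preservation of Lemma~\ref{lem:super-increasing-rep} into an exact matching of the "cut points" $q$ falls between, which is what yields equality of Shapley values rather than a mere monotone correspondence. The remaining bookkeeping (non-emptiness of $\pseqall(q)$, and that $\beta(\pseqall(q)) \in (0, 2^n-1]$ so that $\sv_i^{\vec b}(\beta(\pseqall(q)))$ is well defined) is routine.
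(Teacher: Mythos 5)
Your proof is correct and matches the paper's argument in all essentials: both reduce to showing that pivotality for a set/prefix coincides in the two games, and both achieve this via Lemma~\ref{lem:super-increasing-rep} together with the fact that $\beta(\pseqall(q)^-) = \beta(\pseqall(q)) - 1$ makes $w(\pseqall(q))$ the immediate subset-sum successor of $w(\pseqall(q)^-)$. The only cosmetic difference is that you phrase the reduction through the set-based formula~\eqref{eq:shapley-alt-formula} and the clean equivalence $w(T) < q \iff \beta(T) < \beta(\pseqall(q))$, while the paper works directly with the permutation $\sigma$ and $P_i(\sigma)$.
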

\begin{proof}
Let $\sigma$ be a random permutation in $\perms{n}$, and recall that $P_i(\sigma)$ is the set of agents appearing before agent~$i$ in~$\sigma$.
The Shapley value $\sv_i^{\vec w}(q)$ is the probability that $w(P_i(\sigma)) \in [q-w_i,q)$, or equivalently, that $q \in (w(P_i(\sigma)),w(P_i(\sigma))+w_i]$. Since the intervals $(w(C^-),w(C)]$ partition $(0,w(N)]$, $q$ is in $(w(P_i(\sigma)),w(P_i(\sigma))+w_i]$ if and only if
$w(P_i(\sigma)) \leq w(\pseqall(q)^-)$ and $w(\pseqall(q)) \leq w(P_i(\sigma) \cup \{i\})$.
Lemma~\ref{lem:super-increasing-rep} shows that this is equivalent to checking whether $\beta(P_i(\sigma)) \leq \beta(\pseqall(q)^-)$ and $\beta(\pseqall(q)) \leq \beta(P_i(\sigma) \cup \{i\})$. Now, note that $\beta(\pseqall(q)^-) = \beta(\pseqall(q))-1$, so the above condition simply states that $i$ is pivotal for $\sigma$ under $\vec b$ when the quota is $\beta(\pseqall(q))$.
\qed\end{proof}
Lemma~\ref{lem:super-increasing-P} implies that for any super-increasing $\vec w$, if we wish to compute $\sv_i^{\vec w}(q)$, it is only necessary to find $\pseqall(q)$. However, finding $\pseqall(q)$ is easy; a greedy algorithm can find $\pseqall(q)$ in linear time (see Appendix~\ref{sec:si-greedy}). In the special case in which $w_i = d^{n-i}$ for some integer $d$, there is a particularly simple formula described in Appendix~\ref{sec:si-dary}.


We now present a closed-form formula for the Shapley values, whose proof is given in the appendix. The resulting Shapley values are illustrated in Figure~\ref{fig:super-increasing}.

\begin{restatable}{thm}{superincreasingformula} \label{thm:super-increasing-formula}
 Consider an agent $i \in N$ and a prescribed quota value $q\in (0,w(N)]$. Let $\pseqall(q) = \{\pseq_0,\ldots,\pseq_r\}$.
 If $i \notin \pseqall(q)$ then
$\shapley_i(q) = \sum_{\substack{t \in \{0,\ldots,r\}\colon\\ \pseq_t > i}} \frac{1}{\pseq_t\binom{\pseq_t-1}{t}}$.
 If $i \in \pseqall(q)$, say $i = \pseq_s$, then
$\shapley_i(q) = \frac{1}{\pseq_s\binom{\pseq_s-1}{s}} - \sum_{\substack{t \in \{0,\ldots,r\}\colon\\ \pseq_t > i}} \frac{1}{\pseq_t\binom{\pseq_t-1}{t-1}}$.
\end{restatable}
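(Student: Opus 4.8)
The plan is to reduce to powers of two and then perform an explicit permutation count. By Lemma~\ref{lem:super-increasing-P}, $\sv_i^{\vec w}(q)=\sv_i^{\vec b}(Q)$ with $\vec b=(2^{n-1},\ldots,1)$ and $Q=\beta(\pseqall(q))$; since $b(C)=\beta(C)$ for this $\vec b$, the subinterval of $(0,w(N)]$ containing the integer $Q$ is $(\beta(\pseqall(q))-1,\beta(\pseqall(q))]$, so the set labelling it is again $\pseqall(q)=\{\pseq_0,\ldots,\pseq_r\}$, which is exactly the set of indices $a$ for which bit $n-a$ of $Q$ equals $1$. It therefore suffices to prove the formula for $\vec b$, i.e.\ to evaluate, for a uniformly random $\sigma\in\perms{n}$, the pivotality probability $\sv_i^{\vec b}(Q)=\Pr[\,b(P_i(\sigma))\in[Q-2^{n-i},Q-1]\,]$.

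The first step is a combinatorial description of pivotality. Write $S=P_i(\sigma)$ and split $S=S'\sqcup S''$ with $S'=S\cap\{1,\ldots,i-1\}$ (heavier agents) and $S''=S\cap\{i+1,\ldots,n\}$ (lighter agents), so that $b(S')$ is a multiple of $2^{n-i+1}$ while $0\le b(S'')\le 2^{n-i}-1$. Decompose $Q=Q_{\mathrm{hi}}+\epsilon_i 2^{n-i}+Q_{\mathrm{lo}}$ along its binary digits, where $\epsilon_i\in\{0,1\}$ is bit $n-i$ (so $\epsilon_i=1\iff i\in\pseqall(q)$) and $Q_{\mathrm{lo}}=Q\bmod 2^{n-i}=\sum_{t\colon \pseq_t>i}2^{n-\pseq_t}$. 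Substituting $b(S')=Q_{\mathrm{hi}}+c\cdot 2^{n-i+1}$ into the window $b(S)\in[Q-2^{n-i},Q-1]$ and invoking the size bounds, a short case analysis on the carry $c$ shows that only $c=0$ is feasible; hence $i$ is pivotal for $\sigma$ if and only if $S'=\{\pseq_t:\pseq_t<i\}$, together with $b(S'')<Q_{\mathrm{lo}}$ when $i\notin\pseqall(q)$, respectively $b(S'')\ge Q_{\mathrm{lo}}$ when $i\in\pseqall(q)$.

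The second step treats the lighter part. Reading the indicator vector of $S''\subseteq\{i+1,\ldots,n\}$ from its heaviest coordinate downwards and comparing it with that of $\{\pseq_t:\pseq_t>i\}$ shows that $b(S'')<Q_{\mathrm{lo}}$ if and only if there is a (necessarily unique) index $s'$ with $\pseq_{s'}>i$ such that $S''\cap\{i+1,\ldots,\pseq_{s'}\}=\{\pseq_t: i<\pseq_t<\pseq_{s'}\}$, and that these events are pairwise disjoint over $s'$. I then translate to permutations through the elementary identity $\Pr_\sigma[\,P_i(\sigma)\cap\{1,\ldots,a\}=T\,]=\frac{1}{a\binom{a-1}{|T|}}$, valid for any $a\ge i$ and $T\subseteq\{1,\ldots,a\}\setminus\{i\}$ (order the $a$ agents $\{1,\ldots,a\}$ as $T$, then $i$, then the remainder). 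The combined event ``$S'=\{\pseq_t:\pseq_t<i\}$ and $S''\cap\{i+1,\ldots,\pseq_{s'}\}=\{\pseq_t:i<\pseq_t<\pseq_{s'}\}$'' is precisely ``$P_i(\sigma)\cap\{1,\ldots,\pseq_{s'}\}$ equals $\{\pseq_0,\ldots,\pseq_{s'-1}\}$, with $\pseq_s=i$ deleted when $i\in\pseqall(q)$''. Plugging this into the identity and either summing over the admissible $s'$ (when $i\notin\pseqall(q)$) or subtracting that sum from $\Pr[S'=\{\pseq_t:\pseq_t<i\}]=\frac{1}{\pseq_s\binom{\pseq_s-1}{s}}$ (when $i=\pseq_s\in\pseqall(q)$, using that $b(S'')\ge Q_{\mathrm{lo}}$ is the complement of $b(S'')<Q_{\mathrm{lo}}$) yields exactly the two claimed formulas; the binomial denominators $\binom{\pseq_t-1}{t}$ and $\binom{\pseq_t-1}{t-1}$ arise from the sizes $|T|=t$ and $|T|=t-1$ in the two cases.

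The main obstacle is the bookkeeping in the carry analysis together with the comparison against $Q_{\mathrm{lo}}$: one must pin down the ranges of $b(S'),b(S''),Q_{\mathrm{hi}},Q_{\mathrm{lo}}$ and the precise endpoints of the window $[Q-2^{n-i},Q-1]$ tightly enough that exactly one carry value survives and that the ``$S''$ part'' decomposes into the disjoint family indexed by $s'$, after which inclusion--exclusion collapses to a plain sum (or difference) of binomial reciprocals and the translation to permutation probabilities is routine.
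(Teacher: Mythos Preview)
Your proof is correct and follows essentially the same route as the paper: both reduce to $\vec b=(2^{n-1},\ldots,1)$ via Lemma~\ref{lem:super-increasing-P} and then identify the same disjoint family of events $E_{s'}:\;P_i(\sigma)\cap\{1,\ldots,a_{s'}\}=\{a_0,\ldots,a_{s'-1}\}$ (with $a_s=i$ removed when applicable), each having probability $\tfrac{1}{a_{s'}\binom{a_{s'}-1}{|T|}}$. The packaging differs in two minor ways. First, the paper derives the constraint $P_i(\pi)\cap\{1,\ldots,a_\tau\}=\{a_0,\ldots,a_{\tau-1}\}$ directly from super-increasing inequalities on the weights, whereas you obtain the same constraint by splitting $S$ into a heavy part $S'$ and a light part $S''$ and doing a carry analysis on the binary digits of $Q$; your index $s'$ is exactly the paper's $\tau$. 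Second, for $i=\pseq_s\in\pseqall(q)$ the paper invokes the complementary quota $q-w_i$ (observing that pivotality at $q$ and at $q-w_i$ partition $E_s$), while you take the complement $b(S'')\ge Q_{\mathrm{lo}}$ directly; these are the same computation, since pivotality at $q-w_i$ is precisely your event $b(S'')<Q_{\mathrm{lo}}$. Your bit-arithmetic formulation is perhaps a bit more mechanical and self-contained; the paper's formulation makes the role of the super-increasing hypothesis more visible.
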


\begin{figure}[ht!]
 \centering
 \begin{subfigure}[t]{0.4\textwidth}
  \centering \includegraphics[width=\textwidth]{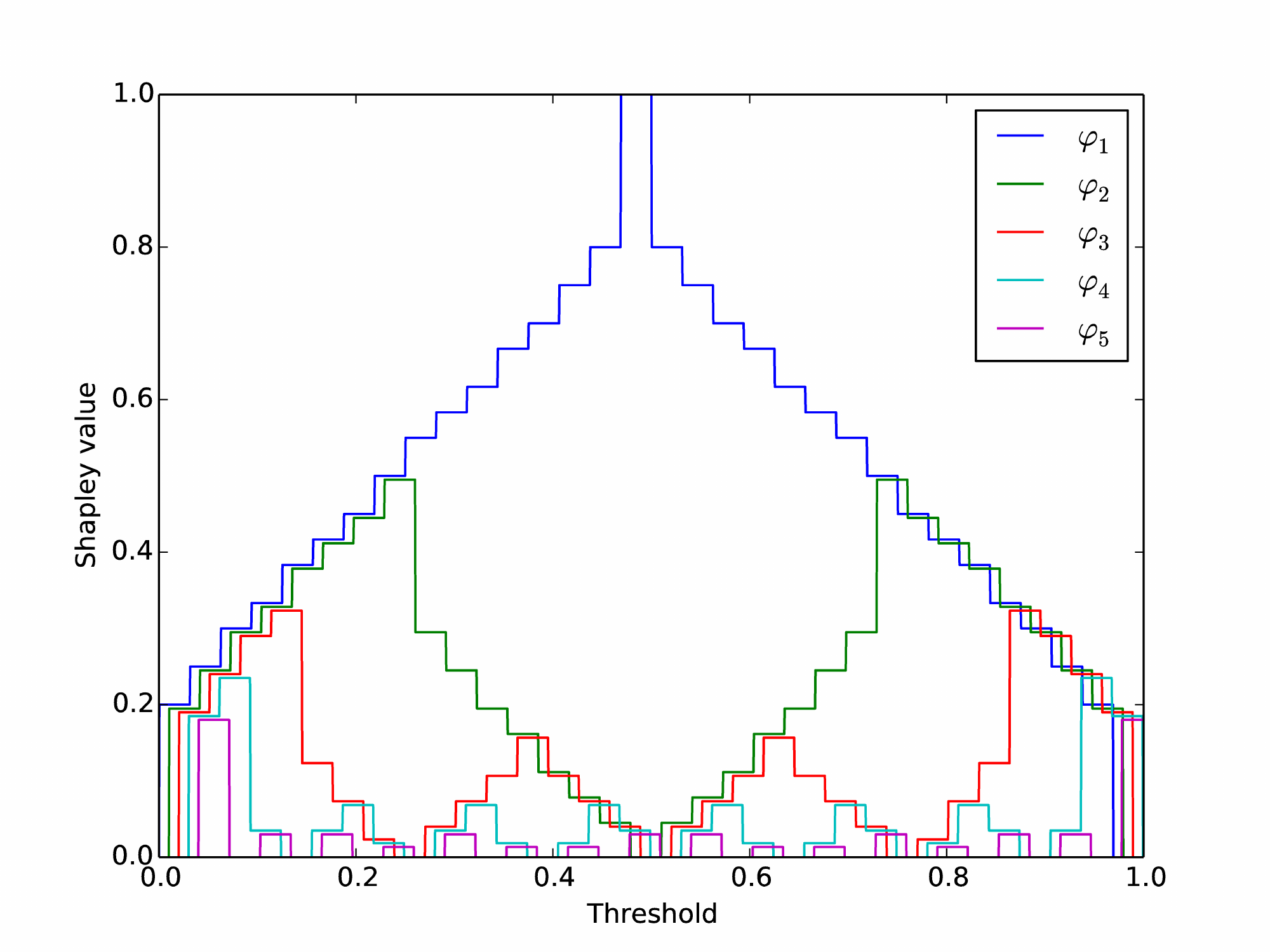}
  \caption{Shapley values for $n = 5$, $w_i = 2^{-i}$. Values $\shapley_i(q)$ for different $i$ are slightly nudged to show the effects of Lemma~\ref{lem:super-increasing-adjacent}.}
 \end{subfigure} \hspace{0.05\textwidth}
 \begin{subfigure}[t]{0.4\textwidth}
  \centering \includegraphics[width=\textwidth]{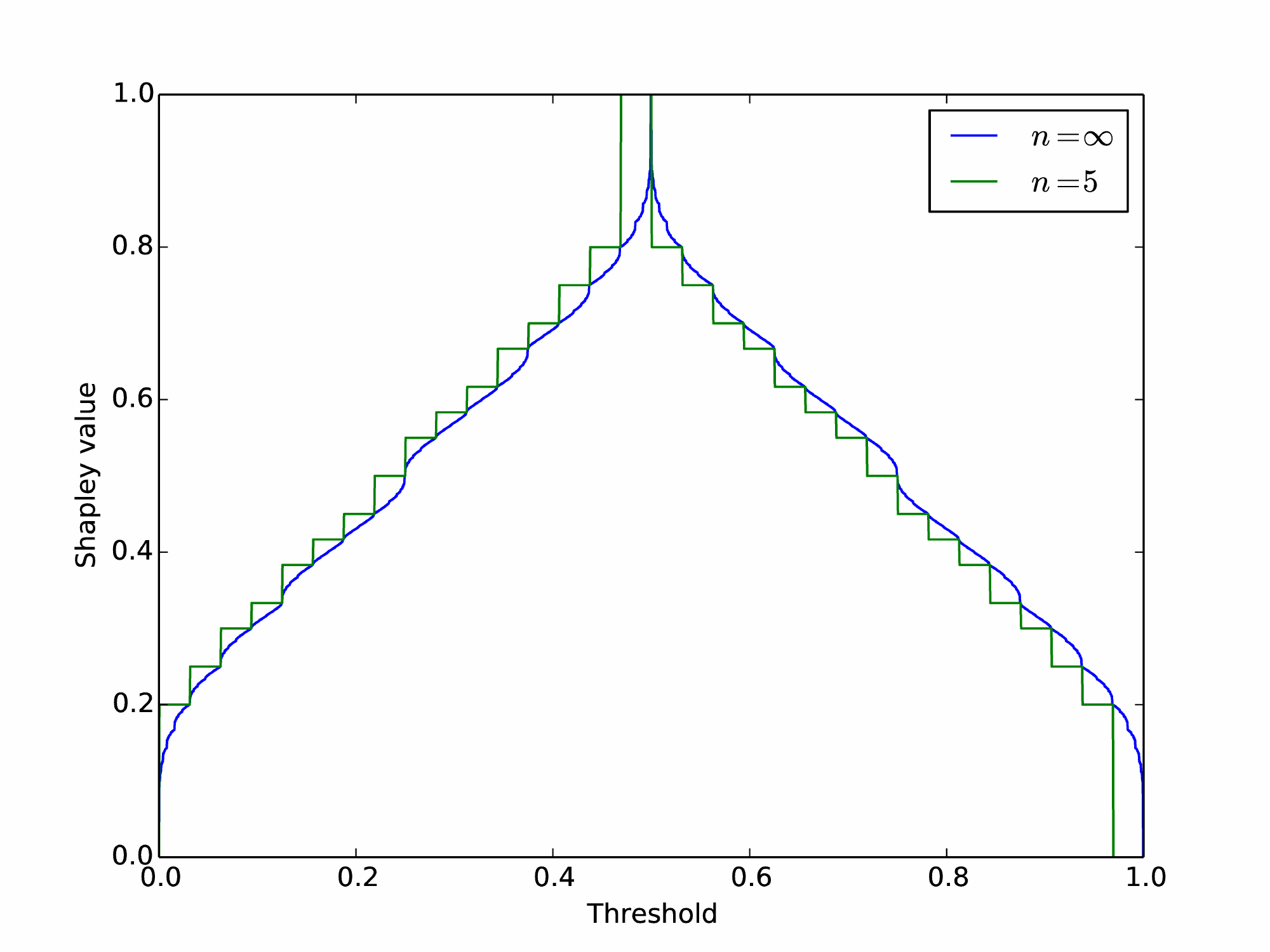}
  \caption{Shapley values $\shapley_1(q)$ for $n = 5$, $w_i = 2^{-i}$ compared to the limiting case $n = \infty$.}
 \end{subfigure}

 \begin{subfigure}[t]{0.4\textwidth}
  \centering \includegraphics[width=\textwidth]{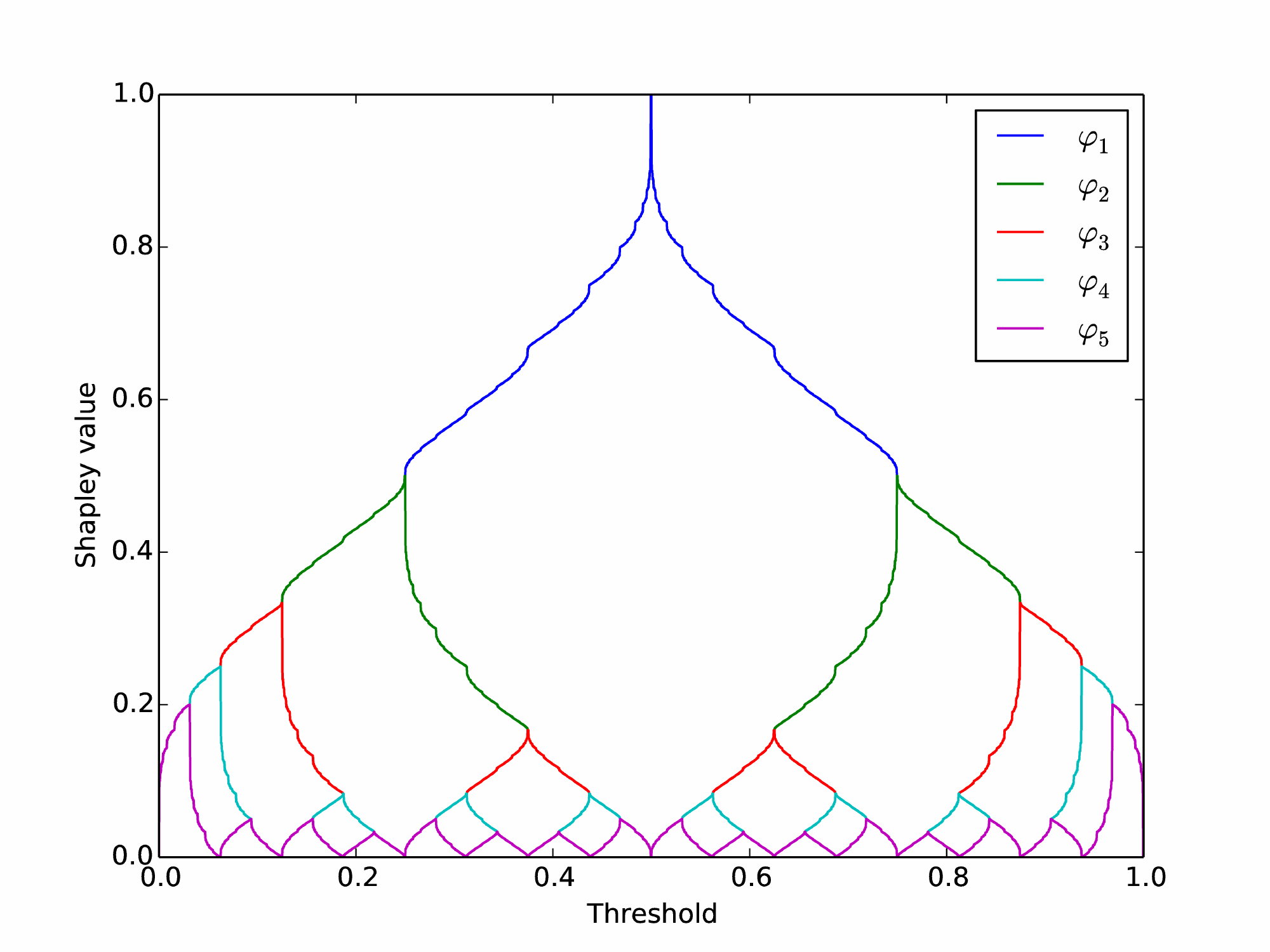}
  \caption{Shapley values in the limiting case, $w_i = 2^{-i}$.}
 \end{subfigure} \hspace{0.05\textwidth}
 \begin{subfigure}[t]{0.4\textwidth}
  \centering \includegraphics[width=\textwidth]{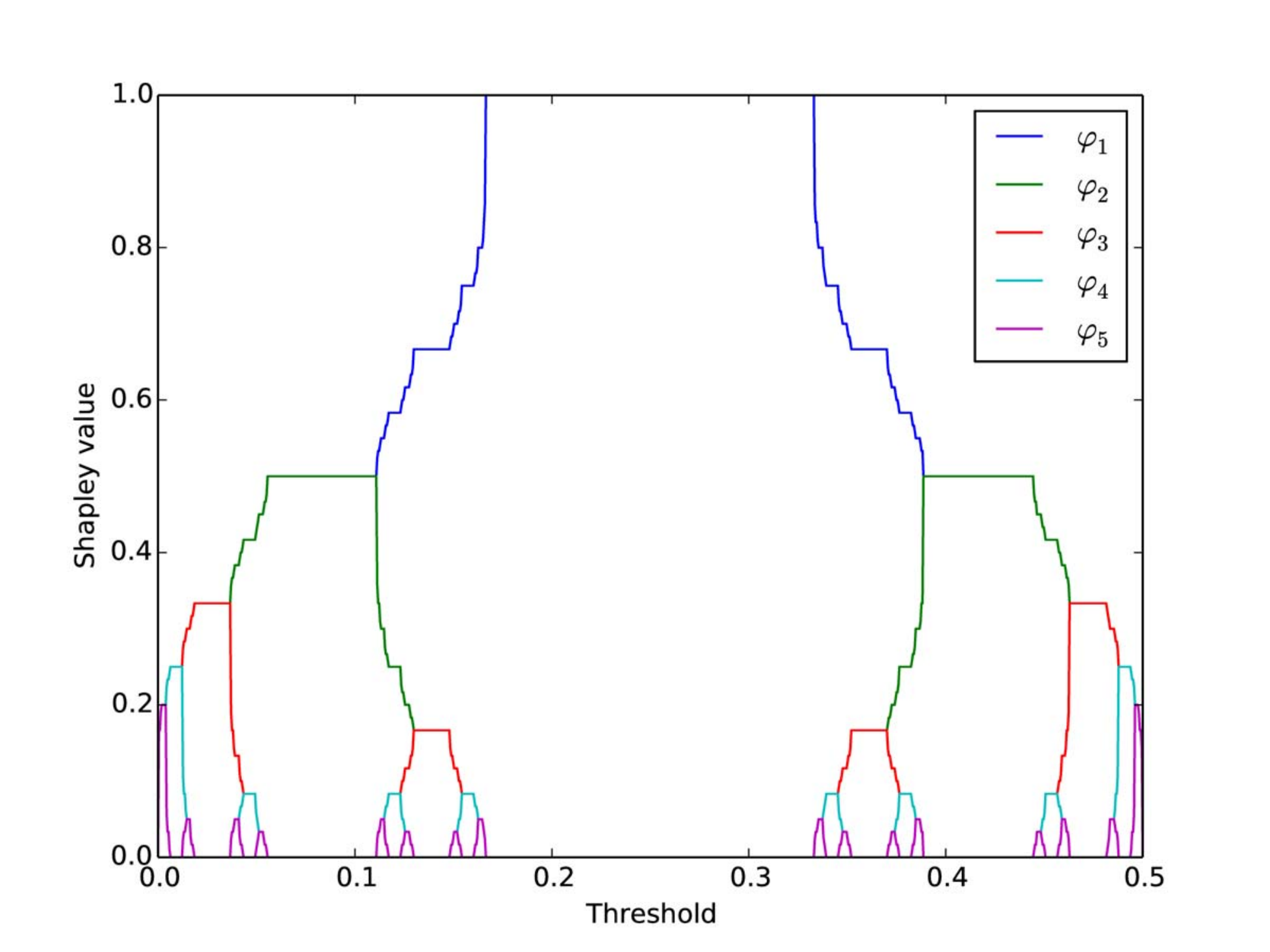}
  \caption{Shapley values in the limiting case, $w_i = 3^{-i}$.}
 \end{subfigure}
 \caption{Examples of several Shapley values corresponding to super-increasing sequences.}
 \label{fig:super-increasing}
\end{figure}

\subsection{Properties of the Shapley values}
\label{sec:Super-increasing-props}

Zuckerman et al.~\cite{zuck12manip} provide a nice characterization of super-increasing sets:
\begin{theorem}[\cite{zuck12manip}]
\label{thm:zuckerman}
If the weights $\vec w$ are super-increasing then for every quota $q \in (0,w(N)]$, either $\sv_1(q) = \sv_2(q)$ or $\sv_2(q) = \sv_3(q)$.
\end{theorem}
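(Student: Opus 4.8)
The plan is to read the statement off the closed-form formula of Theorem~\ref{thm:super-increasing-formula}. A word on indexing: in this section weights are in \emph{non-increasing} order, so the three lightest agents are $n,n-1,n-2$, and the claim (phrased in~\cite{zuck12manip} for non-decreasing weights) is that for every quota $q$, either $\sv_n(q)=\sv_{n-1}(q)$ or $\sv_{n-1}(q)=\sv_{n-2}(q)$. Fix $q$ and write $\pseqall(q)=\{\pseq_0,\dots,\pseq_r\}$ with $\pseq_0<\cdots<\pseq_r$. The decisive observation is that for an agent $i\in\{n-2,n-1,n\}$ the only elements of $\pseqall(q)$ that can be larger than $i$ are $n-1$ and $n$; hence each sum occurring in Theorem~\ref{thm:super-increasing-formula} for these agents has at most two summands, and the whole question is controlled by which of $n-1,n$ lie in $\pseqall(q)$.

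I would then establish the sharper statement: $\sv_n(q)=\sv_{n-1}(q)$ \emph{unless} $n\notin\pseqall(q)$ while $n-1\in\pseqall(q)$, and in that remaining case $\sv_{n-2}(q)=\sv_{n-1}(q)$. If $n,n-1\notin\pseqall(q)$, nothing in $\pseqall(q)$ exceeds $n-1$ and the formula gives $\sv_n(q)=\sv_{n-1}(q)=0$. If $n\in\pseqall(q)$, then $n=\pseq_r$ and $\sv_n(q)=\tfrac{1}{n\binom{n-1}{r}}$; now either $n-1\notin\pseqall(q)$, so $\pseq_r=n$ is the unique element exceeding $n-1$ and the formula again gives $\sv_{n-1}(q)=\tfrac{1}{n\binom{n-1}{r}}$, or $n-1\in\pseqall(q)$, forcing $n-1=\pseq_{r-1}$, and since $n-1\in\pseqall(q)$ the formula reduces $\sv_{n-1}(q)=\sv_n(q)$ to a one-line binomial identity. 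Finally, if $n\notin\pseqall(q)$ but $n-1\in\pseqall(q)$, then $n-1=\pseq_r$; if also $n-2\notin\pseqall(q)$ the formula gives $\sv_{n-2}(q)=\sv_{n-1}(q)=\tfrac{1}{(n-1)\binom{n-2}{r}}$, while if $n-2\in\pseqall(q)$ then $n-2=\pseq_{r-1}$ and the formula reduces $\sv_{n-2}(q)=\sv_{n-1}(q)$ to a binomial identity. Since the cases are exhaustive, one of the two equalities always holds.

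The only calculation with any content is the pair of binomial identities arising when two consecutive lightest agents both lie in $\pseqall(q)$; both follow from the elementary relation $\tfrac{1}{k\binom{k-1}{j}}=\tfrac{1}{(k-j)\binom{k}{j}}=\tfrac{1}{(j+1)\binom{k}{j+1}}$ together with Pascal's rule, after which each collapses to $j\binom{k}{j}=(k-j+1)\binom{k}{j-1}$. The step I would treat as the crux is the bookkeeping underlying the case split: deciding, for each of $n-2,n-1,n$, whether it belongs to $\pseqall(q)$ and, when it does, its rank in $\pseq_0<\cdots<\pseq_r$ --- the useful fact being that consecutive integers contained in $\pseqall(q)$ occupy consecutive ranks, which is exactly what lets one write $n-1=\pseq_{r-1}$ (when $n,n-1\in\pseqall(q)$) and $n-2=\pseq_{r-1}$ (when $n-1,n-2\in\pseqall(q)$ but $n\notin\pseqall(q)$). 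Alternatively, one may first apply Lemma~\ref{lem:super-increasing-P} to reduce to $\vec w=(2^{n-1},\dots,1)$ with $q=\beta(\pseqall(q))$ and run the same analysis, though this does not shorten it.
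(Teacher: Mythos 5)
Your proof is correct. You verify it by a direct exhaustive case analysis on which of the three lightest indices belong to $\pseqall(q)$, using the closed-form of Theorem~\ref{thm:super-increasing-formula} and (in the two nontrivial cases) a binomial identity equivalent to Lemma~\ref{lem:super-increasing-identity}. I checked the arithmetic: in the subcase $n,n-1\in\pseqall(q)$ the identity needed is $\tfrac{1}{(n-1)\binom{n-2}{r-1}}=\tfrac{1}{n\binom{n-1}{r-1}}+\tfrac{1}{n\binom{n-1}{r}}$, i.e.\ Lemma~\ref{lem:super-increasing-identity} with $p=n,t=r$; in the subcase $n\notin\pseqall(q)$, $n-1,n-2\in\pseqall(q)$ it is the same with $p=n-1,t=r$. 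Both are applicable since each case forces $r\ge 1$ and $r<p$. Your observation that consecutive members of $\pseqall(q)$ occupy consecutive ranks is the right bookkeeping, and the ``sharper statement'' is a genuine dichotomy.

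The paper's own route is not identical, though it rests on the same ingredients. Rather than a bespoke case analysis for one triple, the paper first establishes Lemma~\ref{lem:super-increasing-adjacent}: $\Psi_i(\pseqall(q))=\Psi_{i+1}(\pseqall(q))$ forces $\sv_i(q)=\sv_{i+1}(q)$ for \emph{every} $i$, with explicit strict/weak inequality information in the other two cases. Theorem~\ref{thm:zuckerman} then falls out by a two-line pigeonhole: among three consecutive indices, two share a truth value. What the paper's route buys is generality (the conclusion holds for every consecutive triple, not just one end of the range) plus strictness information that you don't need here. What your route buys is that, for the lightest triple specifically, the ``$\ge$'' cases of Lemma~\ref{lem:super-increasing-adjacent} always collapse to equalities because the relevant element is necessarily $\pseq_r$, which is why your sharper dichotomy holds. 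In effect your cases reprove the restriction of Lemma~\ref{lem:super-increasing-adjacent} to $i\in\{n-2,n-1\}$ while also using $i+1=\pseq_r$ to upgrade the inequalities.

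One small caveat on indexing, which you flagged but is worth making explicit: in Section~\ref{sec:super-increasing} the paper switches to non-increasing weights, so the literal reading of ``$\sv_1,\sv_2,\sv_3$'' in the restated theorem is the three \emph{heaviest} agents, while you prove the claim for the three lightest (agents $n,n-1,n-2$), which matches the convention of the cited source. Both readings are true and both follow from Lemma~\ref{lem:super-increasing-adjacent}. However your specific argument leans on the fact that no element of $\pseqall(q)$ can exceed $n$, which caps the sums at two terms; that simplification is unavailable for $i\in\{1,2,3\}$, so if one insisted on the heaviest-triple reading you would need the paper's more general lemma (or a longer case analysis) rather than the shortcut you used.
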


In this section, we further generalize this result, using Theorem~\ref{thm:super-increasing-formula}.
Specifically, as a consequence of the theorem, we can determine in
which cases $\shapley_i(q) = \shapley_{i+1}(q)$. The results are summarized in the following lemma, which is proved in the appendix.
Given a set $S \subseteq N$, let $\Psi_i(S)$ be the indicator variable of $i \in S$; that is, $\Psi_i(S) = 1$ if $i \in S$, and is 0 otherwise.
\begin{restatable}{lem}{superincreasingadjacent} \label{lem:super-increasing-adjacent}
 Given a quota $q \in (0,w(N)]$, let $\pseqall(q) = \{\pseq_0,\ldots,\pseq_r\}$. For each $i \in N\setminus\{n\}$, if $\Psi_i(\pseqall(q)) = \Psi_{i+1}(\pseqall(q))$ then $\sv_i(q) = \sv_{i+1}(q)$. If $\Psi_i(\pseqall(q)) = 0$ and $\Psi_{i+1}(\pseqall(q)) =1$ then $\shapley_i(q) \geq \shapley_{i+1}(q)$, with equality if and only if $i+1 = \pseq_r$.
 If $\Psi_i(\pseqall(q)) = 1$ and $\Psi_{i+1}(\pseqall(q)) = 0$ then $\shapley_i(q) > \shapley_{i+1}(q)$.
\end{restatable}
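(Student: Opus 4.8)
The plan is to read everything off the closed-form expression of Theorem~\ref{thm:super-increasing-formula}. Write $P=\pseqall(q)=\{\pseq_0,\ldots,\pseq_r\}$ with $\pseq_0<\cdots<\pseq_r$, and split into cases according to which of $i,i+1$ lie in $P$. Note that since $i,i+1$ are consecutive integers, if both lie in $P$ they are consecutive in this enumeration, $i=\pseq_s$ and $i+1=\pseq_{s+1}=\pseq_s+1$, while if $i\in P$ and $i+1\notin P$ then $\pseq_{s+1}\ge i+2$. Two standing tools: (i) the monotonicity of the Shapley value recorded in the Preliminaries — here the weights are in non-increasing order, so $w_i>w_{i+1}$ already yields $\sv_i(q)\ge\sv_{i+1}(q)$, hence only the equality/strictness claims need work; and (ii) the elementary identity $\frac{1}{a\binom{a-1}{k}}=\frac{1}{(a+1)\binom ak}+\frac{1}{(a+1)\binom{a}{k+1}}$, both sides being $k!\,(a-k-1)!/a!$.

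Case $\Psi_i(\pseqall(q))=\Psi_{i+1}(\pseqall(q))$. If $i,i+1\notin P$, then $\sv_i(q)$ and $\sv_{i+1}(q)$ are the formula's sums over $\{t:\pseq_t>i\}$ and $\{t:\pseq_t>i+1\}$ respectively; these index sets differ only in indices $t$ with $\pseq_t=i+1$, of which there are none, so $\sv_i(q)=\sv_{i+1}(q)$. If $i=\pseq_s$ and $i+1=\pseq_{s+1}=\pseq_s+1$ are both in $P$, then plugging into the $i\in P$ branch of the formula and cancelling the common tail leaves $\sv_i(q)-\sv_{i+1}(q)=\frac{1}{\pseq_s\binom{\pseq_s-1}{s}}-\frac{1}{\pseq_{s+1}\binom{\pseq_{s+1}-1}{s}}-\frac{1}{\pseq_{s+1}\binom{\pseq_{s+1}-1}{s+1}}$, which vanishes by the identity above with $a=\pseq_s$, $k=s$ (using $\pseq_{s+1}=\pseq_s+1$).

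Case $\Psi_i(\pseqall(q))=0$, $\Psi_{i+1}(\pseqall(q))=1$, say $i+1=\pseq_s$. Here $\{t:\pseq_t>i\}=\{s,\ldots,r\}$ and $\{t:\pseq_t>i+1\}=\{s+1,\ldots,r\}$, and after the $\frac{1}{\pseq_s\binom{\pseq_s-1}{s}}$ terms cancel one is left with $\sv_i(q)-\sv_{i+1}(q)=\sum_{t=s+1}^{r}\bigl(\frac{1}{\pseq_t\binom{\pseq_t-1}{t}}+\frac{1}{\pseq_t\binom{\pseq_t-1}{t-1}}\bigr)$. Every summand is strictly positive, so this is $\ge 0$, and it equals $0$ precisely when the index range is empty, i.e.\ $s=r$, i.e.\ $i+1=\pseq_r$ — exactly the stated condition.

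Case $\Psi_i(\pseqall(q))=1$, $\Psi_{i+1}(\pseqall(q))=0$, say $i=\pseq_s$; this is the crux. The formula gives $\sv_i(q)-\sv_{i+1}(q)=\frac{1}{\pseq_s\binom{\pseq_s-1}{s}}-\sum_{t=s+1}^{r}\bigl(\frac{1}{\pseq_t\binom{\pseq_t-1}{t-1}}+\frac{1}{\pseq_t\binom{\pseq_t-1}{t}}\bigr)$, which monotonicity already tells us is $\ge 0$, but the strict inequality is not visible term by term (the inequality is in fact tight-looking). To obtain strictness I will instead invoke Lemma~\ref{lem:binomial-formula}: it suffices to exhibit one $S\subseteq N\setminus\{i,i+1\}$ with $q-w_i\le w(S)<q-w_{i+1}$. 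Take $S=P\setminus\{i\}$ (legitimate, since $i+1\notin P$). Because $\beta(P^-)=\beta(P)-1$, Lemma~\ref{lem:super-increasing-rep} shows no subset-sum lies strictly between $w(P^-)$ and $w(P)$; as $q\in(w(P^-),w(P)]$, this gives $w(T)\ge q\iff\beta(T)\ge\beta(P)$ and $w(T)<q\iff\beta(T)<\beta(P)$ for every $T\subseteq N$. Now $\beta(S\cup\{i\})=\beta(P)$, whereas $\beta(S\cup\{i+1\})=\beta(P)-2^{n-i}+2^{n-i-1}=\beta(P)-2^{n-i-1}<\beta(P)$, so $w(S)+w_i\ge q>w(S)+w_{i+1}$, as required. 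Hence the sum in Lemma~\ref{lem:binomial-formula} is strictly positive, and combined with $\sv_i(q)\ge\sv_{i+1}(q)$ we get $\sv_i(q)>\sv_{i+1}(q)$. I expect this last strictness to be the only real obstacle: the closed form alone does not plainly deliver it, so the argument falls back on the pivotality criterion of Lemma~\ref{lem:binomial-formula} together with the order-preserving $\beta$-encoding of subset sums; the remaining cases are routine once the terms are paired correctly.
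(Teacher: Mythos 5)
Your proof is correct, but in the one place where real work happens you take a genuinely different route from the paper. Cases where $\Psi_i = \Psi_{i+1}$, and the case $i\notin P$, $i+1\in P$, are handled exactly as the paper does (the paper collapses the paired terms via its Lemma~\ref{lem:super-increasing-identity}; your identity is the same thing after the substitution $p=a+1$, $t=k+1$). The divergence is in the third case, $i=\pseq_s\in P$ and $i+1\notin P$. You observe, correctly, that the closed form gives a leading term minus a tail and that strictness is not visible term by term, so you abandon the formula and instead go back to the pivotality picture: Lemma~\ref{lem:binomial-formula}, together with the observation from Lemma~\ref{lem:super-increasing-rep} that the $\beta$-encoding is order-isomorphic to subset sums and that $q\in(w(P^-),w(P)]$ means $w(T)\ge q\iff\beta(T)\ge\beta(P)$, lets you exhibit the explicit witness $S=P\setminus\{i\}$ for which $q-w_i\le w(S)<q-w_{i+1}$. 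This is a clean, conceptually transparent way to get strictness, and all the steps check out (in particular $S\subseteq N\setminus\{i,i+1\}$ since $i+1\notin P$, and $\beta(S\cup\{i+1\})=\beta(P)-2^{n-i-1}<\beta(P)$). The paper, by contrast, stays entirely inside the closed form: it uses $i+1\notin P$ to deduce $\pseq_{s+\ell}\ge\pseq_s+\ell+1$, then majorizes the tail $\sum_{t>s}1/\bigl[(\pseq_t-1)\binom{\pseq_t-2}{t-1}\bigr]$ by the corresponding sum with $\pseq_t$ replaced by $\pseq_s+(t-s)$, which telescopes via Lemma~\ref{lem:super-increasing-sum-identity} to $\frac{1}{(\pseq_s+r-s)\binom{\pseq_s+r-s-1}{r}}>0$. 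The paper's route is purely combinatorial and actually produces a quantitative lower bound on the gap; your route is shorter and makes the source of the strictness visible (a concrete pivoting coalition), at the cost of importing a lemma from a different part of the paper. Both are valid.
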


For each $i \in N$, let $\Psi_i$ be the truth value of $i \in \pseqall(q)$. Lemma~\ref{lem:super-increasing-adjacent} shows that if $\Psi_i = \Psi_{i+1}$ then $\shapley_i(q) = \shapley_{i+1}(q)$. Since there are only two possible truth values, for each $i \in N \setminus \{1,n\}$, either $\shapley_{i-1}(q) = \shapley_i(q)$ or $\shapley_i(q) = \shapley_{i+1}(q)$. This generalizes Theorem~\ref{thm:zuckerman}.

Since the Shapley values are constant in the interval $(w(P^-),w(P)]$, it follows that in order to analyze the behavior of $\sv_i(q)$, one need only determine the rate of increase or decrease at quotas of the form $w(P)$ for $P \subseteq N$. These are given by the following lemma, proved in the appendix.
\begin{restatable}{lem}{superincreasingjumps} \label{lem:super-increasing-jumps}
 Let $P \subseteq N$ be a non-empty set of agents, and let $i \in N$ be an agent. If $i \notin P^-$ then $\shapley_i(w(P^-)) < \shapley_i(w(P))$. If $i \in P^-$ then $\shapley_i(w(P^-)) > \shapley_i(w(P))$.

 Moreover, $|\shapley_i(w(P)) - \shapley_i(w(P^-))| \leq \frac1n$; this inequality is tight only if
\begin{inparaenum}[\itshape a\upshape)]
\item $P = \{n\}$.
\item $i < n$ and $P = \{1,\ldots,i\}$ or $P = \{i,n\}$, or
\item $i = n$ and $P = \{n-1\}$.
\end{inparaenum}
Otherwise, $|\shapley_i(w(P)) - \shapley_i(w(P^-))| \leq \frac{1}{n(n-1)}$.
\end{restatable}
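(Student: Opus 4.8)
The plan is to work entirely from the closed-form formula of Theorem~\ref{thm:super-increasing-formula}, comparing $\shapley_i(w(P))$ with $\shapley_i(w(P^-))$ for a fixed nonempty $P\subseteq N$. First I would set up the combinatorics of the transition from $P^-$ to $P$: if $\ell = \max(N\setminus P^-)$ (equivalently, the bit that flips when we add $1$ to $\beta(P^-)$), then $P^- = (P^-\cap\{1,\dots,\ell-1\})\cup\{\ell+1,\dots,n\}$ and $P = (P^-\cap\{1,\dots,\ell-1\})\cup\{\ell\}$, exactly as in the proof of Lemma~\ref{lem:super-increasing-rep}. So $P^-$ and $P$ share a common ``head'' $C=P^-\cap\{1,\dots,\ell-1\}=P\cap\{1,\dots,\ell-1\}$ below index $\ell$; above index $\ell$, $P^-$ contains the full tail $\{\ell+1,\dots,n\}$ while $P$ contains nothing. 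I would write $C=\{a_0,\dots,a_{s-1}\}$ in increasing order, so $P=\{a_0,\dots,a_{s-1},\ell\}$ has $r+1=s+1$ elements with $a_s=\ell$, whereas $P^-=\{a_0,\dots,a_{s-1},\ell+1,\ell+2,\dots,n\}$. The key point is that for an agent $i$, the only terms in the Theorem~\ref{thm:super-increasing-formula} sums that change are those coming from indices $>i$ that lie in the symmetric difference, i.e.\ from $\ell$ (newly in $P$, at position $s$) and from $\ell+1,\dots,n$ (newly absent, at positions $s,s+1,\dots,s+(n-\ell-1)$ in $P^-$).

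Next I would split into the two sign cases according to whether $i\in P^-$. The cleanest sub-case is $i\ge \ell+1$ (so $i$ is in the tail of $P^-$, hence $i\in P^-$ and $i\notin P$): here I'd plug into both branches of the formula and show the difference telescopes to something manifestly positive, giving $\shapley_i(w(P^-))>\shapley_i(w(P))$. The case $i\le \ell-1$ (so $i\in P^-\iff i\in P\iff i\in C$; and for such $i$ the ``$j_t>i$'' tails of the two sums are essentially the same except that $P^-$'s tail is longer), and the case $i=\ell$, I would handle similarly, each time reducing the comparison to a single explicit sum whose sign is forced by the fact that $\binom{j-1}{t}$ is increasing in $j$ for fixed $t$ and the weights $\frac{1}{j\binom{j-1}{t}}$ decrease suitably. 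A useful identity to keep at hand is $\frac{1}{j\binom{j-1}{t}}=\frac{1}{j\binom{j-1}{t-1}}+\frac{1}{j\binom{j-1}{t}}-\frac{1}{j\binom{j-1}{t-1}}$ rewritten via $\binom{j-1}{t}=\frac{j-t}{t}\binom{j-1}{t-1}$, which lets me collapse the $i\in P$ and $i\notin P$ formulas onto a common denominator.

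For the magnitude bound I would argue as follows. In each case the difference $|\shapley_i(w(P))-\shapley_i(w(P^-))|$ is expressed as a (signed, but single-sign within the case) telescoping sum of terms $\frac{1}{j\binom{j-1}{t}}$; the largest possible single term is $\frac1n$, achieved only when the relevant index $j$ equals $n$ and the relevant position $t$ equals $0$, i.e.\ $\binom{n-1}{0}=1$. Tracing back when $j=n,t=0$ can occur pins down exactly the listed tight cases: $P=\{n\}$ (so $\pseqall=\{n\}$, $r=0$), $P=\{1,\dots,i\}$ or $P=\{i,n\}$ for $i<n$, and $P=\{n-1\}$ for $i=n$. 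In every other case the index $j$ appearing is $\le n$ with position $t\ge 1$, or $j\le n-1$, so each term is at most $\frac{1}{(n-1)\binom{n-2}{1}}=\frac{1}{(n-1)(n-2)}$ or $\frac{1}{(n-1)\cdot 1}$ — I'd need to check the worst residual term is $\le\frac{1}{n(n-1)}$, which is where a small amount of care is needed, possibly combining two terms or using that a position-$0$ term with $j=n$ is excluded. I expect the main obstacle to be precisely this last bookkeeping: enumerating all the ``almost-tight'' configurations and verifying none of them beats $\frac{1}{n(n-1)}$ once the three genuinely tight families are removed — the sign statements and the $\le\frac1n$ bound are comparatively routine once the head/tail decomposition above is in place.
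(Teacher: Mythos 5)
Your setup is right: the head/tail decomposition of $P$ versus $P^-$, the case split on where $i$ sits relative to $\ell=\max(N\setminus P^-)$ (which is $\pseq_r$ in the paper's notation), and the plan to compare the two instances of Theorem~\ref{thm:super-increasing-formula} term by term all match the paper's proof. But there is a genuine gap in how you handle the magnitude, and it is exactly the spot you flagged as ``the main obstacle.''

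The key fact you are missing is that the telescoping is not something to \emph{bound}; it \emph{closes} exactly. In every case, after cancelling the common head terms, the remaining difference has precisely the shape of Lemma~\ref{lem:super-increasing-sum-identity},
\[
\frac{1}{p\binom{p-1}{t}} - \sum_{\ell=1}^{k} \frac{1}{(p+\ell)\binom{p+\ell-1}{t+\ell-1}} \;=\; \frac{1}{(p+k)\binom{p+k-1}{t+k}},
\]
with $p+k=n$ always. So $\shapley_i(w(P)) - \shapley_i(w(P^-))$ collapses (up to a fixed sign, depending on the case) to a \emph{single} term of the form $\frac{1}{n\binom{n-1}{\kappa}}$ for an explicit $\kappa \in \{0,\dots,n-1\}$ determined by $(P,i)$. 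This gives the sign statement for free (the expression is strictly positive, so you know exactly which of $\shapley_+,\shapley_-$ is larger in each case), and it makes the magnitude bound immediate: $\frac{1}{n\binom{n-1}{\kappa}} \le \frac{1}{n}$ with equality iff $\kappa \in \{0,n-1\}$, and otherwise $\binom{n-1}{\kappa} \ge n-1$ so the difference is $\le \frac{1}{n(n-1)}$. Chasing through which $(P,i)$ force $\kappa\in\{0,n-1\}$ yields exactly the three tight families in the statement.

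Your proposal instead tries to bound a \emph{sum} of terms $\frac{1}{j\binom{j-1}{t}}$ by its largest summand, and the bounds you quote for the ``non-tight'' residual — $\frac{1}{(n-1)(n-2)}$ and especially $\frac{1}{n-1}$ — are both strictly larger than the target $\frac{1}{n(n-1)}$, so that route does not close. Nothing downstream is salvaged by ``combining two terms'' in an ad hoc way; the clean statement really does come from recognizing the sum as the left-hand side of the telescoping identity with endpoint $n$, at which point there is no residual sum left to estimate.
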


\subsection{Limiting case}
In what follows, we briefly discuss some interesting properties of the Shapley value for weight vectors that are finite prefixes of an infinite super-increasing sequence. The full details can be found in Appendix~\ref{app:super-increasing-limits}.
Section~\ref{sec:Super-increasing-props} shows that the Shapley value be easily expressed and analyzed when weights are super-increasing. It is in fact useful to think of classes of super-increasing weights that are finite prefixes of an infinite super-increasing sequence; one example of such a sequence is $\left(2^{-i}\right)_{i = 0}^\infty$. Given an infinite super-increasing weight sequence $\vec w$, we refer to the first $m$ elements of $\vec w$ as $\vec w|_m$. Given an infinite super-increasing sequence $\vec w$, we can use the closed-form formula we define for the Shapley value in the finite case in order to derive a value for the infinite case. We can then define a value for agent $i$ under an infinite sequence $\vec w$: $\sv_i^{\vec w}(q)$. We show that the two formulas are closely related; one can derive $\sv_i^{\vec w|_m}(q)$ using $\sv_i^{\vec w}(q)$ (the connection is illustrated in Figure~\ref{fig:super-increasing}); moreover, we show that $\sv_i^{\vec w}(q)$ is continuous in $q$, and that it takes values of 0 when $q$ approaches 0, and when $q$ approaches $\sum_{i = 0}^\infty w_i$.


\section{Conclusions and Future Work}
We have studied the Shapley value as a function of the quota under
a number of natural weight distributions. Assuming that weights are drawn from balls and bins distributions allows us to reason rigourously about the effect of quota changes. We were also able to completely characterize the case where weights are super-increasing, strongly generalizing previous work.
The take-home message from our work is that changes to the quota matter, even when weights are nearly identical. 
Given the relative success of this analysis, it would be interesting to study other natural weight distributions (the case of i.i.d. bounded weights is studied in an extended version of this paper~\cite{oren2014sv}). Moreover, our results show that employing probabilistic approaches to cooperative games (beyond the case of WVGs) may be a useful research avenue.
\bibliographystyle{splncs}
\bibliography{bib}
\newpage

\appendix{Appendix}

\section{Missing Proofs from Section~\ref{sec:bbuniform}}
\subsection{Proof of Lemma~\ref{lem:binomial-formula}}
We now provide the complete proof of Lemma~\ref{lem:binomial-formula}
\binomialformula*
\begin{proof}
 Assume without loss of generality that $w_j \geq w_i$, and so $\sv_j \geq \sv_i$.
 For $\sigma \in \perms{n}$, let $T_{ij}(\sigma)$ be the permutation obtained by exchanging agents $i$ and $j$. Then by the definition of the Shapley value and by linearity of expectations:
\begin{align*}
\sv_j - \sv_i &= \Ex_{\sigma \in \perms{n}} (m_j(\sigma) - m_i(\sigma)) \\
&= \Ex_{\sigma \in \perms{n}} m_j(\sigma) - \Ex_{\sigma \in \perms{n}} m_i(\sigma) = \Ex_{\sigma \in \perms{n}} (m_j(T_{ij}(\sigma)) - m_i(\sigma)).
\end{align*}
We proceed to evaluate $m_j(T_{ij}(\sigma)) - m_i(\sigma)$. Suppose first that agent $i$ precedes agent $j$ in $\sigma$, so that $\sigma = S \; i \; R \; j \; U$ and $T_{ij}(\sigma) = S \; j \; R \; i \; U$ (where $S,R$, and $U$ form a partition of $N \setminus \{i,j\}$). In this case $m_j(T_{ij}(\sigma)) - m_i(\sigma) \neq 0$ precisely when $w(S) + w_i < q \leq w(S) + w_j$, in which case $m_j(T_{ij}(\sigma)) - m_i(\sigma) = 1$; we can rewrite the condition as $w(S) \in [q - w_j, q-w_i)$.

When agent $j$ precedes agent $i$ in $\sigma$, we can write $\sigma = S \; j \; R \; i \; U$ and $T_{ij}(\sigma) = S \; i \; R \; j \; U$. In this case $m_j(T_{ij}(\sigma)) - m_i(\sigma) \neq 0$ precisely when $w(S) + w_i + w(R) < q \leq w(S) + w_j + w(R)$, in which case $m_j(T_{ij}(\sigma)) - m_i(\sigma) = 1$; we can rewrite the condition as $w(S \cup R) \in [q - w_j, q - w_i)$.

 In order to unify both conditions together, define $P'_i(\sigma) = P_i(\sigma) \setminus \{j\}$. Using this definition, we see that $m_j(T_{ij}(\sigma)) - m_i(\sigma)$ is the indicator of the event $w(P'_i(\sigma)) \in [q - w_j, q - w_i)$. The cardinality $|P'_i(\sigma)|$ is exactly the position of agent $i$ in the permutation $\sigma'$ obtained by removing agent $j$ from $\sigma$, minus one. Since $\sigma$ is a uniformly random permutation of $N$, $\sigma'$ is a uniformly random permutation of $N \setminus \{j\}$, and so $|P'_i(\sigma)|$ is distributed randomly among $\{0,\ldots,n-2\}$. Given $|P'_i(\sigma)|$, the set $P_i(\sigma)$ is chosen randomly among all subsets of $N \setminus \{i,j\}$ of the specified size, yielding our formula.
\end{proof}
\subsection{Proof of Theorem~\ref{thm:plurality_positive}}
We give the full proof of the following theorem.

\pluralitypositive*

In this section, we do not assume that the weights $w_1,\ldots,w_n$ are ordered, in order to maintain the fact that the weights are independent random variables.

The idea of the proof is to use the following criterion, which is a
consequence of Lemma~\ref{lem:binomial-formula}:
\begin{proposition} \label{pro:equality-criterion}
Suppose that for all agents $i,j \in N$ and for all subsets $S \subseteq N \setminus \{i,j\}$, we have $q \notin (w(S \cup \{i\}), w(S \cup \{j\})]$.
Then all Shapley values are equal to $1/n$.
\end{proposition}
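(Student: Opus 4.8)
The plan is to read the statement directly off the formula in Lemma~\ref{lem:binomial-formula}. Fix two agents $i,j \in N$ and assume without loss of generality that $w_i \le w_j$, so that $\min(w_i,w_j) = w_i$ and $\max(w_i,w_j) = w_j$. The first step is the purely algebraic observation that the event appearing in Lemma~\ref{lem:binomial-formula}, namely $q - w_j \le w(S) < q - w_i$, is equivalent to $w(S) + w_i < q \le w(S) + w_j$; and since $i,j \notin S$, this is exactly the assertion $q \in (w(S \cup \{i\}), w(S \cup \{j\})]$. (When $w_i = w_j$ this interval is empty and the event is vacuously impossible, which is consistent with everything below.)

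Next I would invoke the hypothesis of the proposition: for every $S \subseteq N \setminus \{i,j\}$ we have $q \notin (w(S \cup \{i\}), w(S \cup \{j\})]$, hence for each $\ell \in \{0,\dots,n-2\}$ the probability $\Pr_{S \in_R \setbinom{N \setminus \{i,j\}}{\ell}}[q - w_j \le w(S) < q - w_i]$ vanishes. Substituting into Lemma~\ref{lem:binomial-formula} gives $|\sv_j - \sv_i| = 0$. Since $i$ and $j$ were arbitrary, all $n$ Shapley values coincide. Finally, under the standing assumption $0 < q \le w(N)$ we have $\sum_{i \in N} \sv_i = 1$, and $n$ equal summands totalling $1$ must each equal $1/n$, which completes the proof.

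The main obstacle is essentially nonexistent: the proposition is an immediate corollary of Lemma~\ref{lem:binomial-formula}. The only points needing a word of care are the routine rewriting of the ``tie-breaking'' event as membership in one of the forbidden intervals $(w(S\cup\{i\}), w(S\cup\{j\})]$ (including the degenerate $w_i = w_j$ case), and the appeal to $\sum_i \sv_i = 1$, which is what upgrades ``all Shapley values are equal'' to ``all Shapley values equal $1/n$''.
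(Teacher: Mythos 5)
Your proof is correct and follows essentially the same route as the paper: both read the conclusion off Lemma~\ref{lem:binomial-formula}, rewrite the event $q - w_j \le w(S) < q - w_i$ as $q \in (w(S\cup\{i\}),\, w(S\cup\{j\})]$, rule it out by hypothesis, and then invoke $\sum_i \sv_i = 1$ to go from ``all equal'' to ``all equal $1/n$''. The only cosmetic difference is that the paper argues by contradiction (assume $\sv_i < \sv_j$) while you substitute directly into the formula.
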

\begin{proof}
We show that under the assumption on $q$, all Shapley values are equal, and so all must equal $1/n$. Suppose that for some agents $i \neq j$, we have $\sv_i < \sv_j$ (and so $w_i < w_j$). Lemma~\ref{lem:binomial-formula} implies the existence of a set $S \subseteq N \setminus \{i,j\}$ satisfying $q - w_j \leq w(S) < q - w_i$, or in other words $w(S) + w_i < q \leq w(S) + w_j$. This is exactly what is ruled out by the assumption on $q$.
\end{proof}

Next, we show that the weights $w(S)$ are concentrated around points of the form $\ell\frac{m}{n}$.
\begin{lemma} \label{lem:individual-concentration}
Suppose that $m > 3n^2$. With probability $1 - 2(\frac{2}{e})^n$, the following holds: for all $S \subseteq N$,
$|w(S) - \frac{|S|m}{n}| \leq \sqrt{3nm}$.
\end{lemma}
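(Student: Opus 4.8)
The plan is to fix a single subset $S \subseteq N$ with $|S| = k$ and control the deviation of $w(S)$ from its mean $km/n$ via a Chernoff bound, then take a union bound over all $2^n$ subsets. First I would observe that $w(S) = \sum_{i \in S} w_i$ counts the number of balls (out of $m$) that land in one of the $k$ bins indexed by $S$; since each ball lands in such a bin independently with probability $k/n$, we have $w(S) \sim \bin(m, k/n)$, so $\E[w(S)] = km/n$. I would then apply a standard additive Chernoff/Hoeffding bound for the binomial: for any $t > 0$,
\[
\Pr\left[\,\bigl|w(S) - \tfrac{km}{n}\bigr| \geq t\,\right] \leq 2\exp\left(-\frac{2t^2}{m}\right),
\]
or alternatively the multiplicative form; either suffices. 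Setting $t = \sqrt{3nm}$ gives a per-set failure probability at most $2\exp(-6n) \leq 2\,4^{-n}$ — in fact comfortably smaller than what is needed.

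Next I would take the union bound over all subsets $S \subseteq N$. There are $2^n$ of them (the bound $|w(S) - |S|m/n| \le \sqrt{3nm}$ holds trivially for $S = \emptyset$), so the total failure probability is at most
\[
2^n \cdot 2\exp\left(-\frac{2 \cdot 3nm}{m}\right) = 2 \cdot 2^n e^{-6n} = 2\left(\frac{2}{e^6}\right)^n \leq 2\left(\frac{2}{e}\right)^n.
\]
The hypothesis $m > 3n^2$ is not actually needed for the additive-Hoeffding route; it would be needed if one instead uses the multiplicative Chernoff bound $\Pr[|\bin(m,p) - mp| \ge \delta mp] \le 2e^{-\delta^2 mp/3}$ with $\delta = \sqrt{3nm}/(km/n) = \sqrt{3n^3/(k^2 m)}$, since there one wants $\delta \le 1$, i.e. $m \ge 3n^3/k^2$, which for the worst case $k=1$ needs $m \ge 3n^3$ — slightly stronger than stated, but for $k \ge 1$ and $m > 3n^2$ one still gets $\delta^2 mp/3 = n^3/(k m) \cdot \ldots$; I would simply present the Hoeffding version to keep the constants clean and sidestep this subtlety, or match whatever form the authors use elsewhere (they invoke Chernoff bounds in the proof of Theorem~\ref{thm:min_shap_small_l}).

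The only mild obstacle is bookkeeping the constant in the exponent so that the union bound genuinely beats $2(2/e)^n$: with $t = \sqrt{3nm}$ the exponent $2t^2/m = 6n$ is more than enough, since we only need it to exceed $n\ln 2 + (\text{slack for the } 2/e \text{ base}) = n(1 + \ln 2 - 1) $-type quantities; any constant strictly greater than $\ln 2$ multiplying $n$ in the exponent would do, and $6n$ leaves a large margin. So the argument is routine once the $\bin(m,k/n)$ identification is made; I expect no real difficulty, only care with the choice of concentration inequality and constants.
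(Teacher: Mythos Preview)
Your proposal is correct and follows essentially the same strategy as the paper: identify $w(S)\sim\bin(m,|S|/n)$, apply a Chernoff-type tail bound, and union bound over the $2^n$ subsets. The only difference is that the paper uses the \emph{multiplicative} Chernoff bound with $\delta=\sqrt{3n^2/(|S|m)}$, obtaining the (slightly sharper) per-set bound $|w(S)-|S|m/n|\le\sqrt{3|S|m}$ with failure probability $2e^{-n}$; the hypothesis $m>3n^2$ is used precisely to guarantee $\delta<1$. Your additive Hoeffding version is cleaner and, as you note, does not need that hypothesis at all.
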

\begin{proof}
The proof uses a straightforward Chernoff bound. We can assume that $S
\neq \emptyset$ (as otherwise the bound is trivial). For each non-empty set $S \subseteq N$, the distribution of $w(S)$ is $\bin(m,\frac{|S|}{n})$.
Therefore for $0 < \delta < 1$,
\[ \pr\left[ \left|w(S) - \frac{|S|m}{n}\right| > \delta \frac{|S|m}{n} \right] \leq 2e^{-\frac{\delta^2|S|m}{3n}}. \]
Choosing $\delta = \sqrt{\frac{3n^2}{|S|m}} < 1$, we obtain
\[ \pr\left[ \left|w(S) - \frac{|S|m}{n}\right| > \sqrt{3|S|m} \right] \leq 2e^{-n}. \]
Since there are $2^n$ possible sets $S$, a union bound implies that $|w(S) - \frac{|S|m}{n}| \leq \sqrt{3nm}$ with probability at least $1 - 2(\frac{2}{e})^n$.
\end{proof}

This immediately implies Theorem~\ref{thm:plurality_positive}, as we now show.
\begin{proof}[of Theorem~\ref{thm:plurality_positive}]
First, note that  $M < 1$, as otherwise, it would imply that for all
$\ell =1,\ldots,n$, $| q - \ell m/n| \geq m/n$, which is impossible,
as every quota in the range $(0,m]$ is within some integral multiple
of $m/n$. Thus, having  $M > 1$, implies that $m > 3n^3 \geq 3n^2$, as
required by Lemma~\ref{lem:individual-concentration}.

Lemma~\ref{lem:individual-concentration} shows that with probability
$1 - 2(\frac{2}{e})^n$, for all sets $S$ we have $|w(S) -
\frac{|S|m}{n}| \leq \sqrt{3nm}$. Condition on this event. Suppose, for the sake of obtaining a contradiction, that $\sv_i < \sv_j$ for some agents $i,j$. Then Proposition~\ref{pro:equality-criterion} shows that there must exist some $S \subseteq N \setminus \{i,j\}$ such that $q \in (w(S \cup \{i\}), w(S \cup \{j\})]$. Since both $w(S \cup \{i\})$ and $w(S \cup \{j\})$ are $\sqrt{3nm}$-close to $\frac{(|S|+1)m}{n}$, this implies that $|q - \frac{(|S|+1)m}{n}| \leq \sqrt{3nm} = \frac{1}{\sqrt{M}} \cdot \frac{m}{n}$, contradicting our assumption on $q$. We conclude that all agents have the same Shapley value $1/n$.
\end{proof}

\subsection{Proof of Lemma~\ref{lem:min_shapley_bound}}
We prove the following lemma.

\minshapleybound*

The proof closely follows the proof sketch in Section~\ref{sec:bbuniform-weak}.

We will need the fact that with high probability, $w_1$ is close to $m/n$.

\begin{lemma} \label{lem:min_weight_lb}
With probability at least $1 - 1/n$,
\[ \frac{m}{n} - \sqrt{\frac{4m\log n}{n}} \leq w_1 \leq \frac{m}{n}. \]
\end{lemma}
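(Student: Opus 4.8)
Recall that $w_1 = \min_{i \in N} X_i$, where $(X_1,\ldots,X_n)$ follows the multinomial distribution with $m$ trials and uniform cell probabilities; in particular each $X_i \sim \bin(m,\tfrac1n)$ individually, with $\E[X_i] = \tfrac mn$. The upper bound $w_1 \le \tfrac mn$ is immediate: the bin counts sum to $m$, so the minimum count is at most their average $\tfrac mn$. The content of the lemma is therefore the lower bound, which I would obtain by a union bound over the $n$ bins combined with a one-sided Chernoff bound.

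\textbf{Key steps.} First, fix a single bin $i$. Since $X_i \sim \bin(m,\tfrac1n)$ with mean $\mu = \tfrac mn$, the lower-tail Chernoff bound gives $\pr[X_i \le (1-\delta)\mu] \le e^{-\delta^2 \mu/2}$ for $0 < \delta < 1$. Choosing $\delta = \sqrt{\tfrac{4n\log n}{m}}$ (which is $<1$ in the regime of interest, where $m \gg n\log n$), we get $(1-\delta)\mu = \tfrac mn - \sqrt{\tfrac{4m\log n}{n}}$ and $\delta^2\mu/2 = 2\log n$, so
\[ \pr\!\left[X_i \le \tfrac mn - \sqrt{\tfrac{4m\log n}{n}}\right] \le e^{-2\log n} = \tfrac{1}{n^2}. \]
Second, take a union bound over the $n$ bins: the probability that some bin falls below $\tfrac mn - \sqrt{\tfrac{4m\log n}{n}}$ is at most $n \cdot \tfrac{1}{n^2} = \tfrac1n$. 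Hence with probability at least $1 - \tfrac1n$ every bin count, and in particular $w_1$, is at least $\tfrac mn - \sqrt{\tfrac{4m\log n}{n}}$; combined with the deterministic upper bound this proves the claim.

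\textbf{Main obstacle.} There is no real obstacle here — the only point requiring a word of care is verifying that $\delta = \sqrt{4n\log n/m} < 1$ so that the standard multiplicative Chernoff bound applies; this holds since in all applications of the lemma we have $m = \Omega(n^3\log n)$, hence certainly $m \ge 4n\log n$. (If one wishes to state the lemma fully unconditionally, one can simply add the harmless hypothesis $m \ge 4n\log n$.)
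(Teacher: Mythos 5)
Your proof is correct and takes essentially the same approach as the paper: a one-sided Chernoff bound on each bin count (with marginal $\bin(m,1/n)$) at deviation $\sqrt{4m\log n/n}$ giving failure probability $1/n^2$ per bin, followed by a union bound over the $n$ bins, together with the trivial observation $w_1 \le m/n$. The only (inessential) difference is that you explicitly flag the $\delta<1$ condition and correctly rely only on marginal distributions rather than the paper's (harmless but technically inaccurate) remark that the bin loads are independent.
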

\begin{proof}
Clearly $w_1 \leq m/n$ always, so we only need to address the lower bound on $w_1$.
Let $w'_1,\ldots,w'_n$ be the loads of the bins before sorting them. The loads $w'_i$ are independent random variables with distribution $\bin(m,1/n)$. For each index $i$, Chernoff's bound shows that
\[
\pr\left[w'_i < \frac{m}{n} - \sqrt{\frac{4m\log n}{n}}\right] \leq e^{-\frac{4m\log n/n}{2m/n}} = \frac{1}{n^2}.
\]
A union bound shows that with probability $1 - 1/n$, all $i \in N$ satisfy $w'_i \geq \frac{m}{n} - \sqrt{\frac{4m\log n}{n}}$, and so $w_1 \geq \frac{m}{n} - \sqrt{\frac{4m\log n}{n}}$.
\end{proof}

Below we will be interested in bounding probabilities of the form $\pr_{A \inR \setbinom{N \setminus \{1\}}{k}}[P(w(A))]$ for predicates $P$. The following lemma shows how to bound these probabilities from above.

\begin{lemma} \label{lem:prob-est}
For a weight vector $\mathbf{w}$ and $S \subseteq N$, let
$\mathcal{E}(w(S))$ be a random event (i.e., some predicate on $w(S)$), and let $0 \leq k \leq n-1$. Then
\[
\pr_{A \inR \setbinom{N \setminus \{1\}}{k}} [\mathcal{E}(w(A))] \leq \frac{n}{n-k} \pr[\mathcal{E}(\bin(m,\tfrac{k}{n}))].
\]
Also,
\[
\pr_{A \inR \setbinom{N}{k}} [\mathcal{E}(w(A))] = \pr[\mathcal{E}(\bin(m,\tfrac{k}{n}))].
\]
\end{lemma}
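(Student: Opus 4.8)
The plan is to prove the two displayed bounds of Lemma~\ref{lem:prob-est} separately, starting with the exact equality because it is the cleaner statement and will serve as a stepping stone for the inequality. For the equality, fix $k$ and observe that choosing $A \inR \setbinom{N}{k}$ and then running the balls-and-bins process is the same as first running the process and then looking at a uniformly random size-$k$ subset of the bins; by symmetry, we may instead fix $A = \{1,\ldots,k\}$ (say) and let the randomness be entirely in the throws. Each of the $m$ balls lands in $A$ independently with probability $|A|/n = k/n$, so $w(A) \sim \bin(m,k/n)$ exactly. Hence $\pr_{A \inR \setbinom{N}{k}}[\mathcal{E}(w(A))] = \pr[\mathcal{E}(\bin(m,k/n))]$, which is the second display.

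For the first display, the subtlety is that $A$ is drawn from $\setbinom{N\setminus\{1\}}{k}$ rather than $\setbinom{N}{k}$, i.e.\ we condition on $1 \notin A$; this conditioning is what produces the $\frac{n}{n-k}$ loss. The plan is to relate the two distributions by inclusion: a uniformly random $A' \in \setbinom{N}{k}$ avoids the fixed element $1$ with probability exactly $\binom{n-1}{k}/\binom{n}{k} = \frac{n-k}{n}$, and conditioned on $1\notin A'$ it is uniformly distributed over $\setbinom{N\setminus\{1\}}{k}$. Therefore
\[
\pr_{A \inR \setbinom{N}{k}}[\mathcal{E}(w(A))] \;\geq\; \pr[1 \notin A']\cdot \pr_{A \inR \setbinom{N\setminus\{1\}}{k}}[\mathcal{E}(w(A))] \;=\; \frac{n-k}{n}\, \pr_{A \inR \setbinom{N\setminus\{1\}}{k}}[\mathcal{E}(w(A))].
\]
Rearranging gives $\pr_{A \inR \setbinom{N\setminus\{1\}}{k}}[\mathcal{E}(w(A))] \leq \frac{n}{n-k}\pr_{A \inR \setbinom{N}{k}}[\mathcal{E}(w(A))]$, and substituting the already-established equality $\pr_{A \inR \setbinom{N}{k}}[\mathcal{E}(w(A))] = \pr[\mathcal{E}(\bin(m,k/n))]$ yields the claimed bound.

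I do not expect any serious obstacle here — the lemma is essentially a bookkeeping statement about two ways of sampling and the cost of a conditioning step. The one point that requires a little care is making sure the independence claim ``each ball lands in $A$ with probability $|A|/n$, independently'' is stated cleanly despite the bins being relabelled/sorted afterwards: the key observation is that the multiset of bin loads (and in particular $w(A)$ for a set $A$ fixed \emph{before} sorting) does not depend on the sorting, and the sorting only permutes which index is called $1$. Since in the regime where this lemma is applied the weights have already been sorted and $w_1$ is the minimum load, one should phrase the argument in terms of the pre-sorting loads $w'_i$, exactly as in the proof of Lemma~\ref{lem:min_weight_lb}, and note that $\{1\} = \argmin_i w'_i$ is a random index independent of nothing in particular — but this does not affect the distribution of $w(A)$ for $A$ ranging over subsets \emph{not containing} that minimizing index, which is precisely the quantity $\setbinom{N\setminus\{1\}}{k}$ encodes. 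I would make this explicit in a sentence or two and then the computation above goes through verbatim.
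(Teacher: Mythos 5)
Your proof is correct and is essentially the same argument as the paper's. Where the paper writes $\pr_{A \inR \setbinom{N\setminus\{1\}}{k}}[\mathcal E(w(A))]$ as $\frac{1}{\binom{n-1}{k}}\sum_{A\in\setbinom{N\setminus\{1\}}{k}}\pr[\mathcal E(w(A))]$, enlarges the sum to run over all $A\in\setbinom{N}{k}$, and absorbs the ratio $\binom{n}{k}/\binom{n-1}{k}=\frac{n}{n-k}$, you instead phrase it as $\pr_{A\inR\setbinom{N}{k}}[\mathcal E]\ge\pr[1\notin A]\cdot\pr_{A\inR\setbinom{N\setminus\{1\}}{k}}[\mathcal E]$ with $\pr[1\notin A]=\frac{n-k}{n}$; these are the same computation read in opposite directions. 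Your symmetry argument for the exact equality matches the paper's ``unsorted process'' observation.

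One correction to your closing discussion: the assertion that the identification of $1$ as the argmin of the pre-sorting loads ``does not affect the distribution of $w(A)$ for $A$ ranging over subsets not containing that minimizing index'' is false. Conditioning on $A$ avoiding the smallest bin skews $w(A)$ upward, so its distribution is \emph{not} $\bin(m,k/n)$; that asymmetry is precisely why the first display is only an inequality while the second is an equality. Fortunately your actual proof never invokes such an equidistribution claim --- the conditioning step for the inequality is valid regardless, and the symmetry argument for the equality averages over \emph{all} of $\setbinom{N}{k}$, where no bin is distinguished. So there is no gap, but that sentence should be dropped or reworded.
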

\begin{proof}
First, we have
\begin{eqnarray*}
\pr_{A \inR \setbinom{N \setminus \{1\}}{k}} [\mathcal{E}(w(A))] & = &
\frac{1}{\binom{n-1}{k}} \sum_{A \in \setbinom{N \setminus \{1\}}{k}} \pr[\mathcal{E}(w(A))] \\
			& \leq &\frac{1}{\binom{n-1}{k}} \sum_{A \in \setbinom{N}{k}} \pr[\mathcal{E}(w(A))] \\
			& = &\frac{n}{n-k} \pr_{A \inR \setbinom{N}{k}} [\mathcal{E}(w(A))].
\end{eqnarray*}
Consider the last expression.
Since the probability is over all subsets of $N$ of size $k$, the same
value is obtained from the \emph{unsorted} Balls and Bins process (without sorting the loads). Under this process, $w(A) \sim \bin(m,\frac{k}{n})$ for all $A \in \setbinom{N}{k}$, and so
\[ \pr_{A \inR \setbinom{N}{k}} [\mathcal{E}(w(A))] = \pr_{w \sim \bin(m,\tfrac{k}{n})} [\mathcal{E}(w)]. \]
This implies the lemma.
\end{proof}

Let $p_k = \pr_{A \inR \setbinom{N \setminus \{1\}}{k}}[q - w_1 \leq w(A) < q]$, and recall that formula~\eqref{eq:shapley-alt-formula} shows that $\sv_1 = \frac{1}{n} \sum_{k=0}^{n-1} p_k$.
We start by showing that the only non-negligible $p_k$ are $p_{\ell-1}$ and $p_\ell$, using a Chernoff bound. The idea is that when $k \geq \ell+1$, it is highly unlikely that $w(A) < q$, and when $k \leq \ell -1$, it is highly unlikely that $w(A) \geq q - w_1$.

\begin{lemma} \label{lem:negligible-size}
 Suppose that $m \geq 9n^2\log n$. Then for $k \in \{1,\ldots,n\} \setminus \{\ell-1,\ell\}$ we have $p_k \leq 1/n^2$, and so
\[ 0 \leq \E[\sv_1] - \frac{p_{\ell-1} + p_{\ell}}{n}  \leq \frac{1}{n^2}. \]
\end{lemma}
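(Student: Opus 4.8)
The plan is to bound $p_k$ for each $k \notin \{\ell-1,\ell\}$ by splitting into two regimes and applying a Chernoff bound in each. First recall that $p_k = \pr_{A \inR \setbinom{N\setminus\{1\}}{k}}[q-w_1 \le w(A) < q]$, and by Lemma~\ref{lem:prob-est} this is at most $\frac{n}{n-k}\pr[q-w_1 \le \bin(m,\tfrac kn) < q]$, where the randomness in $w_1$ must be handled jointly with the randomness of the binomial. Since $q = \ell m/n$, for $k \ge \ell+1$ the condition $w(A) < q$ forces $\bin(m,\tfrac kn)$ to fall below its mean $km/n$ by at least $(k-\ell)m/n \ge m/n$, a deviation that a Chernoff bound makes exponentially unlikely. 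Symmetrically, for $k \le \ell-2$ the condition $w(A) \ge q - w_1$ forces $\bin(m,\tfrac kn)$ to exceed its mean $km/n$ by at least $q - w_1 - km/n$; since $w_1 \le m/n$ always, this is at least $(\ell-1-k)m/n \ge m/n$, again exponentially unlikely.

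The main step is to make these Chernoff estimates explicit and check that, with the hypothesis $m \ge 9n^2\log n$, each $p_k$ is indeed at most $1/n^2$. For the upper-tail case, $\pr[\bin(m,\tfrac kn) \ge km/n + m/n] \le \exp(-\tfrac{(m/n)^2}{3km/n}) = \exp(-\tfrac{m}{3kn}) \le \exp(-\tfrac{m}{3n^2}) \le \exp(-3\log n) = n^{-3}$, using $k \le n$; multiplying by the prefactor $\frac{n}{n-k}$ still leaves something well below $1/n^2$ (the only delicate case is $k$ close to $n$, but there the upper-tail argument applies and we may instead use a multiplicative Chernoff bound with $\delta \ge 1$, which only helps). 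The lower-tail case is symmetric, using $\pr[\bin(m,\tfrac kn) \le km/n - m/n] \le \exp(-\tfrac{(m/n)^2}{2km/n}) = \exp(-\tfrac{m}{2kn}) \le n^{-4}$. One subtlety: for $k \le \ell - 2$ we should condition on $w_1 \le m/n$, which holds with probability $1$, so no extra error term arises; alternatively one absorbs the (deterministic) bound $w_1 \le m/n$ directly into the event before taking the probability.

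Finally, since $\sv_1 = \frac1n\sum_{k=0}^{n-1} p_k$ and every $p_k \in [0,1]$, summing the bound over the at most $n-2$ indices $k \notin \{\ell-1,\ell\}$ gives $0 \le \E[\sv_1] - \tfrac{p_{\ell-1}+p_\ell}{n} = \tfrac1n\sum_{k \notin\{\ell-1,\ell\}} p_k \le \tfrac1n \cdot (n-2)\cdot \tfrac1{n^2} \le \tfrac1{n^2}$, which is the claimed estimate. I expect the only real obstacle to be bookkeeping: choosing the right form of the Chernoff bound (additive vs.\ multiplicative) uniformly across all $k$, and verifying that the $\frac{n}{n-k}$ prefactor from Lemma~\ref{lem:prob-est} never overwhelms the exponential savings; this is routine once one notes $k-\ell \ge 1$ or $\ell-1-k \ge 1$ gives a deviation of at least $m/n$ in every relevant case.
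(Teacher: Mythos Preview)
Your proposal is correct and follows essentially the same route as the paper: apply Lemma~\ref{lem:prob-est}, then a Chernoff bound in each of the two regimes $k\ge \ell+1$ and $k\le \ell-2$, using the deterministic fact $w_1\le m/n$ in the latter. The paper avoids your digression about the prefactor by simply bounding $\tfrac{n}{n-k}\le n$ uniformly for $0\le k\le n-1$, giving $p_k\le n\cdot n^{-3}=n^{-2}$ directly; your parenthetical remark that ``the upper-tail argument applies'' when $k$ is close to $n$ is misplaced (large $k$ is the lower-tail regime), but this is moot once you use the crude bound on the prefactor.
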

\begin{proof}
 Let $k \in N$. Lemma~\ref{lem:prob-est} shows that
\[
 p_k \leq n \pr[q - w_1 \leq \bin(m,\tfrac{k}{n}) < q].
\]
 Suppose first that $k \geq \ell+1$. Chernoff's bound shows that
\[
 \pr[q - w_1 \leq \bin(m,\tfrac{k}{n}) < q] \leq \pr[\bin(m,\tfrac{k}{n}) < \tfrac{km}{n} - \tfrac{m}{n}] \leq e^{-\frac{(m/n)^2}{3km/n}} = e^{-m/(3nk)} \leq \frac{1}{n^3}.
\]
 Suppose next that $k \leq \ell-2$. Since $w_1 \leq m/n$, another application of Chernoff's bound gives 
\begin{eqnarray*}
 \pr[q - w_1 \leq \bin(m,\tfrac{k}{n}) < q] & \leq &\pr[\bin(m,\tfrac{k}{n}) \geq \tfrac{(\ell-1)m}{n}] \\
			& \leq &\pr[\bin(m,\tfrac{k}{n}) \geq \tfrac{km}{n} + \tfrac{m}{n}] \\
			& \leq & e^{-\frac{(m/n)^2}{3km/n}} = e^{-m/(3nk)} \leq \frac{1}{n^3}.
\end{eqnarray*}
 Therefore $p_k \leq 1/n^2$ for all $k \in N \setminus \{\ell-1,\ell\}$. The estimate for $\E[\sv_1]$ follows from formula~\eqref{eq:shapley-alt-formula}.
\end{proof}

The next step is to consider the following estimates for $p_{\ell-1},p_\ell$:
\begin{align*}
p'_{\ell-1} &= \pr_{A \inR \setbinom{N \setminus \{1\}}{\ell-1}}[q - w_1 \leq w(A)], \\
p'_\ell &= \pr_{A \inR \setbinom{N \setminus \{1\}}{\ell}}[w(A) < q].
\end{align*}
The following lemma shows that $p'_{\ell-1} \approx p_{\ell-1}$ and $p'_\ell \approx p_\ell$.

\begin{lemma} \label{lem:negligible-side}
 Suppose that $m \geq 24n^2\log n$. Then $p_{\ell-1} \leq p'_{\ell-1} \leq p_{\ell-1} + \frac{1}{n}$ and $p_\ell \leq p'_\ell \leq p_\ell + \frac{2}{n}$, and so
\[ -\frac{3}{n^2} \leq \E[\sv_1] - \frac{p'_{\ell-1} + p'_{\ell}}{n}  \leq \frac{1}{n^2}. \]
\end{lemma}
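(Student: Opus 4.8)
The plan is to control the two one-sided gaps $p'_{\ell-1}-p_{\ell-1}$ and $p'_\ell-p_\ell$ separately by Chernoff bounds routed through Lemma~\ref{lem:prob-est}, and then to add the outcome to the estimate of Lemma~\ref{lem:negligible-size}. First I would record the elementary set inclusions: since $w_1>0$, the event $w(A)\ge q$ implies $w(A)\ge q-w_1$, so $p_{\ell-1}\le p'_{\ell-1}$ and $p'_{\ell-1}-p_{\ell-1}=\pr_{A\inR\setbinom{N\setminus\{1\}}{\ell-1}}[w(A)\ge q]$; likewise $w(A)<q-w_1$ implies $w(A)<q$, so $p_\ell\le p'_\ell$ and $p'_\ell-p_\ell=\pr_{A\inR\setbinom{N\setminus\{1\}}{\ell}}[w(A)<q-w_1]$. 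This reduces both upper estimates to tail probabilities of a single random sum.

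For the first gap, recall $q=\ell m/n$. If $\ell=1$ the gap is $0$, since the empty set has weight $0<q$. For $\ell\ge 2$, Lemma~\ref{lem:prob-est} bounds it by $\tfrac{n}{n-\ell+1}\pr[\bin(m,\tfrac{\ell-1}{n})\ge\tfrac{\ell m}{n}]$; here $\bin(m,\tfrac{\ell-1}{n})$ has mean $\tfrac{(\ell-1)m}{n}$ and we ask for an upward deviation of $\tfrac mn$, i.e. relative deviation $\delta=\tfrac{1}{\ell-1}\le 1$, so the multiplicative Chernoff bound gives a probability at most $\exp(-\tfrac{m}{3n(\ell-1)})\le\exp(-\tfrac{m}{3n^2})\le n^{-8}$ using $m\ge 24n^2\log n$. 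Since $\ell\le n-1$ the prefactor $\tfrac{n}{n-\ell+1}$ is at most $n/2$, so $p'_{\ell-1}-p_{\ell-1}\le\tfrac{1}{2n^{7}}\le\tfrac 1n$, with much room to spare.

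For the second gap I would bring in Lemma~\ref{lem:min_weight_lb}: with probability at least $1-1/n$ the weight event $E:=\{w_1\ge\tfrac mn-\sqrt{\tfrac{4m\log n}{n}}\}$ holds. Because $p_\ell,p'_\ell$ average over the random weight vector as well as over $A$, I would split the probability on $E$ versus $\lnot E$, bounding the $\lnot E$ part by $\pr[\lnot E]\le 1/n$; on $E$ one has $q-w_1\le\tau:=\tfrac{(\ell-1)m}{n}+\sqrt{\tfrac{4m\log n}{n}}$ deterministically, hence $\{w(A)<q-w_1\}\cap E\subseteq\{w(A)<\tau\}$ and $p'_\ell-p_\ell\le\pr_A[w(A)<\tau]+\tfrac1n$. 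Now Lemma~\ref{lem:prob-est} (applied to the unsorted process) bounds $\pr_A[w(A)<\tau]$ by $\tfrac{n}{n-\ell}\pr[\bin(m,\tfrac\ell n)<\tau]$; the binomial has mean $\tfrac{\ell m}{n}$, and $\tau$ lies below that mean by $\tfrac mn-\sqrt{\tfrac{4m\log n}{n}}\ge\tfrac{m}{2n}$ (since $m\ge 24n^2\log n$ forces the square-root term below $\tfrac1{\sqrt6}\cdot\tfrac mn$), so the Chernoff lower tail gives $\pr[\bin(m,\tfrac\ell n)<\tau]\le\exp(-\tfrac{m}{8n\ell})\le\exp(-\tfrac{m}{8n^2})\le n^{-3}$; with $\ell\le n-1$ the prefactor $\tfrac{n}{n-\ell}$ is at most $n$, so $p'_\ell-p_\ell\le n^{-2}+\tfrac1n\le\tfrac2n$.

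Finally, combining the two gaps gives $0\le(p'_{\ell-1}+p'_\ell)-(p_{\ell-1}+p_\ell)\le\tfrac3n$, hence $0\le\tfrac{p'_{\ell-1}+p'_\ell}{n}-\tfrac{p_{\ell-1}+p_\ell}{n}\le\tfrac{3}{n^2}$; adding the bound $0\le\E[\sv_1]-\tfrac{p_{\ell-1}+p_\ell}{n}\le\tfrac{1}{n^2}$ from Lemma~\ref{lem:negligible-size} yields $-\tfrac{3}{n^2}\le\E[\sv_1]-\tfrac{p'_{\ell-1}+p'_\ell}{n}\le\tfrac{1}{n^2}$, as claimed. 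The only step that needs care is the conditioning in the second gap: because $w(A)$ and $w_1$ are correlated, one must first extract the weights-only event $E$ and only then reduce to the unsorted process through Lemma~\ref{lem:prob-est}; everything else is a routine Chernoff estimate with generous margins.
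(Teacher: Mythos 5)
Your proof is correct and follows essentially the same route as the paper's: both bound the one-sided gaps $p'_{\ell-1}-p_{\ell-1}$ and $p'_\ell-p_\ell$ via Lemma~\ref{lem:prob-est} plus Chernoff, invoking Lemma~\ref{lem:min_weight_lb} to control $w_1$ in the second gap, and then combine with Lemma~\ref{lem:negligible-size}. One small point where you are slightly cleaner than the paper: by routing the $w_1$-dependence through the deterministic threshold $\tau$ and the inclusion $\{w(A)<q-w_1\}\cap E\subseteq\{w(A)<\tau\}$, you sidestep any appearance of conditioning $w(A)$ on $w_1$ (which are correlated under the balls-and-bins process), whereas the paper's phrase ``assuming the lower bound on $w_1$'' is best read as exactly this inclusion argument but is stated more loosely.
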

\begin{proof}
Clearly $p_{\ell-1} \leq p'_{\ell-1}$ and $p_{\ell} \leq p'_{\ell}$. First,
\[ p'_{\ell-1} - p_{\ell-1} \leq \pr_{A \inR \setbinom{N \setminus\{1\}}{\ell-1}} [w(A) \geq q] \leq n \pr[\bin(m,\tfrac{\ell-1}{n}) \geq q], \]
using Lemma~\ref{lem:prob-est}. Chernoff's bound shows that 
\[ \pr[\bin(m,\tfrac{\ell-1}{n}) \geq \tfrac{(\ell-1)m}{n} + \tfrac{m}{n}] \leq  e^{-\frac{(m/n)^2}{3(\ell-1)m/n}} = e^{-m/(3n(\ell-1))} \leq \frac{1}{n^2}. \]
Similarly,
\[ p'_\ell - p_\ell \leq \pr_{A \inR \setbinom{N \setminus\{1\}}{\ell}} [w(A) < q - w_1] \leq n \pr[\bin(m,\tfrac{\ell}{n}) < q - w_1]. \]
We now need the lower bound on $w_1$ given by Lemma~\ref{lem:min_weight_lb}, which holds with probability $1-1/n$:
\[ q - w_1 \leq \frac{\ell m}{n} - \left(\frac{m}{n} - \sqrt{\frac{4m\log n}{n}}\right) \leq \frac{\ell m}{n} - \frac{m}{2n}, \]
the latter inequality following from $m \geq 24n^2 \log n > 16n\log n$. Assuming the lower bound on $w_1$,
\[ \pr[\bin(m,\tfrac{\ell}{n}) < q - w_1] \leq e^{-\frac{(m/(2n))^2}{3(\ell-1)m/n}} = e^{-m/(12n(\ell-1))} \leq \frac{1}{n^2}. \]
Therefore
\[
p'_\ell - p_\ell \leq \left(1 - \frac{1}{n}\right) \cdot \frac{1}{n^2} + \frac{1}{n} \cdot 1 < \frac{2}{n}.
\]
The formula for $\E[\sv_1]$ follows from Lemma~\ref{lem:negligible-size}.
\end{proof}

It remains to relate $p'_{\ell-1}$ and $p'_\ell$.

\begin{lemma} \label{lem:relating-ell-ell-1}
 Suppose that $m \geq 24n^3\log n$. Then
\[ \left| p'_\ell - \left(\frac{n}{2(n-\ell)} - \frac{\ell}{n-\ell} (1 - p'_{\ell-1})\right)\right| \leq \frac{1}{n}, \]
and so
\[ -\frac{4}{n^2} \leq \E[\sv_1] - \left( \frac{1}{2(n-\ell)} - \frac{\ell}{n(n-\ell)} + \frac{p'_{\ell-1}}{n-\ell} \right)  \leq \frac{2}{n^2}. \]
\end{lemma}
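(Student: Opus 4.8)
The plan is to relate $p'_\ell$ to $p'_{\ell-1}$ by passing through a uniformly random $\ell$-subset $C$ of the \emph{whole} agent set $N$ and conditioning on whether $C$ contains the weakest agent~$1$. Since $\pr[1\in C]=\ell/n$, since conditioned on $1\in C$ the set $C\setminus\{1\}$ is a uniform $(\ell-1)$-subset of $N\setminus\{1\}$ with $w(C)=w_1+w(C\setminus\{1\})$, and since conditioned on $1\notin C$ the set $C$ is a uniform $\ell$-subset of $N\setminus\{1\}$, the law of total probability gives
\[
\pr_{C\inR\setbinom{N}{\ell}}[w(C)<q]=\frac{\ell}{n}\,(1-p'_{\ell-1})+\frac{n-\ell}{n}\,p'_\ell ,
\]
where I used $\pr_{A\inR\setbinom{N\setminus\{1\}}{\ell-1}}[w(A)+w_1<q]=1-p'_{\ell-1}$. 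Solving for $p'_\ell$ and simplifying, the quantity we must bound, namely $p'_\ell-(\frac{n}{2(n-\ell)}-\frac{\ell}{n-\ell}(1-p'_{\ell-1}))$, equals exactly $\frac{n}{n-\ell}\,(\pr_{C\inR\setbinom{N}{\ell}}[w(C)<q]-\tfrac12)$.

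It thus remains to control $|\pr_{C\inR\setbinom{N}{\ell}}[w(C)<q]-\tfrac12|$. By the second part of Lemma~\ref{lem:prob-est}, $\pr_{C\inR\setbinom{N}{\ell}}[w(C)<q]=\pr[\bin(m,\tfrac{\ell}{n})<q]$, and here $q=\ell m/n$ is simultaneously the mean of $\bin(m,\tfrac{\ell}{n})$ and, being an integer (we assume $n\mid m$), a median of it. The classical fact that an integer mean of a binomial is also a median gives $\tfrac12-\pr[\bin(m,\tfrac\ell n)=q]\le\pr[\bin(m,\tfrac\ell n)<q]\le\tfrac12$, hence $|\pr[\bin(m,\tfrac\ell n)<q]-\tfrac12|\le\pr[\bin(m,\tfrac\ell n)=q]=O(n/\sqrt{m\,\ell\,(n-\ell)})$ by Stirling's approximation. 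Feeding this back into the previous paragraph, the error is $\tfrac{n}{n-\ell}\cdot O(n/\sqrt{m\,\ell\,(n-\ell)})=O(n^{2}/\sqrt{m\,\ell\,(n-\ell)^{3}})$; since $n-\ell=\Omega(n)$ in every regime in which the lemma is applied (and in any case whenever $\ell\le n/2$), together with $m\ge 24n^{3}\log n$ this is $O(1/(n\sqrt{\log n}))\le 1/n$ for $n$ large, giving the first displayed inequality.

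For the second inequality I would simply substitute the first into Lemma~\ref{lem:negligible-side}. Writing $p'_\ell=\frac{n}{2(n-\ell)}-\frac{\ell}{n-\ell}(1-p'_{\ell-1})+E$ with $|E|\le\tfrac1n$, a short computation whose only non-obvious step is the collapse $\frac{p'_{\ell-1}}{n}+\frac{\ell\,p'_{\ell-1}}{n(n-\ell)}=\frac{p'_{\ell-1}}{n-\ell}$ gives $\tfrac1n(p'_{\ell-1}+p'_\ell)=\frac{1}{2(n-\ell)}-\frac{\ell}{n(n-\ell)}+\frac{p'_{\ell-1}}{n-\ell}+\tfrac En$. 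Combining with $-\tfrac3{n^{2}}\le\E[\sv_1]-\tfrac1n(p'_{\ell-1}+p'_\ell)\le\tfrac1{n^{2}}$ from Lemma~\ref{lem:negligible-side} and $|E|/n\le1/n^{2}$ yields $-\tfrac4{n^{2}}\le\E[\sv_1]-(\frac{1}{2(n-\ell)}-\frac{\ell}{n(n-\ell)}+\frac{p'_{\ell-1}}{n-\ell})\le\tfrac2{n^{2}}$, which is the claim (and, since $\frac{p'_{\ell-1}}{n-\ell}$ is exactly the probability term of Lemma~\ref{lem:min_shapley_bound}, this also establishes that lemma).

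The only step that is not routine bookkeeping is the middle one: pinning $\pr[\bin(m,\tfrac\ell n)<\ell m/n]$ to $\tfrac12$ up to $O(n/\sqrt{m\,\ell\,(n-\ell)})$ via the exact mean/median coincidence, and then checking that the $\tfrac{n}{n-\ell}$ amplification is absorbed by $m\ge 24n^{3}\log n$. This is the place where the (implicit) assumption $n-\ell=\Omega(n)$ enters; it is satisfied in both theorems that invoke the lemma, where $\ell=o(\log n)$ or $\gamma\le\ell/n\le1-\gamma$.
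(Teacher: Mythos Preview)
Your decomposition is the same as the paper's: you condition on whether the uniformly random $\ell$-set of $N$ contains agent~$1$, which is exactly the paper's algebraic splitting of $\sum_{A\in\setbinom{N\setminus\{1\}}{\ell}}$ as $\sum_{A\in\setbinom{N}{\ell}}-\sum_{A\in\setbinom{N\setminus\{1\}}{\ell-1}}$. Both arrive at the exact identity
\[
p'_\ell=\frac{n}{n-\ell}\Pr[\bin(m,\tfrac{\ell}{n})<q]-\frac{\ell}{n-\ell}(1-p'_{\ell-1}),
\]
so the only substantive difference is how $\Pr[\bin(m,\tfrac{\ell}{n})<q]$ is pinned to $\tfrac12$.

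The paper uses the Berry--Esseen theorem: writing $\bin(m,\tfrac{\ell}{n})-q$ as a sum of $m$ centered i.i.d.\ variables with second moment $\sigma^2=\tfrac{\ell}{n}(1-\tfrac{\ell}{n})$ and third absolute moment $\rho=\tfrac{\ell}{n}(1-\tfrac{\ell}{n})\bigl[(\tfrac{\ell}{n})^2+(1-\tfrac{\ell}{n})^2\bigr]$, it obtains $\bigl|\Pr[\bin<q]-\tfrac12\bigr|\le\rho/(\sigma^3\sqrt{m})\le 2\sqrt{n/m}\le 1/n$. This bound holds for \emph{every} $\ell\in\{1,\dots,n-1\}$ and makes no integrality assumption on $q=\ell m/n$. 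Your route via the mean--median coincidence is more elementary and, when it applies, actually gives a slightly sharper bound $O(n/\sqrt{m\ell(n-\ell)})$; however it rests on the extra hypothesis $n\mid m$ (so that $q$ is an integer and the mean equals a median), which is not part of the lemma. Without that hypothesis you would need the Kaas--Buhrman result that the median of a binomial lies in $\{\lfloor mp\rfloor,\lceil mp\rceil\}$ and bound two point masses instead of one --- easy, but worth stating.

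Your remark about the implicit assumption $n-\ell=\Omega(n)$ is apt: in both proofs the first displayed inequality, as written, carries a silent factor $\tfrac{n}{n-\ell}$ that is harmless in the two downstream theorems (where $\ell=o(\log n)$ or $\ell\le(1-\gamma)n$). The paper's write-up glosses over this just as you do; neither argument yields the literal bound $1/n$ uniformly in $\ell$ up to $n-1$. Your derivation of the second inequality from the first via Lemma~\ref{lem:negligible-side} matches the paper exactly.
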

\begin{proof}
We have
\begin{align*}
p'_\ell &= \pr_{A \inR \setbinom{N \setminus \{1\}}{\ell}}[w(A) < q] \\ &=
\frac{1}{\binom{n-1}{\ell}} \sum_{A \in \setbinom{N \setminus \{1\}}{\ell}} \pr[w(A) < q] \\ &=
\frac{1}{\binom{n-1}{\ell}} \sum_{A \in \setbinom{N}{\ell}} \pr[w(A) < q] - \frac{1}{\binom{n-1}{\ell}} \sum_{A \in \setbinom{N \setminus \{1\}}{\ell-1}} \pr[w(A) + w_1 < q] \\ &=
\frac{n}{n-\ell} \pr_{A \inR \setbinom{N}{\ell}} \pr[w(A) < q] - \frac{\ell}{n-\ell} \left(1 - \pr_{A \inR \setbinom{N \setminus \{1\}}{\ell-1}} [w(A) + w_1 \geq q]\right) \\ &=
\frac{n}{n-\ell} \pr[\bin(m,\frac{\ell}{n}) < q] - \frac{\ell}{n-\ell} (1 - p'_{\ell-1}),
\end{align*}
where the final equality follows from the second part of Lemma~\ref{lem:prob-est}. We proceed to estimate $\pr[\bin(m,\frac{\ell}{n}) < q]$ using the Berry--Esseen theorem. The normalized binomial $\bin(m,\frac{\ell}{n}) - q$ is a sum of $m$ independent copies of the random variable $X$ with $\pr[X = 1 - \frac{\ell}{n}] = \frac{\ell}{n}$ and $\pr[X = -\frac{\ell}{n}] = 1 - \frac{\ell}{n}$. The Berry--Esseen theorem states that
\[ |\pr[\bin(m,\frac{\ell}{n}) - q < 0] - \pr[\nor(0,\sigma^2) < 0]| < \frac{\rho}{\sigma^3 \sqrt{m}}, \]
where $\sigma^2 = \E[X^2] = \frac{\ell}{n} (1 - \frac{\ell}{n})^2 + (1-\frac{\ell}{n}) (\frac{\ell}{n})^2 = \frac{\ell}{n}(1-\frac{\ell}{n})$ and $\rho = \E[|X|^3] = \frac{\ell}{n} (1 - \frac{\ell}{n})^3 + (1-\frac{\ell}{n}) (\frac{\ell}{n})^3 = \frac{\ell}{n}(1-\frac{\ell}{n})[(\frac{\ell}{n})^2 + (1-\frac{\ell}{n})^2]$. Since $\pr[\nor(0,\sigma^2) < 0] = 1/2$, we conclude that
\[ \left|\pr[\bin(m,\frac{\ell}{n}) - q < 0] - \frac{1}{2}\right| < \frac{1}{\sqrt{m}} \frac{(\tfrac{\ell}{n})^2 + (1-\tfrac{\ell}{n})^2}{\sqrt{\tfrac{\ell}{n}\left(1-\tfrac{\ell}{n}\right)}} \leq 2\sqrt{\frac{n}{m}}, \]
since the denominator is at least $\sqrt{\frac{1}{n}(1-\frac{1}{n})}$, and the numerator is at most $2(1-\frac{1}{n})^2 \leq 2\sqrt{1-\frac{1}{n}}$.
Since $m \geq 24n^3\log n \geq 4n^3$, we further have $2\sqrt{\frac{n}{m}} \leq \frac{1}{n}$.

The formula for $\E[\sv_1]$ follows from Lemma~\ref{lem:negligible-side}.
\end{proof}

Lemma~\ref{lem:relating-ell-ell-1} is simply a reformulation of Lemma~\ref{lem:min_shapley_bound}.

\subsection{Proof of Lemma~\ref{lem:min_weight}}
Let us recall Lemma~\ref{lem:min_weight}.

\minweight*

We already proved the upper bound in Lemma~\ref{lem:min_weight_lb}, using a simple union bound. The lower bound (corresponding to an upper bound on $w_1$) is much more difficult, because of the dependence between the individual bins. One way to overcome this difficulty is to use the Poisson approximation, given by the following theorem.

\begin{theorem}[\cite{upfalProbBook}]
    \label{thm:poisson_approx}
 Let $w_1,\dots,w_n$ be sampled according to the Balls and Bins distribution with $m$ balls, and let $X_1,\ldots,X_n$ be $n$ i.i.d.\ random variables sampled from the distribution $\pois(\frac{m}{n})$. Let $f\colon \R^n\rightarrow \{0,1\}$ be a boolean function over the weight vector, such that the probability  $p(w_1,\ldots,w_n)=\pr[f(w_1,\ldots,w_n) = 1]$ is monotonically increasing or decreasing with the number of balls. Then $p(w_1,\ldots,w_n) \leq 2p(X_1,\ldots,X_n)$.
  \end{theorem}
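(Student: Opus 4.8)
The plan is to use the classical Poissonization coupling and then save the usual $\Theta(\sqrt m)$ factor by exploiting the monotonicity hypothesis. First I would record the key distributional identity: if $X_1,\dots,X_n$ are i.i.d.\ $\pois(m/n)$, then their sum $S=\sum_{i=1}^n X_i$ is $\pois(m)$, and for every integer $k\ge 0$, conditioned on $S=k$ the vector $(X_1,\dots,X_n)$ has exactly the distribution of the loads produced by throwing $k$ balls uniformly into the $n$ bins. This is immediate from a one-line computation: for any $(k_1,\dots,k_n)$ with $\sum_i k_i = k$,
\[
\pr[X_1=k_1,\dots,X_n=k_n \mid S=k] = \frac{\prod_{i=1}^n e^{-m/n}(m/n)^{k_i}/k_i!}{e^{-m}m^k/k!} = \binom{k}{k_1,\dots,k_n} n^{-k},
\]
which is precisely the Balls and Bins probability for $k$ balls.

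Next I would write $p_k := \pr[f=1]$ for the weight vector generated by throwing $k$ balls; by hypothesis $p_k$ is monotone in $k$, and $p_m = p(w_1,\dots,w_n)$. Conditioning on $S$ and applying the identity above gives
\[
p(X_1,\dots,X_n) = \sum_{k=0}^{\infty} \pr[\pois(m)=k]\, p_k .
\]
Assume first that $p_k$ is nondecreasing in $k$. Discarding the terms with $k<m$ and bounding $p_k \ge p_m$ for $k\ge m$,
\[
p(X_1,\dots,X_n) \;\ge\; p_m \sum_{k\ge m} \pr[\pois(m)=k] \;=\; p_m \cdot \pr[\pois(m)\ge m] \;\ge\; \tfrac12\, p_m ,
\]
which is the claimed bound $p(w_1,\dots,w_n) \le 2\,p(X_1,\dots,X_n)$. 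If instead $p_k$ is nonincreasing, the symmetric argument keeping the terms with $k\le m$ works, using $\pr[\pois(m)\le m]\ge\tfrac12$.

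The only nontrivial ingredient, and hence the main obstacle, is the fact that $\pr[\pois(m)\ge m]\ge\tfrac12$ and $\pr[\pois(m)\le m]\ge\tfrac12$ for integer $m$ --- equivalently, that $m$ is a median of $\pois(m)$. This is the one step that does not follow by routine bookkeeping, but it is a standard property of the Poisson distribution and can either be quoted or established directly (for instance from the known bounds placing the median of $\pois(\lambda)$ in $[\lambda-\ln 2,\ \lambda+\tfrac13)$, or by an elementary argument). Everything else in the proof is the conditioning computation above.
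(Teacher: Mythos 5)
The paper does not prove this theorem; it quotes it directly from Mitzenmacher and Upfal's textbook~\cite{upfalProbBook}. Your proof is correct and is essentially the standard proof appearing in that reference: Poissonize by noting that $n$ i.i.d.\ $\pois(m/n)$ variables conditioned on their sum equaling $k$ are distributed exactly as the bin loads after throwing $k$ balls, decompose $p(X_1,\ldots,X_n)=\sum_k \Pr[\pois(m)=k]\,p_k$, and then, using monotonicity of $p_k$ in $k$, keep only the half of the mass on the favorable side of $k=m$, which costs a factor of $2$ because the integer $m$ is a median of $\pois(m)$ (so both $\Pr[\pois(m)\ge m]$ and $\Pr[\pois(m)\le m]$ are at least $\tfrac12$). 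The one ingredient you flagged as needing justification --- that $m$ is a median of $\pois(m)$ --- is indeed the crux, and it is a well-known fact that you are entitled to quote.
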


The following lemma completes the proof of Lemma~\ref{lem:min_weight}, since calculation shows that for all $n \geq 1$,
\[ \frac{m}{n} \sqrt{\frac{\log (n/\log(2n))}{m/n}} = \sqrt{\frac{m\log (n/\log (2n))}{n}} \geq \sqrt{\frac{m \log n}{3n}}. \]
(In fact, the minimum of $\frac{\log (n/\log(2n))}{\log n}$ is obtained for $n = 3$, in which case it is roughly $0.47$.)

\begin{lemma}\label{lem:min_weight_eps}
Let $\lambda = \frac mn$. For any $\eps \le  \sqrt{\frac{\log\left(\frac{n}{\log(2n)}\right)}{\lambda}}$, $\pr[w_1 > \lambda(1-\eps)] \leq \frac{1}{n}$.
\end{lemma}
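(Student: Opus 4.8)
The plan is to move from the Balls-and-Bins loads to independent Poisson loads via Theorem~\ref{thm:poisson_approx}, and then lower-bound the lower tail of a single Poisson. Observe that $\{w_1 > \lambda(1-\eps)\}$ is the event that \emph{every} bin receives more than $\lambda(1-\eps)$ balls; throwing an extra ball can only raise the minimum load, so the probability of this event is monotone non-decreasing in the number of balls, and Theorem~\ref{thm:poisson_approx} gives
\[
\pr[w_1 > \lambda(1-\eps)] \;\le\; 2\,\pr[\,\forall i\in N:\; X_i > \lambda(1-\eps)\,] \;=\; 2\bigl(\pr[\pois(\lambda) > \lambda(1-\eps)]\bigr)^n ,
\]
where $X_1,\dots,X_n$ are i.i.d.\ $\pois(\lambda)$. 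Moreover $\pr[w_1>\lambda(1-\eps)]$ is non-decreasing in $\eps$, so it suffices to treat the extreme admissible value $\eps=\sqrt{\log(n/\log(2n))/\lambda}$, for which $\lambda\eps^2=\log(n/\log(2n))$ and hence $e^{-\lambda\eps^2}=\log(2n)/n$.

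Write $p=\pr[\pois(\lambda)\le\lambda(1-\eps)]$. Using $(1-p)^n\le e^{-np}$, the bound $\pr[w_1>\lambda(1-\eps)]\le 1/n$ follows as soon as $p\ge\log(2n)/n=e^{-\lambda\eps^2}$, since then the right-hand side above is at most $2e^{-np}\le 2e^{-\log(2n)}=1/n$. In the situations where the lemma is used the number of balls is large compared with $n$ (for instance $m=\Omega(n^3\log n)$), so $\lambda=m/n$ is large, the standardized deviation $\eps\sqrt\lambda=\sqrt{\log(n/\log(2n))}$ grows only like $\sqrt{\log n}$, and $\lambda(1-\eps)$ lies only a few standard deviations below the mean. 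I would therefore estimate $p$ by the normal approximation: a Berry--Esseen-type bound for the Poisson distribution --- the Berry--Esseen theorem is already used in the appendix of this paper --- gives $|\,p-\Phi(-\eps\sqrt\lambda)\,|=O(1/\sqrt\lambda)$, where $\Phi$ is the standard normal CDF. The Mills-ratio inequality $\Phi(-z)\ge\frac{1}{2z\sqrt{2\pi}}\,e^{-z^2/2}$ for $z\ge 1$, applied with $z=\eps\sqrt\lambda=\sqrt{\log(n/\log(2n))}$ and $e^{-z^2/2}=\sqrt{\log(2n)/n}$, shows $\Phi(-z)=\Omega(n^{-1/2})$, which dominates both the Berry--Esseen error $O(1/\sqrt\lambda)=o(n^{-1/2})$ and the target $\log(2n)/n$. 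Hence $p\ge\log(2n)/n$ once $n$ is large enough, as required.

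The crux is precisely this last step --- a sharp enough lower bound on the Poisson lower tail $\pr[\pois(\lambda)\le\lambda(1-\eps)]$; in the regime that matters the normal approximation handles it with a factor of order $\sqrt n$ to spare. A fully assumption-free proof would in addition have to cover the borderline regime $\lambda=\Theta(\log n)$, where the deviation is a constant fraction of $\lambda$: there a single Stirling-estimated Poisson term $e^{-\lambda}\lambda^k/k!$ (with $k=\lfloor\lambda(1-\eps)\rfloor$) is too weak, losing a factor of order $\sqrt\lambda$, so one would instead exploit the slack between the Chernoff exponent $\lambda\bigl(\eps+(1-\eps)\log(1-\eps)\bigr)$ and its bound $\lambda\eps^2$ --- the exponent being roughly $\lambda\eps^2/2$ --- to absorb the polynomial losses. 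I expect that case to be the main obstacle to a proof valid for all $m$.
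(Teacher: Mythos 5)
Your overall structure mirrors the paper's: move to independent Poisson loads via the approximation theorem, then bound the lower tail of a single $\pois(\lambda)$. The difference is in that second step. The paper lower-bounds the single point mass $\pr[\pois(\lambda)=t]$ at $t=(1-\eps)\lambda$ via a Stirling estimate and the inequality $\frac{1}{1-x}\ge e^x$, aiming at the clean bound $e^{-\eps^2\lambda}$ for that term, after which $(1-e^{-\eps^2\lambda})^n\le e^{-n e^{-\eps^2\lambda}}\le \frac{1}{2n}$ follows directly for the extremal admissible $\eps$. You instead estimate the full lower tail $p=\pr[\pois(\lambda)\le(1-\eps)\lambda]$ by normal approximation (Berry--Esseen) and a Mills-ratio lower bound on $\Phi(-\eps\sqrt\lambda)$.

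The trouble with your route, which you flag yourself, is that Berry--Esseen gives a usable lower bound on $p$ only once its additive error $O(1/\sqrt\lambda)$ is dominated by the $\Theta(1/\sqrt n)$ size of $\Phi(-\eps\sqrt\lambda)$, i.e.\ only when $\lambda\gg n$, hence $m\gg n^2$. The lemma carries no such hypothesis, so your argument as written is not a full proof of it; this is a real gap, though a harmless one for the paper's downstream use (Theorems~\ref{thm:min_shap_small_l} and~\ref{thm:min_shap_large_l} both assume $m=\Omega(n^3)$ or larger, and Lemma~\ref{lem:min_weight} is only ever invoked there). Your closing paragraph correctly isolates $\lambda=\Theta(\log n)$ as the obstinate regime and points to the slack between the Cram\'er rate $\lambda\bigl(\eps+(1-\eps)\log(1-\eps)\bigr)$ and the target exponent $\lambda\eps^2$ as the way to close it. That is the right repair, and it is in fact also what a fully careful rendering of the paper's own argument needs: the printed chain passes from $1-e^{-\lambda}\lambda^t/t!$ to $1-e^{-\lambda}(e\lambda/t)^t$ citing $t!\ge(t/e)^t$, but that Stirling inequality gives $\lambda^t/t!\le(e\lambda/t)^t$, not $\ge$, so a $\Theta(1/\sqrt t)$ factor is silently dropped there as well and must be absorbed by precisely the exponential slack you describe.
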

\begin{proof}
We define $n$ i.i.d random variables $X_1,\ldots,X_n$, sampled from the distribution $\pois(\lambda)$. We first derive a concentration bound on $\min_i X_i$, after which we will make use of Theorem~\ref{thm:poisson_approx} to obtain the desired result. By the definition of the Poisson distribution,
  \[
    \pr[\min_{i}X_i > t ] = \pr[X_1 > t]^n \leq \Pr[X_1 \neq t]^n\leq \left( 1 - e^{-\lambda} \frac{\lambda^t}{t!} \right)^n \leq \left( 1 - e^{-\lambda} \left(\frac{e\lambda}{t}\right)^t \right)^n.
  \]
  The last inequality is due to the fact that $t! \ge \left(\frac te\right)^t$, by Stirling's approximation. Setting $t = (1-\eps)\lambda$, we get
  \begin{align*}
    \label{eq:poisson_min}
   \pr[\min_i X_i > (1-\eps)\lambda] &\leq \left( 1 - e^{-\lambda} \left(\frac{e\lambda}{(1-\eps) \lambda}\right)^{(1-\eps)\lambda} \right)^n \\
   &= \left( 1 - e^{-\lambda} \left(\frac{e}{1-\eps}\right)^{(1-\eps) \lambda} \right)^n \\
   &\leq \left( 1 - e^{-\eps\lambda}e^{(1-\eps)\eps\lambda} \right)^n \\
   &= \left( 1 - e^{-\eps^2\lambda} \right)^n \leq e^{-ne^{-\eps^2\lambda}}.
  \end{align*}
  The second inequality follows from the inequality $\frac{1}{1-x} \geq e^x$, for $|x|<1$.
  The  third inequality follows from the inequality $1-x \leq e^{-x}$.

Now, for any $\eps\le \sqrt{\frac{\log\left(\frac{n}{\log(2n)}\right)}{\lambda}}$, we have
\[
e^{-ne^{-\eps^2\lambda}} \le e^{-ne^{-\log\left(\frac{n}{\log(2n)}\right)}}  = e^{-\log(2n)} = \frac{1}{2n}.
\]
A simple coupling argument shows that $\pr[\min_i w_i > (1-\eps)\lambda]$ is monotone increasing in the number of balls (here, $f(w_1,\dots,w_n)$ is 1 if and only if $\min_i w_i > (1-\eps)\lambda$). Therefore Theorem~\ref{thm:poisson_approx} holds, and we have
$$\pr[\min_i w_i > (1-\eps)\lambda] \le 2\pr[\min_i X_i > (1-\eps)\lambda] \le \frac1n,$$
which concludes the proof.
\end{proof}

\subsection{Proof of Lemma~\ref{lem:p_eps_bound}}
\pepsbound*
\begin{proof}
 The idea of the proof is to replace $w(A)$ by the weight of a random set of size $\ell-1$. A simple coupling argument shows that
\begin{eqnarray*}
\teps & \geq & \pr_{A \inR \setbinom{N}{\ell-1}}\left[w(A)+w_1 \geq q : w_1= \frac{m}{n}-\eps\sqrt{\frac{m\log n}{n}}\right] \\
			& = & \pr\left[\bin(m,\tfrac{\ell-1}{n}) \geq  \frac{(\ell-1)m}{n} + \eps\sqrt{\frac{m\log n}{n}}\right],
\end{eqnarray*}
 using the second part of Lemma~\ref{lem:prob-est}.

As in the proof of Lemma~\ref{lem:relating-ell-ell-1}, since $m \geq 4n^3$, we can use the Berry--Esseen theorem to estimate the latter expression up to an additive error of $\frac{1}{n}$:
\begin{eqnarray*}
\teps 	& \geq & \pr\left[\bin(m,\tfrac{\ell-1}{n}) \geq  \frac{(\ell-1)m}{n} + \eps\sqrt{\frac{m\log n}{n}}\right] \\
				& \geq & \pr\left[\nor(\tfrac{(\ell-1)m}{n},\tfrac{(\ell-1)m}{n}(1-\tfrac{\ell-1}{n}))  \geq \frac{(\ell-1)m}{n} + \eps\sqrt{\frac{m\log n}{n}}\right] - \frac{1}{n}.
\end{eqnarray*}
In order to estimate the latter probability, we use the bound $\pr[\nor(0,1) \geq x] \geq 1/2 - \frac{x}{\sqrt{2\pi}}$ (for $x \geq 0$), which follows from $\pr[\nor(0,1) \geq 0] = 1/2$ and the fact that the density of $\nor(0,1)$ is bounded by $1/\sqrt{2\pi}$. In our case,
\[
x = \eps\sqrt{\frac{m\log n}{n}} \Big/ \sqrt{\tfrac{(\ell-1)m}{n}(1-\tfrac{\ell-1}{n})} \leq
\eps\sqrt{\frac{m\log n}{n}} \Big/ \sqrt{\gamma^2 m} =
\eps\sqrt{\frac{\log n}{\gamma^2n}}.
\]
Therefore
\[
\teps \geq \frac{1}{2} - \frac{\eps}{2\pi\gamma} \sqrt{\frac{\log n}{n}} - \frac{1}{n}.
\]
\end{proof}

\section{Missing Proofs From Section~\ref{sec:balls_bins_exponential}}
\label{app:balls-and-bins-exponential}
\subsection{Proof of Lemma~\ref{lem:super-increasing}}
\superincreasing*
\begin{proof}
	The proof uses Bernstein's inequality with a subsequent application of the union bound.
	Consider a sequence $w_1 \leq w_2 \leq \cdots \leq w_n$. The sequence is clearly super-increasing if for every $i=2,\ldots,n$, $w_{i}/w_{i-1} \geq 2$, and $w_1 > 0$. We now lower bound the probability of this event, by upper-bounding the probability of the following bad events: $E_i$ is the event that $w_i < 2w_{i-1}$ (for $i=2,\ldots,n$), and $E_1$ is the event that $w_1 = 0$. A union bound shows that the sequence $\mathbf{w}$ is super-increasing with probability at least $1 - \sum_{i=1}^n \Pr[E_i]$.
\yf{Below you claimed that the probability $\pr[\frac{w_i}{w_{i-1}} < 2]$ is non-increasing in $i$, but this is not at all obvious!}

First note that the probability that voter $j$ votes for candidate $i$ is equal to
\[ p_i = \frac{\rho^{n-i}}{\sum_{i=1}^n \rho^{n-i}} = \frac{\rho^{n-i}(1-\rho)}{1 - \rho^n} = \Theta(\rho^{n-i}). \]
Bounding the probability of $E_1$ is easy:
\[
 \Pr[E_1] = (1-p_1)^m \leq e^{-p_1 m} = e^{-\Theta(\rho^{n-1} m)}.
\]

In order to bound the probability of $E_i$ for $i \neq 1$, consider the random variable $X = 2w_{i-1} - w_i$. This random variable is a sum of $m$ i.i.d. random variables $X^{(1)},\ldots,X^{(m)}$ corresponding to the different voters with the following distribution:
\[ X^{(j)} = \begin{cases} 2 & \text{ w.p. } p_{i-1}, \\ -1 & \text{ w.p. } p_i, \\ 0 & \text{ w.p. } 1-p_{i-1}-p_i. \end{cases} \]
Using the identity $p_{i-1} = \rho p_i$, the moments of $X$ are
\begin{align*}
\E[X] &= m\E[X^{(j)}] = (2\rho-1) p_im = \Theta((2\rho-1) \rho^{n-i}m), \\
\var[X] &= m(\E[X^{(j)2}] - \E[X^{(j)}]^2) = (4\rho+1) p_im - (2\rho-1)^2 p_i^2m = O(\rho^{n-i}m).
\end{align*}
Since $|X^{(j)} - \E[X^{(j)}]| = O(1)$, Bernstein's equality gives
\begin{align*}
\Pr[E_i] &= \Pr[X > 0] \\ &\leq \exp -\frac{\tfrac{1}{2} \E[X]^2}{\var[X] + O(\E[X])} \\ &=
\exp -\frac{\Theta((2\rho-1)^2 \rho^{2(n-i)}m^2)}{O(\rho^{n-i}m)} \\ &=
\exp -\Omega((2\rho-1)^2 \rho^{n-i}m).
\end{align*}

Summarizing,
\[ \sum_{i=1}^n \Pr[E_i] \leq e^{-\Theta(\rho^{n-1} m)} + \sum_{i=2}^n e^{-\Omega((2\rho-1)^2 \rho^{n-i}m)}. \]
When $m \geq C\rho^{-n} (2\rho-1)^{-2} \log n$ for an appropriate $C$, all the terms are $O(1/n^2)$, and so the total error probability is $O(1/n)$, proving the first part of lemma. As $m\to\infty$, all the terms tend to~$0$, and so the total error probability tends to~$0$, proving the second part of the lemma.
\end{proof}
\subsection{Proof of Lemma~\ref{lem:super-increasing2}}
\superincreasingtwo*
\begin{proof}
Suppose that $\mathbf{w}$ is super-increasing.
Lemma~\ref{lem:super-increasing-rep} implies that $I^{\mathbf{w}}(P) = (\tilde w(P), \tilde w(P^+)]$, since both $\mathbf{p}$ and $\mathbf{w}$ are super-increasing (a priori, it could be that $P^+$ would have different values when defined with respect to $\mathbf{p}$ and to $\mathbf{w}$).
The idea of the proof is to show that with high probability, $mT \in I^{\mathbf{w}}(P)$, and then the lemma follows from Proposition~\ref{pro:super-increasing-formula}. We do that by upper-bounding the probability of the following two bad events: $\tilde w(P) \geq mT$ and $\tilde w(P^+) < mT$.

The random variable $\tilde w(P)$ is a sum of $m$ i.i.d. indicator random variables which are $1$ with probability $\tilde p(P)$. Therefore $\E[\tilde w(P)] = m \tilde p(P)$. Hoeffding's inequality shows that
\[
\Pr[ \tilde w(P) \geq mT ] \leq \Pr[ \tilde w(P) \geq \E[\tilde w(P)] + m\Delta ] \leq e^{-2\Delta^2m}.
\]
Similarly $\Pr[ \tilde w(P^+) < mT ] \leq e^{-2\Delta^2m}$. When $\Delta \geq \sqrt{\log n/m}$, both error probabilities are at most $1/(nm)^2$.
\end{proof}


\section{Missing Proofs from Section~\ref{sec:super-increasing}}
\subsection{A Greedy Algorithm for Finding $\pseqall(q)$} \label{sec:si-greedy}
Given a point $q \in (0,w(N)]$ and a vector of super-increasing weights $\vec w$, it is possible to find $\pseqall(q)$ in time $O(n)$.

\begin{lemma} \label{lem:super-increasing-greedy}
Algorithm~\ref{alg:find set} calculates $\pseqall(q)$ in linear time.
\end{lemma}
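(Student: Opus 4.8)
The plan is to first recast the object $\pseqall(q)$ in a form tailored to a greedy construction, and then verify that the algorithm builds exactly this set while spending only $O(1)$ time per agent. By Lemma~\ref{lem:super-increasing-rep}, the maps $C \mapsto w(C)$ and $C \mapsto \beta(C)$ induce the same total order on subsets of $N$, so the intervals $(w(P^-),w(P)]$ are precisely the gaps between consecutive subset-weights, listed in $\beta$-order. Hence $q \in (w(P^-),w(P)]$ iff $w(P) \ge q > w(P^-)$, and since $P^-$ is the immediate $\beta$-predecessor of $P$, this is equivalent to saying that $\pseqall(q)$ is the $\beta$-minimal subset $P$ with $w(P) \ge q$ (such a subset exists since $q \le w(N)$). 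Call a subset \emph{feasible} if its weight is at least $q$. So it suffices to show the algorithm outputs the $\beta$-minimal feasible set. The algorithm does this greedily: it scans agents $1,\dots,n$ (weights non-increasing, so heaviest first), maintaining the set $S$ chosen so far, and puts agent $i$ into $S$ exactly when omitting it would make feasibility impossible, i.e.\ when $w(S) + \sum_{j>i} w_j < q$.

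For correctness I would prove, by induction on $i$, that after agent $i$ is processed we have $S = \pseqall(q) \cap \{1,\dots,i\}$; taking $i = n$ then yields $S = \pseqall(q)$. The case $i = 0$ is trivial. For the step, let $S'$ denote the value of $S$ just before agent $i$ is processed; by the inductive hypothesis $S' = \pseqall(q) \cap \{1,\dots,i-1\}$. The rule includes $i$ iff $w(S') + \sum_{j>i} w_j < q$. If this holds and $i \notin \pseqall(q)$, then $\pseqall(q) \subseteq S' \cup \{i+1,\dots,n\}$, so $w(\pseqall(q)) \le w(S') + \sum_{j>i} w_j < q$, contradicting feasibility; hence $i \in \pseqall(q)$. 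If instead $w(S') + \sum_{j>i} w_j \ge q$, then $P' := S' \cup \{i+1,\dots,n\}$ is feasible; and if $i \in \pseqall(q)$, additivity of $\beta$ over disjoint unions gives $\beta(\pseqall(q)) = \beta(S') + \beta(\pseqall(q)\cap\{i,\dots,n\}) \ge \beta(S') + 2^{n-i}$, whereas $\beta(P') = \beta(S') + \sum_{j=i+1}^{n} 2^{n-j} = \beta(S') + 2^{n-i} - 1 < \beta(\pseqall(q))$, contradicting $\beta$-minimality; hence $i \notin \pseqall(q)$. Either way the greedy decision matches membership of $i$ in $\pseqall(q)$, closing the induction. (As a sanity check, the invariant $w(S) + \sum_{j>i} w_j \ge q$ holds initially because $q \le w(N)$ and is preserved by the rule, so the output is genuinely feasible.)

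For the running time, I would precompute the suffix sums $\Sigma_i = \sum_{j \ge i} w_j$ in a single backward pass in $O(n)$ time, and keep $w(S)$ updated incrementally. Then processing agent $i$ reduces to the single comparison $w(S) + \Sigma_{i+1} < q$, followed, if it succeeds, by the update $w(S) \leftarrow w(S) + w_i$ and $S \leftarrow S \cup \{i\}$. This is $O(1)$ per agent, hence $O(n)$ in total including preprocessing.

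I expect the one substantive step to be the exchange argument in the second paragraph, which converts the purely local greedy test into a statement about global $\beta$-minimality; this is exactly where the super-increasing hypothesis enters, through the $\beta$/$w$ order-isomorphism of Lemma~\ref{lem:super-increasing-rep} together with the fact that a single ``high bit'' $2^{n-i}$ strictly dominates the sum $\sum_{j>i} 2^{n-j}$ of all lower bits. Everything else — the reduction to $\beta$-minimal feasible sets, and the $O(n)$ time bound via suffix sums — is routine bookkeeping.
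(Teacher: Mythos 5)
Your proof is correct, and it takes a genuinely different route to the induction step than the paper. Both proofs establish the same invariant — after processing agent $i$, the maintained set equals $\pseqall(q) \cap \{1,\ldots,i\}$ — but they justify the greedy decision differently. The paper reasons directly with the interval $(w(\pseqall(q)^-), w(\pseqall(q))]$: for $i \notin \pseqall(q)$ it bounds $q$ from above by a suffix-sum inequality, and for $i \in \pseqall(q)$ it splits into the two sub-cases $\pseq_r = i$ and $\pseq_r > i$, invoking the super-increasing property explicitly in the second. You instead first reformulate $\pseqall(q)$ via Lemma~\ref{lem:super-increasing-rep} as \emph{the $\beta$-minimal feasible set}, and then run both directions of the step as contradictions: inclusion of $i$ is forced because omitting it would make even the largest available continuation infeasible, and exclusion is forced because $P' = S' \cup \{i+1,\ldots,n\}$ would be a feasible set with $\beta(P') = \beta(S') + 2^{n-i}-1 < \beta(\pseqall(q))$. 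This exchange-style argument absorbs the paper's sub-case split into a single binary-encoding inequality and is a bit more conceptual, at the cost of needing the preliminary $\beta$-minimality characterization. You are also more explicit about achieving $O(n)$ time via precomputed suffix sums and an incrementally maintained $w(S)$, a detail the paper waves off with ``it is easy to modify the algorithm.'' Both proofs are sound; yours trades a small amount of setup for a cleaner case analysis.
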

\begin{proof}
As stated, the algorithm does not in fact run in linear time, but it is easy to modify it so that it does run in linear time. It remains to prove that it calculates $\pseqall(q)$ correctly.

Let $\pseqall(q) = \pseq_0,\ldots,\pseq_r$, so that $\pseqall(q)^- = \pseq_0,\ldots,\pseq_{r-1},\pseq_r+1,\ldots,n$.
Denote by $\pseqall_i$ the value of $\pseqall$ in the algorithm after $i$ iterations of the loop. We prove by induction on $i$ that $\pseqall_i = \pseqall(q) \cap \{1,\ldots,i\}$, which shows that the algorithm returns $\pseqall(q)$. The inductive claim trivially holds for $i = 0$. Assuming that $\pseqall_{i-1} = \pseqall(q) \cap \{1,\ldots,i-1\}$, we now prove that $\pseqall_i = \pseqall(q) \cap \{1,\ldots,i\}$. We consider two cases: $i \notin \pseqall(q)$ and $i \in \pseqall(q)$.
If $i \notin \pseqall(q)$ then $q \leq w(\pseqall(q)) = w(\pseqall_{i-1}) + w(\pseqall(q) \cap \{i,\ldots,n\}) \leq w(\pseqall_{i-1}) + w(\{i+1,\ldots,n\})$, and so line~5 does not get executed.
Suppose now that $i \in \pseqall(q)$. If $\pseq_r = i$ then $q > w(\pseqall(q)^-) = w(\pseqall_{i-1}) + w(\{i+1,\ldots,n\})$, and so line~5 gets executed. If $\pseq_r > i$ then $q > w(\pseqall(q)^-) \geq w(\pseqall_{i-1}) + w_i > w(\pseqall_{i-1}) + w(\{i+1,\ldots,n\})$, since $\vec w$ is super-increasing, and so line~5 gets executed in this case as well.
\end{proof}

\begin{algorithm}
\caption{An algorithm for finding $\pseqall(q)$}
\label{alg:find set}
\begin{algorithmic}[1]
\Procedure{Find-Set}{$\vec w,q$}
	\State $\pseqall \gets \emptyset$
	\For{$i \gets 1$ \textbf{to} $n$}
	  \If{$q > w(\pseqall \cup \{i+1,\ldots,n\})$}
	    \State $\pseqall \gets \pseqall \cup \{i\}$
	  \EndIf
	\EndFor
	\Return $\pseqall$
\EndProcedure
\end{algorithmic}

\end{algorithm}

\subsection{Finding $\pseqall(q)$ when Weights are $w_i = d^{n-i}$} \label{sec:si-dary}
\begin{lemma} \label{lem:super-increasing-d-ary}
Suppose $w_i = d^{n-i}$ for some integer $d \geq 2$, and let $q \in (0,w(N)]$. Write $\lceil q \rceil$ in base $d$: $\lceil q \rceil = (t_1\ldots t_n)_d$. If the base $d$ representation only consists of the digits $0$ and $1$ then $\pseqall(q) = \{ i \in N : t_i = 1 \}$. Otherwise, let $\ell$ be the minimal index such that $t_\ell > 1$, and let $k < \ell$ be the maximal index less than $\ell$ satisfying $t_k = 0$ (the proof shows that such an index exists). Then $\pseqall(q) = \{ i \in \{1,\ldots,k-1\} : t_i = 1 \} \cup \{k\}$.
\end{lemma}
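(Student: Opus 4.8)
The plan is to prove the statement by analyzing the greedy procedure of Algorithm~\ref{alg:find set} specialized to the $d$-ary weights $w_i = d^{n-i}$, and matching its output against the base-$d$ digits of $\lceil q \rceil$. First I would reduce to integer quotas: since the weights are integers, the interval $(w(P^-), w(P)]$ containing $q$ is the same as the one containing $\lceil q \rceil$, so we may assume $q = \lceil q \rceil = (t_1\ldots t_n)_d$ with $0 \le t_i \le d-1$ and $q \in (0, w(N)]$, i.e. $1 \le q \le d^n - 1$ when $d \ge 2$ (note $w(N) = \sum_{i=1}^n d^{n-i} = \frac{d^n-1}{d-1}$, so $q$ may exceed this only in degenerate cases ruled out by hypothesis). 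The key combinatorial fact is that $\pseqall(q)$ is characterized by Lemma~\ref{lem:super-increasing-rep} as the unique $P$ with $w(P^-) < q \le w(P)$, equivalently the lexicographically/numerically smallest subset-indicator value $\beta(P)$ with $w(P) \ge q$; so it suffices to show the claimed set $Q$ satisfies $w(Q) \ge q > w(Q^-)$.

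For the easy case, if all digits $t_i \in \{0,1\}$, then $q = \sum_{i : t_i = 1} d^{n-i} = w(Q)$ where $Q = \{i : t_i = 1\}$, so $w(Q) = q \ge q$, and $w(Q^-) = w(Q) - (\text{something} > 0)$ by super-increasingness — more precisely $w(Q^-) < w(Q) = q$ since $\beta(Q^-) = \beta(Q) - 1$ forces $w(Q^-) < w(Q)$ by Lemma~\ref{lem:super-increasing-rep}. Hence $\pseqall(q) = Q$. For the general case, let $\ell$ be minimal with $t_\ell > 1$. I would first argue that there exists $k < \ell$ with $t_k = 0$: if not, then $t_1, \ldots, t_{\ell-1}$ are all $1$ (they are $\le 1$ by minimality of $\ell$ and nonzero by assumption), hence $q \ge \sum_{i=1}^{\ell-1} d^{n-i} + 2 d^{n-\ell} = d^{n-\ell}\big(\frac{d^{\ell-1}-1}{d-1}\cdot d \cdot d^{?}\ldots\big)$ — cleaner: $q \ge \sum_{i=1}^{\ell-1} d^{n-i} + 2d^{n-\ell} > \sum_{i=1}^{\ell} d^{n-i} \cdot \frac{?}{}$; I would instead just compute $\sum_{i=1}^{\ell-1} d^{n-i} + 2 d^{n-\ell} = \frac{d^n - d^{n-\ell+1}}{d-1} + 2d^{n-\ell}$ and show this exceeds $w(N) = \frac{d^n-1}{d-1}$ when $d \ge 2$, contradicting $q \le w(N)$; this is the one genuine inequality check. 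Then take $k$ maximal below $\ell$ with $t_k = 0$ and set $Q = \{i \le k-1 : t_i = 1\} \cup \{k\}$.

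It then remains to verify $w(Q^-) < q \le w(Q)$. For the upper bound $q \le w(Q)$: write $q = \sum_{i < \ell, t_i = 1} d^{n-i} + t_\ell d^{n-\ell} + \sum_{i > \ell} t_i d^{n-i}$ (using $t_i \le 1$ for $i < \ell$); and $w(Q) = \sum_{i < k, t_i = 1} d^{n-i} + d^{n-k}$. The difference $w(Q) - q \ge d^{n-k} - \sum_{i \ge k, i \ne \text{(the } t_i=1, i<\ell\text{)}} t_i d^{n-i}$; since all digits $t_i$ with $k \le i < \ell$ are $1$ (by maximality of $k$) and $t_\ell \le d-1$ and $t_i \le d-1$ for $i > \ell$, the tail is at most $\sum_{i=k+1}^{\ell-1} d^{n-i} + (d-1)\sum_{i=\ell}^{n} d^{n-i} = \sum_{i=k+1}^n (\text{stuff}) \le d^{n-k} - 1 < d^{n-k}$ by the geometric-series bound $\sum_{i > k} (d-1) d^{n-i} = d^{n-k} - 1$. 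For the lower bound $w(Q^-) < q$: $\beta(Q) - 1 = \beta(Q^-)$, and since $k \in Q$ with $k$ the largest element that was "filled in", $Q^- = \{i < k : t_i = 1\} \cup \{k+1, \ldots, n\}$, so $w(Q^-) = \sum_{i<k, t_i=1} d^{n-i} + \sum_{i=k+1}^n d^{n-i} = \sum_{i<k,t_i=1} d^{n-i} + \frac{d^{n-k}-1}{d-1}$, while $q \ge \sum_{i<k, t_i=1}d^{n-i} + \sum_{i=k+1}^{\ell-1} d^{n-i} + 2 d^{n-\ell}$ (the $t_i$ for $k<i<\ell$ all equal $1$, and $t_\ell \ge 2$); showing the latter exceeds the former reduces to $\sum_{i=k+1}^{\ell-1} d^{n-i} + 2d^{n-\ell} > \frac{d^{n-k}-1}{d-1}$, i.e. the same style of geometric inequality. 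The main obstacle I anticipate is purely bookkeeping: keeping the three ranges of indices ($i<k$, $k \le i < \ell$, $i \ge \ell$) straight and handling the carry-like behaviour at position $k$ correctly, plus the boundary edge cases ($\ell = n$, or $k$ being forced to exist); there is no conceptual difficulty once Lemma~\ref{lem:super-increasing-rep} is invoked, since everything becomes a finite arithmetic comparison of $d$-ary expansions.
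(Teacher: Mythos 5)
Your proposal is correct and follows essentially the same route as the paper's proof: characterize $\pseqall(q)$ via Lemma~\ref{lem:super-increasing-rep} as the unique $P$ with $q \in (w(P^-),w(P)]$, show $k$ exists by deriving a contradiction with $q \leq w(N)$, and then verify $w(Q^-) < q \leq w(Q)$ for the claimed set $Q$. The only cosmetic difference is that you phrase the arithmetic as explicit geometric-series bounds (reducing first to $q = \lceil q \rceil$), while the paper works directly with $q > \lceil q \rceil - 1$ and invokes the integral super-increasing inequality $w_\ell \geq \sum_{j>\ell} w_j + 1$; the content is the same.
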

\begin{proof}
 Suppose first that $t_i \in \{0,1\}$ for all $i \in N$, and let $Q(q) = \{ i \in N : t_i = 1 \}$. Since $\lceil q \rceil \geq 1$, $Q(q) \neq \emptyset$. Lemma~\ref{lem:super-increasing-rep} shows that $w(Q(q)^-) < w(Q(q))$ and so $q = w(Q(q)) \in (w(Q(q)^-), w(Q(q))]$, showing that $\pseqall(q) = Q(q)$.

 Suppose next that $\ell$ is the minimal index such that $t_\ell > 1$. If $t_k = 1$ for all $k < \ell$ then
\[ q > \lceil q \rceil - 1 \geq \sum_{j=1}^{\ell-1} w_j + 2w_\ell - 1 \geq w(N), \]
since the fact that the $w_i$ are integral and super-increasing implies that
\[ w_\ell \geq \sum_{j=\ell+1}^n w_j + 1. \]
 We conclude that the maximal index $k < \ell$ satisfying $t_k = 0$ exists. Let $Q(q) = \{ i \in \{1,\ldots,k-1\} : t_i = 1 \} \cup \{k\}$. On the one hand,
\[ q \leq \lceil q \rceil \leq \sum_{j \in Q(q) \setminus \{k\}} w_j + (d-1) \sum_{j=k+1}^n w_j < w(Q(q)). \]
 On the other hand,
\begin{align*}
 q > \lceil q \rceil - 1 &\geq \sum_{j \in Q(q) \setminus \{k\}} w_j + \sum_{j=k+1}^{\ell-1} w_j + 2w_\ell - 1 \\ &\geq \sum_{j \in Q(q) \setminus \{k\}} w_j + \sum_{j=k+1}^n w_j = w(Q(q)^{-}).
\end{align*}
 Therefore $\pseqall(q) = Q(q)$.
\end{proof}

\subsection{Proof of Theorem~\ref{thm:super-increasing-formula}}
First, we recall the statement of Theorem~\ref{thm:super-increasing-formula}.
\superincreasingformula*
\begin{proof}
Lemma~\ref{lem:super-increasing-P} shows that $\sv_i^{\vec w}(q) = \sv_i^{\vec b}(\beta(A(q)))$, where $\vec b = 2^{n-1},\ldots,1$. Therefore we can assume without loss generality that $\vec w = 2^{n-1},\ldots,1$, i.e., $w_i = 2^{n-i}$, and that $q = \sum_{j \in \pseqall(q)} w_j$.

Recall that $\sv_i(q)$ is the probability that $w(P_i(\pi)) \in [q-w_i,q)$, where $\pi$ is chosen randomly from $\perms{n}$, and $P_i(\pi)$ is the set of predecessors of~$i$ in~$\pi$.
The idea of the proof is to consider the maximal $\tau \in \{1,\ldots,r+1\}$ such that $\pseq_t \in P_i(\pi)$ for all $t < \tau$. We will show that when $i \notin \pseqall(q)$, each possible value of $\tau(\pi)$ corresponds to one summand in the expression for $\shapley_i(q)$. When $i \in \pseqall(q)$, say $i = \pseq_s$, we will show that the events that $i$ is pivotal with respect to $q$ and that $i$ is pivotal with respect to $q - w_i$ are disjoint, and their union is an event having probability $1/\pseq_s\binom{\pseq_s-1}{s}$.

 Suppose that $i$ is pivotal for $\pi$ and $\tau(\pi) = \tau$. We
 start by showing that $\tau \leq r$, ruling out the case $\tau =
 r+1$. If $\tau = r+1$ then by definition
\[ w(P_i(\pi)) \geq \sum_{j \in \pseqall(q)} w_j = q, \]
 contradicting the assumption $w(P_i(\pi)) < q$. Therefore $\tau \leq
 r$, and so $\pseq_\tau$ is well-defined. We claim that if $k \in P_i(\pi)$
 for some agent $k < \pseq_\tau$ then $k \in \pseqall(q)$. Indeed, otherwise
\[
 w(P_i(\pi)) \geq \sum_{t=0}^{\tau-1} w_{\pseq_t} + w_k \geq \sum_{t=0}^{\tau-1} w_{\pseq_t} + w_{\pseq_\tau-1} >
\sum_{t=0}^{\tau-1} w_{\pseq_t} + \sum_{j=\pseq_\tau}^n w_j \geq q,
\]
 again contradicting $w(P_i(\pi)) < q$ (the third inequality made use
 of the fact that $\vec w$ is super-increasing).

 Furthermore, we claim that $\pseq_\tau \geq i$. Otherwise,
\[
 w(P_i(\pi)) \leq \sum_{t=0}^{\tau-1} w_{\pseq_t} + \sum_{j=\pseq_\tau+1}^n w_j - w_i <
 \sum_{t=0}^{\tau} w_{\pseq_t} - w_i \leq q - w_i,
\]
 contradicting the assumption $w(P_i(\pi)) \geq q - w_i$.

 Summarizing, we have shown that $\tau \leq r$, $\pseq_\tau \geq i$ and
\begin{equation} \label{eq:shapley-condition-1}
 P_i(\pi) \cap \{1,\ldots,\pseq_\tau\} = \{\pseq_0,\ldots,\pseq_{\tau-1}\}.
\end{equation}
 Denote this event $E_\tau$, and call a $\tau \leq r$ satisfying $\pseq_\tau \geq i$ \emph{legal}.

 Suppose first that $i \notin \pseqall(q)$. We have shown above that
 if $i$ is pivotal then $E_\tau$ happens for some legal $\tau$. We
 claim that the converse is also true. Indeed, given $E_\tau$ defined
 with respect to a permutation $\pi$, and for some legal $\tau$, the
 weight of $P_i(\pi)$ can be bounded as follows.
\[
 \sum_{t=0}^{\tau-1} w_{\pseq_t} \leq w(P_i(\pi)) \leq \sum_{t=0}^{\tau-1} w_{\pseq_t} + \sum_{j=\pseq_{\tau}+1}^n w_j < \sum_{t=0}^{\tau} w_{\pseq_t}.
\]
The second inequality follows from the definition of $\tau$, whereas
the third inequality follows as before from the definition of a super-increasing sequence.
 The upper bound is clearly at most $q$, and the lower bound satisfies
\[
\sum_{t=0}^{\tau-1} w_{\pseq_t} \geq q - \sum_{j=\pseq_\tau}^n w_j > q - w_{\pseq_\tau-1} \geq q - w_i,
\]
 since $i < \pseq_\tau$.

 It remains to calculate $\Pr[E_\tau]$. The event $E_\tau$ states that the restriction of $\pi$ to $\{1,\ldots,\pseq_\tau\}$ consists of the elements $\{\pseq_0,\ldots,\pseq_{\tau-1}\}$ in some order, followed by $i$ (recall that $i \leq \pseq_\tau$). For each of the $\tau!$ possible orders, the probability of this is $1/\pseq_\tau\cdots(\pseq_\tau-\tau) = (\pseq_\tau-\tau-1)!/\pseq_\tau!$, and so
\begin{equation} \label{eq:shapley-formula}
 \Pr[E_\tau] = \frac{\tau!(\pseq_\tau-\tau-1)!}{\pseq_\tau!} = \frac{1}{\pseq_\tau\binom{\pseq_\tau-1}{\tau}}.
\end{equation}
 Summing over all legal $\tau$, we obtain the formula in the statement of the theorem. This completes the proof in the case $i \notin \pseqall(q)$.

 Suppose next that $i \in \pseqall(q)$, say $i = \pseq_s$. Since $\pseq_\tau \geq \pseq_s = i$ while $i \notin P_i(\pi)$, we deduce that $\tau = s$. Therefore the event $E_s$ happens.
 Conversely, when $E_s$ happens,
\[ w(P_i(\pi)) \leq \sum_{t=0}^{s-1} w_{\pseq_t} + \sum_{j=\pseq_s+1}^n w_j < \sum_{t=0}^s w_{\pseq_t} \leq q. \]
 Therefore $i$ is pivotal (with respect to $q$) if and only if $E_s$ happens and $w(P_i(\pi)) \geq q - w_i$.

 It is easy to check that $\pseqall(q - w_i) = \pseqall(q) \setminus \{i\} = \pseq_0,\ldots,\pseq_{s-1},\pseq_{s+1},\ldots,\pseq_r$. The argument above shows that if $i$ is pivotal with respect to $q - w_i$ then for some $\tau' \geq s+1$,
\[ P_i(\pi) \cap \{1,\ldots,\pseq_{\tau'}\} = \{\pseq_0,\ldots,\pseq_{s-1},\pseq_{s+1},\ldots,\pseq_{\tau'-1}\}. \]
 In particular, the event $E_s$ happens.
 Conversely, when $E_s$ happens,
 \[ w(P_i(\pi)) \geq \sum_{t=0}^{s-1} w_{\pseq_t} \geq q - w_{\pseq_s} - \sum_{j=\pseq_s+1}^n w_j > (q - w_{\pseq_s}) - w_{\pseq_s}. \]
 Therefore $i$ is pivotal with respect to $q - w_i$ if and only if $E_s$ happens and $w(P_i(\pi)) < q - w_i$. We conclude that
\[
 \Pr[w_i \text{ is pivotal with respect to } q] = \Pr[E_s] - \Pr[w_i \text{ is pivotal with respect to } q - w_i].
\]
 Above we have calculated $\Pr[E_s] = 1/\pseq_s\binom{\pseq_s-1}{s}$, and we obtain the formula in the statement of the theorem.
\end{proof}

\subsection{Proof of Lemma~\ref{lem:super-increasing-adjacent}}
To prove Lemma~\ref{lem:super-increasing-adjacent}, we will need some combinatorial identities.

\begin{lemma} \label{lem:super-increasing-identity}
 Let $p,t$ be integers satisfying $p > t \geq 1$. Then
\[  \frac{1}{p\binom{p-1}{t}} + \frac{1}{p\binom{p-1}{t-1}} = \frac{1}{(p-1)\binom{p-2}{t-1}}. \]
\end{lemma}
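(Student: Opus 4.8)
The plan is to reduce everything to factorials and then use the single arithmetic fact $t + (p-t) = p$. First I would rewrite each reciprocal binomial coefficient: since $p\binom{p-1}{t} = \frac{p!}{t!(p-1-t)!}$, we have $\frac{1}{p\binom{p-1}{t}} = \frac{t!(p-1-t)!}{p!}$, and similarly $\frac{1}{p\binom{p-1}{t-1}} = \frac{(t-1)!(p-t)!}{p!}$. The hypothesis $p > t \geq 1$ guarantees that all factorials occurring here (and in $\binom{p-2}{t-1}$ on the right-hand side) are of nonnegative integers — in particular $p-1-t \geq 0$ and $t-1\geq 0$ — so these rewritings are legitimate.

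Next I would add the two fractions over the common denominator $p!$ and factor $(t-1)!(p-1-t)!$ out of the numerator, using $t! = t\,(t-1)!$ and $(p-t)! = (p-t)(p-1-t)!$:
\[
\frac{t!(p-1-t)! + (t-1)!(p-t)!}{p!}
= \frac{(t-1)!(p-1-t)!\,\bigl(t + (p-t)\bigr)}{p!}
= \frac{(t-1)!(p-1-t)!\,p}{p!}
= \frac{(t-1)!(p-1-t)!}{(p-1)!}.
\]
Finally I would observe that the right-hand side of the claimed identity equals the same quantity: $(p-1)\binom{p-2}{t-1} = \frac{(p-1)!}{(t-1)!(p-1-t)!}$, hence $\frac{1}{(p-1)\binom{p-2}{t-1}} = \frac{(t-1)!(p-1-t)!}{(p-1)!}$, which is exactly the expression just obtained. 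This proves the identity.

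There is essentially no obstacle here; the only points requiring care are the bookkeeping of which factorial arguments are nonnegative (all handled by $p > t \geq 1$) and the cancellation $p/p! = 1/(p-1)!$. As an alternative one could use the Beta-integral representation $\frac{1}{p\binom{p-1}{t}} = \int_0^1 x^t(1-x)^{p-1-t}\,dx$, from which the identity is immediate via the pointwise equality $x^t(1-x)^{p-1-t} + x^{t-1}(1-x)^{p-t} = x^{t-1}(1-x)^{p-1-t}$; but the direct factorial computation is the shortest to write out and is the one I would include.
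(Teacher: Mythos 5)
Your proof is correct and is essentially identical to the paper's: both convert the reciprocal binomial coefficients to factorials over the common denominator $p!$, factor out $(t-1)!(p-1-t)!$, use $t+(p-t)=p$, and cancel to get $\tfrac{(t-1)!(p-1-t)!}{(p-1)!}=\tfrac{1}{(p-1)\binom{p-2}{t-1}}$. The Beta-integral remark is a nice aside but the main argument matches the paper exactly.
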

\begin{proof}
The proof is a simple calculation:
\begin{align*}
 \frac{1}{p\binom{p-1}{t}} + \frac{1}{p\binom{p-1}{t-1}} &=
 \frac{t!(p-t-1)! + (t-1)!(p-t)!}{p!} \\ &= \frac{(t-1)!(p-t-1)![t+(p-t)]}{p!} = \frac{(t-1)!(p-t-1)!}{(p-1)!} = \frac{1}{(p-1)\binom{p-2}{t-1}}.
\end{align*}
\end{proof}

\begin{lemma} \label{lem:super-increasing-sum-identity}
 Let $p,t,k$ be integers satisfying $p > t \geq 0$ and $k \geq 0$. Then
\[ \frac{1}{p\binom{p-1}{t}} - \sum_{\ell=1}^k \frac{1}{(p+\ell)\binom{p+\ell-1}{t+\ell-1}} = \frac{1}{(p+k)\binom{p+k-1}{t+k}}. \]
 In particular,
\[ \frac{1}{p\binom{p-1}{t}} = \sum_{\ell=1}^\infty \frac{1}{(p+\ell)\binom{p+\ell-1}{t+\ell-1}}. \]
\end{lemma}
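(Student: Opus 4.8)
The plan is to establish the finite identity by induction on $k$, with Lemma~\ref{lem:super-increasing-identity} serving as the inductive engine, and then to deduce the infinite sum by letting $k\to\infty$.

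First I would dispatch the base case $k=0$: here the sum on the left-hand side is empty, so the left-hand side is just $\frac{1}{p\binom{p-1}{t}}$, which coincides with the right-hand side $\frac{1}{(p+0)\binom{p+0-1}{t+0}}$; this uses only the hypothesis $p>t$. For the inductive step, I would assume the claim for some $k\ge 0$ and prove it for $k+1$. Splitting off the last summand and invoking the inductive hypothesis, the left-hand side for $k+1$ equals $\frac{1}{(p+k)\binom{p+k-1}{t+k}} - \frac{1}{(p+k+1)\binom{p+k}{t+k}}$. Now I would apply Lemma~\ref{lem:super-increasing-identity} with the substitution $p\mapsto p+k+1$, $t\mapsto t+k+1$; its hypotheses $p+k+1 > t+k+1 \ge 1$ hold since $p>t$ and $t,k\ge 0$. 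The lemma then yields $\frac{1}{(p+k+1)\binom{p+k}{t+k+1}} + \frac{1}{(p+k+1)\binom{p+k}{t+k}} = \frac{1}{(p+k)\binom{p+k-1}{t+k}}$, and rearranging shows that the displayed expression equals $\frac{1}{(p+k+1)\binom{p+k}{t+k+1}}$, which is precisely the right-hand side for $k+1$. This closes the induction.

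For the infinite-sum statement I would let $k\to\infty$ in the finite identity. Since $p>t$, the binomial coefficient on the right can be rewritten as $\binom{p+k-1}{t+k} = \binom{p+k-1}{p-t-1} \ge 1$, and the factor $p+k$ diverges, so $\frac{1}{(p+k)\binom{p+k-1}{t+k}}\to 0$; hence the partial sums converge to $\frac{1}{p\binom{p-1}{t}}$. Essentially every step here is a routine calculation, and the ``obstacle'' is minor: the only point demanding a moment's care is aligning the parameter substitution in Lemma~\ref{lem:super-increasing-identity} so that both its hypothesis $p'>t'\ge 1$ is satisfied and its conclusion produces exactly the two terms that appear after applying the inductive hypothesis — and this is immediate given $p>t$ and $t,k\ge 0$.
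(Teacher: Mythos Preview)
Your proof is correct and follows essentially the same approach as the paper: induction on $k$ with Lemma~\ref{lem:super-increasing-identity} supplying the inductive step, followed by taking $k\to\infty$ for the infinite identity. Your write-up is actually a bit more explicit than the paper's in checking the hypotheses of Lemma~\ref{lem:super-increasing-identity} and justifying why the right-hand side tends to zero.
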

\begin{proof}
 The proof is by induction on $k$. If $k = 0$ then there is nothing to prove. For $k > 0$ we have
\[
 \frac{1}{p\binom{p-1}{t}} - \sum_{\ell=1}^k \frac{1}{(p+\ell)\binom{p+\ell-1}{t+\ell-1}} =
 \frac{1}{(p+k-1)\binom{p+k-2}{t+k-1}} - \frac{1}{(p+k)\binom{p+k-1}{t+k-1}} = \frac{1}{(p+k)\binom{p+k-1}{t+k}},
\]
 using Lemma~\ref{lem:super-increasing-identity}. The second expression of
 the lemma follows from rearranging the first formula and taking the
 limit $k \rightarrow \infty$.
\end{proof}
We are now ready to prove Lemma~\ref{lem:super-increasing-adjacent}. First, recall the statement of the lemma.
\superincreasingadjacent*
\begin{proof}
 For the first item, since $i+1 \notin \pseqall(q)$ then $\pseq_t > i$ iff $\pseq_t > i+1$, and so
\[
 \shapley_i(q) = \sum_{\substack{t \in \{0,\ldots,r\}\colon\\ \pseq_t > i}} \frac{1}{\pseq_t\binom{\pseq_t-1}{t}} =
 \sum_{\substack{t \in \{0,\ldots,r\}\colon\\ \pseq_t > i+1}} \frac{1}{\pseq_t\binom{\pseq_t-1}{t}} = \shapley_{i+1}(q).
\]

 For the second item, suppose that $i+1 = \pseq_s$. We have
\begin{align*}
 \shapley_i(q) - \shapley_{i+1}(q) &= \sum_{t=s}^r \frac{1}{\pseq_t\binom{\pseq_t-1}{t}} -
 \left[ \frac{1}{\pseq_s\binom{\pseq_s-1}{s}} - \sum_{t=s+1}^r \frac{1}{\pseq_t\binom{\pseq_t-1}{t-1}} \right] \\ &=
 \sum_{t=s+1}^r \left[ \frac{1}{\pseq_t\binom{\pseq_t-1}{t}} + \frac{1}{\pseq_t\binom{\pseq_t-1}{t-1}} \right] =
 \sum_{t=s+1}^r \frac{1}{(\pseq_t-1) \binom{\pseq_t-2}{t-1}},
\end{align*}
 using Lemma~\ref{lem:super-increasing-identity}.
 Therefore $\shapley_i(q) \geq \shapley_{i+1}(q)$, with equality if and only if $s = r$.

 For the third item, suppose that $i = \pseq_s$. We have
\begin{align*}
 \shapley_i(q) - \shapley_{i+1}(q) &= \frac{1}{\pseq_s\binom{\pseq_s-1}{s}} - \sum_{t=s+1}^r \frac{1}{\pseq_t\binom{\pseq_t-1}{t-1}} - \sum_{t=s+1}^r \frac{1}{\pseq_t\binom{\pseq_t-1}{t}} \\ &=
 \frac{1}{\pseq_s\binom{\pseq_s-1}{s}} - \sum_{t=s+1}^r \frac{1}{(\pseq_t-1)\binom{\pseq_t-2}{t-1}},
\end{align*}
 using Lemma~\ref{lem:super-increasing-identity}. The same lemma also implies that the expression $1/p\binom{p}{t-1}$ is decreasing in $p$. Since $i+1 \notin \pseqall(q)$, if $\pseq_{s+1}$ exists then $\pseq_{s+1} \geq \pseq_s + 2$, and in general $\pseq_{s+\ell} \geq \pseq_s + \ell + 1$. Therefore
\[
 \shapley_i(q) - \shapley_{i+1}(q) \geq
 \frac{1}{\pseq_s\binom{\pseq_s-1}{s}} - \sum_{\ell=1}^{r-s} \frac{1}{(\pseq_s+\ell)\binom{\pseq_s+\ell-1}{s+\ell-1}} =
 \frac{1}{(\pseq_s+r-s)\binom{\pseq_s+r-s-1}{r}} > 0,
\]
 using Lemma~\ref{lem:super-increasing-sum-identity}.

 For the fourth item, suppose that $i = \pseq_s$. We have
\begin{align*}
 \shapley_i(q) - \shapley_{i+1}(q) &=
 \left[\frac{1}{\pseq_s\binom{\pseq_s-1}{s}} - \sum_{t=s+1}^r \frac{1}{\pseq_t\binom{\pseq_t-1}{t-1}}\right] -
 \left[\frac{1}{\pseq_{s+1}\binom{\pseq_{s+1}-1}{s+1}} - \sum_{t=s+2}^r \frac{1}{\pseq_t\binom{\pseq_t-1}{t-1}}\right] \\ &=
 \frac{1}{\pseq_s\binom{\pseq_s-1}{s}} - \frac{1}{\pseq_{s+1}\binom{\pseq_{s+1}-1}{s+1}} - \frac{1}{\pseq_{s+1}\binom{\pseq_{s+1}-1}{s}} = 0,
\end{align*}
 using Lemma~\ref{lem:super-increasing-identity} together with $\pseq_{s+1} = \pseq_s+1$.
\end{proof}

\subsection{Proof of Lemma~\ref{lem:super-increasing-jumps}}
First, let us recall the statement of Lemma~\ref{lem:super-increasing-jumps}.
\superincreasingjumps*

\begin{proof}
Define $\shapley_+ = \shapley_i(w(P))$ and $\shapley_- = \shapley_i(w(P^-))$.
Let $P = \pseq_0,\ldots,\pseq_r$. We have $P^- = \pseq_0,\ldots,\pseq_{r-1},\pseq_r+1,\ldots,n$.

Suppose first that $i > \pseq_r$, and let $s$ be the index of $i$ in the sequence $P^-$. According to Theorem~\ref{thm:super-increasing-formula}, $\shapley_+ = 0$ and
\[
 \shapley_- = \frac{1}{i\binom{i-1}{s}} - \sum_{\ell=1}^{n-i} \frac{1}{(i+\ell)\binom{i+\ell-1}{s+\ell-1}} = \frac{1}{n\binom{n-1}{s+n-i}}.
\]
We see that $i \in P^-$ and $\shapley_- > \shapley_+$.
Furthermore, $|\shapley_+ - \shapley_-| \leq \frac{1}{n(n-1)}$ unless $s+n-i \in \{0,n-1\}$. If $s+n-i = 0$ then $s = 0$ and $i = n$, implying $P^- = \{n\}$ and so $P = \{n-1\}$. If $s+n-i = n-1$ then $s = i-1$ and so $P^- = 1,\ldots,n$, which is impossible.

Suppose next that $i = \pseq_r$. According to the theorem,
\[
 \shapley_+ - \shapley_- = \frac{1}{i\binom{i-1}{r}} - \sum_{\ell=1}^{n-i} \frac{1}{(i+\ell)\binom{i+\ell-1}{r+\ell-1}} = \frac{1}{n\binom{n-1}{r+n-i}}.
\]
We see that $i \notin P^-$ and $\shapley_+ > \shapley_-$.
Furthermore, $|\shapley_+ - \shapley_-| \leq \frac{1}{n(n-1)}$ unless $r+n-i \in \{0,n-1\}$. If $r+n-i = 0$ then $r = 0$ and $i = n$, and so $P = \{n\}$. If $r + n-i = n$ then $r = i-1$ and so $P = 1,\ldots,i$.

Finally, suppose that $i < \pseq_r$. If $i \notin P$ then
\[
 \shapley_+ - \shapley_- = \frac{1}{\pseq_r\binom{\pseq_r-1}{r}} - \sum_{\ell=1}^{n-\pseq_r} \frac{1}{(\pseq_r+\ell)\binom{\pseq_r+\ell-1}{r+\ell-1}} = \frac{1}{n\binom{n-1}{r+n-\pseq_r}}.
\]
We see that $i \notin P^-$ and $\shapley_+ > \shapley_-$.
Furthermore, $|\shapley_+ - \shapley_-| \leq \frac{1}{n(n-1)}$ unless $r+n-\pseq_r \in \{0,n-1\}$. If $r+n-\pseq_r = 0$ then $r = 0$ and $\pseq_r = n$, and so $P = \{n\}$. If $r+n-\pseq_r = n-1$ then $\pseq_r = r+1$, which implies $P = \{1,\ldots,r+1\}$. However, this contradicts the assumption $i \notin P$.

If $i < \pseq_r$ and $i \in P$ then
\[
 \shapley_- - \shapley_+ = \frac{1}{\pseq_r\binom{\pseq_r-1}{r-1}} - \sum_{\ell=1}^{n-\pseq_r} \frac{1}{(\pseq_r+\ell)\binom{\pseq_r+\ell-1}{r+\ell-2}} = \frac{1}{n\binom{n-1}{r+n-\pseq_r-1}}.
\]
We see that $i \in P^-$ and $\shapley_- > \shapley_+$.
Furthermore, $|\shapley_+ - \shapley_-| \leq \frac{1}{n(n-1)}$ unless $r+n-\pseq_r-1 \in \{0,n-1\}$. If $r+n-\pseq_r-1 = 0$ then $r = 1$ and $\pseq_r = n$, and so $P = \{i,n\}$. If $r+n-\pseq_r-1 = n-1$ then $\pseq_r = r$, which is impossible.
\end{proof}
\subsection{A Note on the Limiting Behavior of the Shapley Value under Super-Increasing Weights}\label{app:super-increasing-limits}
Given a super-increasing sequence $w_1,\ldots,w_n$ (where again, $w_1 > w_2 > \dots> w_n$) and some $m \in N$, let us write $\vec w|_m$ for $(w_1,\dots,w_m)$ and $[m]$ for $\{1,\dots,m\}$. We write $\sv_i^{\vec w|_m}(q)$ for the Shapley value of agent $i \in [m]$ in the weighted voting game in which the set of agents is $[m]$, the weights are $\vec w|_m$, and the quota is $q$. We also write $\pseqall|_m(q)$ for the set $P \subseteq [m]$ such that $q \in (w|_m(P^-),w|_m(P)]$.

The following lemma relates $\shapley^{\vec w}_i(q)$ and $\shapley^{\vec w|_m}_i(q)$.

\begin{restatable}{lem}{superincreasingextend} \label{lem:super-increasing-extend}
 Let $m \in N$ and $i \in [m]$, and let $q \in (0,w([m])]$. Then
\jo{the following line would not compile in its original form. I
  copied it and fixed the bug, to the best of my understanding. The
  original equation is commented out in the .tex file. Please check.}
\[\shapley^{\vec w|_m}_i(q) = \shapley_i^{\vec w}(w(\pseqall|_m(q))). \]
\end{restatable}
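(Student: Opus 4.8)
The plan is to reduce both sides to the closed-form expression of Theorem~\ref{thm:super-increasing-formula}. The crucial feature of that formula is that the number it assigns to $\shapley_i$ depends only on the agent~$i$ and on the set $\pseqall(q)=\{\pseq_0,\dots,\pseq_r\}$; it does not involve the number of agents or the numerical values of the weights. Hence it suffices to prove the single set identity
\[ \pseqall\bigl(w(\pseqall|_m(q))\bigr) = \pseqall|_m(q), \]
where $\pseqall$ on the left is computed with respect to the full vector $\vec w$ and $\pseqall|_m$ with respect to $\vec w|_m$. Granting this, Theorem~\ref{thm:super-increasing-formula} applied to the $m$-agent game at quota $q$ and to the $\vec w$-game at quota $w(\pseqall|_m(q))$ returns literally the same expression in $i$ and in the common set $\{\pseq_0,\dots,\pseq_r\}$, which is exactly the claim.

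To prove the identity, write $P:=\pseqall|_m(q)$. Because $q\in(0,w([m])]$, the defining condition $q\in(w|_m(P^-),w|_m(P)]$ forces $P\neq\emptyset$; since also $P\subseteq[m]$, it is a non-empty subset of the index set of $\vec w$, and $w(P)\le w([m])$ does not exceed the total weight of $\vec w$, so $w(P)$ is a legitimate quota in the $\vec w$-game. Now recall from Lemma~\ref{lem:super-increasing-rep} and the definition of $\pseqall$ that the half-open intervals $(w(C^-),w(C)]$, over non-empty $C$, partition the admissible range of quotas for $\vec w$, and that $\pseqall(q')$ is the unique such $C$ containing $q'$. Taking $q'=w(P)$: we have $w(P)\le w(P)$ trivially, and $w(P^-)<w(P)$ by Lemma~\ref{lem:super-increasing-rep}, since $\beta(P^-)=\beta(P)-1<\beta(P)$ (with $\beta$ now taken with respect to $\vec w$). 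Thus $w(P)$ lies in the interval belonging to $P$ itself, so $\pseqall(w(P))=P$.

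It remains only to assemble the pieces: $\vec w|_m$ is super-increasing as a prefix of a super-increasing sequence, $q\in(0,w([m])]$, and $w(P)$ is an admissible quota for $\vec w$, so Theorem~\ref{thm:super-increasing-formula} applies to both games and yields $\shapley_i^{\vec w|_m}(q)=\shapley_i^{\vec w}(w(P))=\shapley_i^{\vec w}(w(\pseqall|_m(q)))$. I do not anticipate a real obstacle here; the only things to make explicit are that Theorem~\ref{thm:super-increasing-formula} is independent of the ambient number of agents (so passing between the $[m]$-game and the $\vec w$-game changes nothing), and that if $\vec w$ is read as an infinite super-increasing sequence, then $\shapley_i^{\vec w}$ and $\pseqall$ are understood as the natural extensions of the finite definitions, in which case $\pseqall(w(P))=P$ is still immediate --- e.g.\ from the greedy description of $\pseqall$ in Appendix~\ref{sec:si-greedy}, whose run on input $w(P)$ reconstructs $P$ because $w(P)$ is precisely the right endpoint of $P$'s interval. (Alternatively, one can bypass Theorem~\ref{thm:super-increasing-formula}: Lemma~\ref{lem:super-increasing-P} gives $\shapley_i^{\vec w|_m}(q)=\shapley_i^{(2^{m-1},\dots,1)}(\beta|_m(P))$ and $\shapley_i^{\vec w}(w(P))=\shapley_i^{(2^{n-1},\dots,1)}(\beta(P))$ with $\beta(P)=2^{n-m}\beta|_m(P)$, and since agents $m+1,\dots,n$ in the power-of-two game have total weight $2^{n-m}-1$, strictly below the weight $2^{n-m}$ of agent~$m$, a block/restriction argument equates the two Shapley values.)
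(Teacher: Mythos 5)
Your proposal is correct and matches the paper's argument: both reduce to the observation that the formula of Theorem~\ref{thm:super-increasing-formula} is a function of $i$ and the set $\pseqall(q)$ alone, and then invoke the set identity $\pseqall(w(\pseqall|_m(q)))=\pseqall|_m(q)$. You simply spell out in more detail why that identity holds (via Lemma~\ref{lem:super-increasing-rep} and the interval partition), whereas the paper treats it as immediate.
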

\begin{proof}
Theorem~\ref{thm:super-increasing-formula} provides a function $\Phi$ such that $\shapley_i^{\vec w|_m}(q) = \Phi(\pseqall|_m(q))$ and $\shapley_i^{\vec w}(w(\pseqall|_m(q))) = \Phi(\pseqall(w(\pseqall|_m(q)))) = \Phi(\pseqall|_m(q))$. We conclude that the Shapley values coincide.
\end{proof}

Therefore the plot of $\shapley^{\vec w|_m}_i$ can be readily obtained from that of $\shapley^{\vec w}_i$. This suggests looking at the limiting case of an \emph{infinite} super-increasing sequence $(w_i)_{i=1}^\infty$, which is a sequence satisfying $w_i > 0$ and $w_i \geq \sum_{j=i+1}^\infty w_j$
for all $i \geq 1$. The super-increasing condition implies that the sequence sums to some value $w(\infty) \leq 2w_1$. Lemma~\ref{lem:super-increasing-extend} suggests how to define $\shapley_i$ in this case: for $q \in (0,w(\infty))$ and $i \geq 1$, let
\[ \shapley^{(\infty)}_i(q) = \lim_{n\to\infty} \shapley^{\vec w|_n}_i(q). \]
We show that the limit exists by providing an explicit formula for it, as given in Theorem~\ref{thm:super-increasing-limit}, which is proved in the appendix. In the theorem, we consider possibly infinite subsets $P = \{a_0,\ldots,a_r\}$ of the positive integers, ordered in increasing order; when $r = \infty$, the subset is infinite. Also, the notation $\{a,\ldots,\infty\}$ (or $\{a,\ldots,r\}$ when $r=\infty$) means all integers larger than or equal to $a$.

 \begin{restatable}{thm}{superincreasinglimit} \label{thm:super-increasing-limit}
 Let $q \in (0,w(\infty))$ and let $i$ be a positive integer.
\begin{enumerate}[(a)]
 \item There exists a non-empty subset of the positive integers $P = \{a_0,\ldots,a_r\}$ such that either $q = w(P)$ or $P$ is finite and $q \in (w(P^-),w(P)]$, where $P^- = \{a_0,\ldots,a_{r-1}\} \cup \{a_r+1,\ldots,\infty\}$.
 \item The limit $\shapley^{(\infty)}_i(q) = \lim_{n\to\infty} \shapley^{\vec w|_n}_i(q)$ exists. When $i \notin P$,
\[ \shapley^{(\infty)}_i(q) = \sum_{\substack{t \in \{0,\ldots,r\}\colon\\ \pseq_t > i}} \frac{1}{\pseq_t\binom{\pseq_t-1}{t}}, \]
 and when $i \in P$, say $i = \pseq_s$, then
\[ \shapley^{(\infty)}_i(q) = \frac{1}{\pseq_s\binom{\pseq_s-1}{s}} - \sum_{\substack{t \in \{0,\ldots,r\}\colon\\ \pseq_t > i}} \frac{1}{\pseq_t\binom{\pseq_t-1}{t-1}}. \]
\end{enumerate}
\end{restatable}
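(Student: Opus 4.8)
The plan is to prove part (a) by running an infinite analogue of the greedy procedure of Algorithm~\ref{alg:find set}, and part (b) in two steps: first establish existence of the limit by a monotone–convergence (coupling) argument, and then identify the limit with the stated series by replaying the proof of Theorem~\ref{thm:super-increasing-formula} in the infinite setting. Throughout I write $\Phi$ for the closed-form expression of Theorem~\ref{thm:super-increasing-formula}, viewed (for a fixed agent $i$) as a function of the set $\pseqall(q)$ alone.

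For part (a), I would process $i=1,2,3,\dots$ in order, maintaining a set $\pseqall$, inserting $i$ into $\pseqall$ exactly when $q>w(\pseqall)+\sum_{j>i}w_j$; since $\pseqall$ only grows and after step $i$ is contained in $\{1,\dots,i\}$, the union $P$ of these sets is a well-defined set of positive integers. An easy induction shows $w(\pseqall_i\cup\{i{+}1,i{+}2,\dots\})\ge q$ for all $i$ (base case $w(\infty)>q$; the step uses the greedy rule together with $w_i\ge\sum_{j>i}w_j$), and since $\sum_{j>i}w_j\to 0$ this forces $w(P)\ge q$; the same idea (with $q>0$) gives $P\ne\emptyset$. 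If $P=\{a_0,\dots,a_r\}$ is finite, then insertion at step $a_r$ gives $q>w(P\setminus\{a_r\})+\sum_{j>a_r}w_j=w(P^-)$, while for every $i>a_r$ insertion failed and so $q\le w(P)+\sum_{j>i}w_j$, hence $q\le w(P)$; thus $q\in(w(P^-),w(P)]$. If $P$ is infinite, insertion at each $a_t$ gives $q>w(P\cap\{1,\dots,a_t-1\})$, and letting $t\to\infty$ (so $w(P\cap\{1,\dots,a_t-1\})\to w(P)$) yields $q=w(P)$.

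For the existence step of part (b), apply Theorem~\ref{thm:super-increasing-formula} to the finite vector $\vec w|_n$ (which is strictly super-increasing since $w_{n+1},w_{n+2},\dots>0$), for $n\ge i$ with $w|_n(\{1,\dots,n\})\ge q$, to get $\shapley^{\vec w|_n}_i(q)=\Phi(\pseqall|_n(q))$ (as in the proof of Lemma~\ref{lem:super-increasing-extend}; note $\Phi$ carries no other dependence on $n$). Model the random order on $\{1,\dots,n\}$ by i.i.d.\ uniforms $U_1,U_2,\dots\in[0,1]$, the predecessors of $i$ being $\{j\le n:U_j<U_i\}$; then $\shapley^{\vec w|_n}_i(q)=\Pr[\,q-w_i\le W^{(n)}_i<q\,]$ with $W^{(n)}_i:=\sum_{j\le n,\,U_j<U_i}w_j$. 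Since $\sum_j w_j<\infty$, $W^{(n)}_i$ is nondecreasing in $n$ and converges a.s.\ to $W_i:=\sum_{j:\,U_j<U_i}w_j$; writing $\shapley^{\vec w|_n}_i(q)=\Pr[W^{(n)}_i\ge q-w_i]-\Pr[W^{(n)}_i\ge q]$ and noting that both events increase in $n$ with unions $\{W_i\ge q-w_i\}$ and $\{W_i\ge q\}$, continuity of measure gives $\lim_n\shapley^{\vec w|_n}_i(q)=\Pr[\,q-w_i\le W_i<q\,]$, so the limit exists.

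It remains to identify $\Pr[\,q-w_i\le W_i<q\,]$ with the series, and here I would replay the proof of Theorem~\ref{thm:super-increasing-formula} with $w(P_i(\sigma))$ replaced by $W_i$ and $\pseqall(q)$ by the set $P=\{a_0,a_1,\dots\}$ from part (a): let $\tau$ be the largest index with $U_{a_0},\dots,U_{a_{\tau-1}}$ all below $U_i$, note that on $\{q-w_i\le W_i<q\}$ the value $\tau=\infty$ is impossible (it would force $W_i\ge w(P)\ge q$), and for finite $\tau$ run the same chain of inequalities (using $q\le w(P)$, super-increasingness, and the definition of $\tau$) to show $a_\tau\ge i$ and that the pivotal event equals, up to a null set, the disjoint union over admissible $\tau$ of $E_\tau=\{\{j\le a_\tau:U_j<U_i\}=\{a_0,\dots,a_{\tau-1}\}\}$, with $\Pr[E_\tau]=1/\bigl(a_\tau\binom{a_\tau-1}{\tau}\bigr)$ by exchangeability of $U_1,\dots,U_{a_\tau}$, exactly as in the finite computation; countable additivity then yields the formula, and convergence of the series is automatic (it also follows since $q<w(\infty)$ forces $a_t\ge t+2$ for all large $t$). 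I expect this last step to be the main obstacle: transporting the finite combinatorial argument, controlling the possibly infinitely many critical indices $\tau$, and handling the degenerate case where some super-increasing inequality is tight (e.g.\ $w_i=2^{-i}$), in which part (a) admits two choices of $P$ --- a finite one ending at $a_r$ and the one with $a_r$ replaced by the tail $\{a_r{+}1,a_r{+}2,\dots\}$ --- but the identity $\tfrac{1}{p\binom{p-1}{t}}=\sum_{\ell\ge 1}\tfrac{1}{(p+\ell)\binom{p+\ell-1}{t+\ell-1}}$ of Lemma~\ref{lem:super-increasing-sum-identity} shows both give the same $\Phi$, so the formula is well-defined.
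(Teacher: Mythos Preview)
Your proposal is correct, and the route is genuinely different from the paper's. For part~(a), the paper does not run an infinite greedy algorithm; instead it tracks the finite sets $\pseqall|_n(q)$ and splits into the case where this sequence stabilizes (yielding a finite $P$ with $q\in(w(P^-),w(P)]$) versus the case where it never does (yielding an infinite $P$ with $q=w(P)$). For part~(b), the paper never introduces the i.i.d.\ uniform coupling or the random variable $W_i$: it argues case by case, observing that when $\pseqall|_n(q)$ stabilizes the finite formula is eventually constant, and in the non-stabilizing case it sandwiches $\shapley^{\vec w|_n}_i(q)$ between $\shapley^{\vec w|_n}_i(w(P\cap[n]))$ and a value differing from it by at most $1/n$ via Lemma~\ref{lem:super-increasing-jumps}, then reads the limit off the finite formula. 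Your coupling gives a cleaner and more uniform existence argument, together with a natural probabilistic meaning for $\shapley^{(\infty)}_i(q)$ as $\Pr[q-w_i\le W_i<q]$; the price is that the identification step forces you to redo the combinatorics of Theorem~\ref{thm:super-increasing-formula} for a general $q$ (not the reduced binary $q$), where several strict inequalities become weak --- e.g.\ $w_{a_\tau-1}\ge\sum_{j\ge a_\tau}w_j$ and $\sum_{j>a_\tau}w_j\le w_{a_\tau}$ --- and you must instead close the chains using $q>w(P^-)$ (finite $P$) or the positivity of the tail $\sum_{t>\tau}w_{a_t}$ (infinite $P$). These are exactly the obstacles you flag, and they are all surmountable; the paper sidesteps them entirely by leaning on the already-proved finite results and on Lemma~\ref{lem:super-increasing-jumps}. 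One small point: your claim that $\bigcup_n\{W_i^{(n)}\ge c\}=\{W_i\ge c\}$ is only true up to the null event $\{W_i=c\}$ (since $W_i^{(n)}\uparrow W_i$ strictly in general); you invoke null sets later, so just make this explicit here as well, noting that $\Pr[W_i=c]=0$ because at most two subsets $S$ satisfy $w(S)=c$ (by the infinite analogue of Lemma~\ref{lem:super-increasing-rep}) and each fixed $S$ is the predecessor set with probability zero.
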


Lemma~\ref{lem:super-increasing-extend} easily extends to the case $n = \infty$.

\begin{lemma} \label{lem:super-increasing-extend-inf}
 Let $m \geq 1$ be an integer, let $i \in [m]$, and let $q \in (0,w([m])]$. Then $\shapley^{\vec w|_m}_i(q) = \shapley^{(\infty)}_i(w(\pseqall|_m(q)))$.
\end{lemma}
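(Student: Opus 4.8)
The plan is to mirror the proof of Lemma~\ref{lem:super-increasing-extend}, exploiting the fact that both sides of the claimed identity are computed by the \emph{same} closed-form expression once the quota is ``rounded down'' to a subset sum. First I would set $P = \pseqall|_m(q)$; since $q > 0$ and all weights are positive, $P$ is a non-empty subset of $[m]$, hence a finite non-empty set of positive integers, and I write $P = \{\pseq_0,\ldots,\pseq_r\}$ in increasing order. By definition $q \in (w|_m(P^-), w|_m(P)]$, so Theorem~\ref{thm:super-increasing-formula}, applied to the weighted voting game with weights $\vec w|_m$ and quota $q$, gives $\shapley^{\vec w|_m}_i(q) = \Phi_i(P)$, where $\Phi_i(P)$ denotes the explicit function of $P$ and $i$ appearing in that theorem (equal to $\sum_{t\colon \pseq_t>i} \frac{1}{\pseq_t\binom{\pseq_t-1}{t}}$ when $i\notin P$, and to $\frac{1}{\pseq_s\binom{\pseq_s-1}{s}} - \sum_{t\colon \pseq_t>i}\frac{1}{\pseq_t\binom{\pseq_t-1}{t-1}}$ when $i=\pseq_s\in P$). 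The key point is that this expression depends only on the set $P$, not on the ambient size $m$.

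Next I would evaluate the right-hand side. Put $q' = w(\pseqall|_m(q)) = \sum_{t=0}^r w_{\pseq_t}$. Since every $\pseq_t \leq m$, we have $0 < q' \leq w([m]) < w(\infty)$, because the tail $\sum_{j>m} w_j \geq w_{m+1}$ is strictly positive; thus $q' \in (0, w(\infty))$ and $\shapley^{(\infty)}_i(q')$ is defined. Applying Theorem~\ref{thm:super-increasing-limit} with quota $q'$: in part~(a) we are in the branch $q' = w(P)$ with $P$ a non-empty finite set of positive integers, so the set furnished by that part may be taken to be exactly our $P$; part~(b) then yields $\shapley^{(\infty)}_i(q') = \Phi_i(P)$, the identical expression (the formula in Theorem~\ref{thm:super-increasing-limit}(b), restricted to a finite $P$, coincides syntactically with the one in Theorem~\ref{thm:super-increasing-formula}). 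Combining the two paragraphs gives $\shapley^{\vec w|_m}_i(q) = \Phi_i(P) = \shapley^{(\infty)}_i(w(\pseqall|_m(q)))$, as required.

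The one point that I expect to need a sentence of care is the invocation of Theorem~\ref{thm:super-increasing-limit}(a) for the quota $q' = w(P)$: one must justify that the set extracted there may indeed be taken to be $P$. This is immediate because $q'$ is literally the subset sum $w(P)$ with $P$ finite and non-empty, so the first alternative of part~(a) applies with that very $P$; since part~(b)'s formula is phrased in terms of this set, no further disambiguation is needed (and in any case distinct subsets of a super-increasing sequence have distinct sums by Lemma~\ref{lem:super-increasing-rep}, so $P$ is the unique finite witness). Everything else is routine bookkeeping: deducing $P\neq\emptyset$ from $q>0$, checking $q'\in(0,w(\infty))$, and observing that the two closed-form formulas are the same function of $(P,i)$.
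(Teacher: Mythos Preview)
Your proof is correct but takes a different route from the paper's. You compare the two closed-form expressions directly: Theorem~\ref{thm:super-increasing-formula} gives $\shapley^{\vec w|_m}_i(q)=\Phi_i(P)$ for $P=\pseqall|_m(q)$, and Theorem~\ref{thm:super-increasing-limit}(b) gives $\shapley^{(\infty)}_i(w(P))=\Phi_i(P)$ for the same finite $P$, so the two sides agree. The paper instead avoids unpacking any formulas at all: it simply invokes Lemma~\ref{lem:super-increasing-extend} with ambient vector $\vec w|_n$ for each $n\ge m$ to get $\shapley^{\vec w|_m}_i(q)=\shapley^{\vec w|_n}_i(w(\pseqall|_m(q)))$, and then takes $n\to\infty$, which by the very definition of $\shapley^{(\infty)}_i$ yields the claim in one line. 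Your approach has the mild advantage of being self-contained once the two structure theorems are in hand, while the paper's approach is shorter and more conceptual, since it uses only the limit definition and the finite-$n$ version of the same lemma. The one delicate step you flagged (that the set in Theorem~\ref{thm:super-increasing-limit}(a) may be taken to be your $P$) is fine, and your uniqueness remark via Lemma~\ref{lem:super-increasing-rep} dispatches it.
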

\begin{proof}
Lemma~\ref{lem:super-increasing-extend} shows that for $n \geq m$, $\shapley^{\vec w|_m}_i(q) = \shapley^{\vec w|_n}_i(w(\pseqall|_m(q)))$, and therefore $\shapley^{\vec w|_m}_i(q) = \lim_{n\to\infty} \shapley^{\vec w|_n}_i(w(\pseqall|_m(q))) = \shapley^{(\infty)}_i(w(\pseqall|_m(q)))$.
\end{proof}

We conclude by showing that the limiting functions $\sv_i^{(\infty)}$ are continuous (see appendix for the proof).
\begin{restatable}{thm}{superincreasingcontinuous} \label{thm:super-increasing-continuous}
Let $i$ be a positive integer. The function $\shapley^{(\infty)}_i$ is continuous on $(0,w(\infty))$, and
$\lim_{q \to 0} \shapley^{(\infty)}_i(q) = \lim_{q \to w(\infty)} \shapley^{(\infty)}_i(q) = 0.$
\end{restatable}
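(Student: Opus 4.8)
The plan is to prove the two assertions separately, in both cases working from the closed form of $\shapley^{(\infty)}_i$ in Theorem~\ref{thm:super-increasing-limit} and the telescoping identity of Lemma~\ref{lem:super-increasing-sum-identity}. A derived fact I will use repeatedly is the evaluation $\sum_{t\ge 0}\frac{1}{(p+t)\binom{p+t-1}{t}}=\frac{1}{p-1}$ (it follows by iterating Lemma~\ref{lem:super-increasing-sum-identity}, or directly since the summand is the Beta integral $\int_0^1 x^{p-1}(1-x)^t\,dx$). Combined with the monotonicity of $p\binom{p-1}{t}$ in $p$ and with $a_t\ge a_0+t$ for any representation $\pseqall(q)=\{a_0<a_1<\cdots\}$, this gives the uniform estimate that the part of the formula in Theorem~\ref{thm:super-increasing-limit} contributed by indices $a_t\ge M$ has absolute value at most $\tfrac{1}{M-1}$, independently of $q$.

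For the boundary limits I work from the greedy description of $\pseqall(q)$: one has $\{1,\dots,k\}\cap\pseqall(q)=\emptyset$ exactly when $q\le\sum_{j>k}w_j$, and $\{1,\dots,k\}\subseteq\pseqall(q)$ exactly when $w(\infty)-q<w_k$. Hence as $q\to 0^+$ the smallest element $a_0$ of $\pseqall(q)$ tends to $\infty$; once $a_0>i$ agent $i$ lies outside $\pseqall(q)$ and every index contributes, so by the formula and the tail estimate above $\shapley^{(\infty)}_i(q)=\sum_{t\ge0}\frac{1}{a_t\binom{a_t-1}{t}}\le\frac{1}{a_0-1}\to 0$. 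As $q\to w(\infty)^-$, since $w_k\to 0$ we get $\{1,\dots,k(q)\}\subseteq\pseqall(q)$ with $k(q)\to\infty$; then $i=a_{i-1}$, and using $a_t=t+1$ for $i\le t<k(q)$ the formula collapses to $0\le\shapley^{(\infty)}_i(q)=\tfrac1i-\sum_{t\ge i}\tfrac{1}{a_t\binom{a_t-1}{t-1}}\le\tfrac1{k(q)}\to 0$, nonnegativity being inherited from the finite games $\shapley^{\vec w|_n}_i$.

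For continuity on $(0,w(\infty))$ I would show that $\shapley^{\vec w|_n}_i\to\shapley^{(\infty)}_i$ uniformly on every compact $[\alpha,\beta]\subset(0,w(\infty))$, and then conclude: given $\eps>0$, pick $n$ with the uniform error below $\eps/3$ near $q_0$; if $q_0$ lies in the interior of its $n$-game interval then $\shapley^{\vec w|_n}_i$ is locally constant there, while if $q_0=w(P)$ is an $n$-game breakpoint then $\shapley^{\vec w|_n}_i$ equals $\shapley^{(\infty)}_i(q_0)$ just to the left of $q_0$ and differs from it by one $n$-game jump, hence by at most $\tfrac1n$ (Lemma~\ref{lem:super-increasing-jumps}), just to the right; in all cases $|\shapley^{(\infty)}_i(q)-\shapley^{(\infty)}_i(q_0)|<\eps$ for $q$ near $q_0$. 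The uniform convergence I would get by writing $\shapley^{(\infty)}_i(q)-\shapley^{\vec w|_n}_i(q)=\sum_{m\ge n}\Delta_m(q)$ with $\Delta_m(q)=\shapley^{\vec w|_{m+1}}_i(q)-\shapley^{\vec w|_m}_i(q)$; by the proof of Lemma~\ref{lem:super-increasing-extend}, $|\Delta_m(q)|$ is bounded by the $(m+1)$-game jump at the set $\pseqall|_m(q)$, because strict super-increasingness ($w_{m+1}<w_{\max P}-\sum_{j=\max P+1}^{m}w_j$ for every $P\subseteq[m]$, which holds even in the degenerate case of the defining inequality) forces each $m$-game interval to split into exactly two $(m+1)$-game intervals. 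By Lemma~\ref{lem:super-increasing-jumps} this jump is $\le\tfrac1{m(m+1)}$ unless $\pseqall|_m(q)$ is one of the exceptional sets listed there; since $\pseqall|_m(q)\subseteq[m]$, the only exceptional possibility is $\pseqall|_m(q)=\{1,\dots,i\}$, and the windows in $q$ where this configuration occurs (together with $q$ lying in the lower of the two resulting subintervals) are pairwise disjoint as $m$ varies, so it occurs for at most one $m$. Hence $\sum_{m\ge n}|\Delta_m(q)|\le\sum_{m\ge n}\tfrac1{m(m+1)}+\tfrac1{n+1}<\tfrac2n$, uniformly in $q$.

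The step I expect to be the main obstacle is this uniform-convergence argument: verifying the ``splits into exactly two'' claim (where both the strict and the degenerate cases of the super-increasing inequality must be handled carefully) and the bookkeeping that keeps $\sum_m|\Delta_m(q)|$ bounded in spite of the fact that the exceptional jump at $\{1,\dots,i\}$ can itself be as large as $\tfrac1n$ — the saving point being that this configuration is visited for at most one $m$, and likewise the configurations $\pseqall|_m(q)=\{m\}$ arising near $q=0$ produce jumps at the \emph{non}-exceptional set $\{m\}$, of size only $O(1/m^2)$. An alternative, purely algebraic route avoids the finite games: prove directly from Theorem~\ref{thm:super-increasing-limit} and Lemma~\ref{lem:super-increasing-sum-identity} that the value of the formula is invariant under replacing the last element $a_r$ of a finite representing set by the tail $\{a_r+1,a_r+2,\dots\}$, and then show that as $q$ moves, $\pseqall(q)$ changes only by such refinements localized near its largest element plus a tail of total weight $\le\tfrac1{M-1}$; this amounts to the same analysis of the greedy dynamics of $q\mapsto\pseqall(q)$, and I would expect the finite-truncation version above to be the cleaner one to write down.
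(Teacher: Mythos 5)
Your proposal is correct, and for the continuity part it takes a genuinely different route from the paper. The paper proves left- and right-continuity at a fixed $q_0$ directly: for $q_j\to q_0$ it chooses representing sets $P_j$ with $q_j=w(P_j)$, observes that $k(j)=\min(P_j\setminus P)\to\infty$ (respectively $\min(P\setminus P_j)$), and then bounds $|\shapley^{(\infty)}_i(q_j)-\shapley^{(\infty)}_i(q_0)|\le 1/(k(j)-1)$ via Lemma~\ref{lem:super-increasing-sum-identity} --- a hands-on $\varepsilon$--$\delta$ argument working from Theorem~\ref{thm:super-increasing-limit}. You instead establish \emph{uniform} convergence $\shapley^{\vec w|_n}_i\to\shapley^{(\infty)}_i$ by telescoping $\Delta_m=\shapley^{\vec w|_{m+1}}_i-\shapley^{\vec w|_m}_i$; each $\Delta_m(q)$ is either $0$ (upper half of the split interval) or one $(m+1)$-game jump, controlled by Lemma~\ref{lem:super-increasing-jumps}. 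The bookkeeping you flag as the obstacle is in fact fine: the only exceptional set in $\mathbf{[}m\mathbf{]}$ for player $i$ (once $m\geq i$) is $\{1,\dots,i\}$; the windows where $\pseqall|_m(q)=\{1,\dots,i\}$ with $q$ in the lower subinterval are $(w([m])-w_i,\,w([m+1])-w_i]$, which are pairwise disjoint across $m$, so the exceptional term appears at most once; and the ``split into two'' claim follows from $(P^-_{m+1})^-=P^-_m$ and $w(P^-_{m+1})=w(P^-_m)+w_{m+1}$, giving $\sum_{m\ge n}|\Delta_m|\le \tfrac1n+\tfrac1{n+1}<\tfrac2n$. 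Your uniform-convergence route buys modularity (continuity becomes an immediate consequence of ``locally constant + uniform limit'') at the cost of this interval-splitting analysis; the paper's direct estimate avoids that analysis but handles the two one-sided limits separately and must juggle dyadic representations of $q_j$. Your treatment of the boundary limits ($q\to 0^+$, $q\to w(\infty)^-$) is essentially the same as the paper's: both track the first index where the representing set stabilizes and apply the same $1/(p-1)$ tail bound from Lemma~\ref{lem:super-increasing-sum-identity}. One small attribution slip: the ``each $m$-interval splits into exactly two $(m+1)$-intervals'' observation is not in the proof of Lemma~\ref{lem:super-increasing-extend} (which is a one-line formula-matching argument), but it does appear, in a slightly different guise, in the proof of Lemma~\ref{lem:super-increasing-partition-inf}.
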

Summarizing, we can extend the functions $\shapley^{\vec w|_n}_i$ to a continuous function $\shapley^{(\infty)}_i$ which agrees with $\shapley^{\vec w|_n}_i$ on the points $w(P)$ for $P \subseteq \{1,\ldots,n\}$. When $w_i = 2^{-i}$ then the plot of $\shapley^{(\infty)}$ has no flat areas, but when $w_i = d^{-i}$ for $d > 2$, the limiting function is constant on intervals $(w(P^-),w(P)]$. This is reflected in Figure~\ref{fig:super-increasing}.

\subsection{Proof of Theorem~\ref{thm:super-increasing-limit}}
We start with some preliminary lemmas. For a (possibly infinite) subset $P$ of the positive integers, define
\[ \beta_\infty(P) = \sum_{i \in P} 2^{-i}. \]
We have the following analog of Lemma~\ref{lem:super-increasing-rep}.

\begin{lemma} \label{lem:super-increasing-rep-inf}
 Suppose $P_1,P_2$ are two subsets of the positive integers. Then $\beta_\infty(P_1) \leq \beta_\infty(P_2)$ if and only if $w(P_1) \leq w(P_2)$. Furthermore, if $\beta_\infty(P_1) < \beta_\infty(P_2)$ then $w(P_1) < w(P_2)$.
\end{lemma}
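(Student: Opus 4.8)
The plan is to follow the idea behind the finite-case argument (Lemma~\ref{lem:super-increasing-rep}), but since ``adjacent sets'' are not available for infinite $P$, I would instead compare $P_1$ and $P_2$ at the first index on which they differ. If $P_1=P_2$ there is nothing to prove, so assume $P_1\neq P_2$ and let $i_0$ be the smallest element of $(P_1\setminus P_2)\cup(P_2\setminus P_1)$; after possibly swapping the roles of $P_1$ and $P_2$, I may assume $i_0\in P_2\setminus P_1$. Let $C=P_1\cap\{1,\dots,i_0-1\}=P_2\cap\{1,\dots,i_0-1\}$ be the common initial segment, so that $P_1\setminus C\subseteq\{i_0+1,i_0+2,\dots\}$ while $i_0\in P_2$.

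First I would record the two chains of inequalities that drive everything. On the $P_2$ side, $i_0\in P_2$ gives $\beta_\infty(P_2)\ge\beta_\infty(C)+2^{-i_0}$ and $w(P_2)\ge w(C)+w_{i_0}$. On the $P_1$ side, using the geometric identity $\sum_{j>i_0}2^{-j}=2^{-i_0}$ together with the infinite super-increasing hypothesis $\sum_{j>i_0}w_j\le w_{i_0}$, I get $\beta_\infty(P_1)\le\beta_\infty(C)+2^{-i_0}$ and $w(P_1)\le w(C)+w_{i_0}$. Concatenating, $\beta_\infty(P_1)\le\beta_\infty(P_2)$ and $w(P_1)\le w(P_2)$; running the same argument in the symmetric case $i_0\in P_1\setminus P_2$ reverses both inequalities. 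Hence the $\beta_\infty$-ordering and the $w$-ordering of $\{P_1,P_2\}$ always point the same way, which yields the biconditional $\beta_\infty(P_1)\le\beta_\infty(P_2)\Leftrightarrow w(P_1)\le w(P_2)$ (one reads off the relevant case, using the strict statement below to discard the ``wrong'' one).

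For the ``furthermore'', assume $\beta_\infty(P_1)<\beta_\infty(P_2)$, so that we are in the case $i_0\in P_2\setminus P_1$. The strict gap forces at least one of the inequalities $\beta_\infty(P_1)\le\beta_\infty(C)+2^{-i_0}$, $\beta_\infty(P_2)\ge\beta_\infty(C)+2^{-i_0}$ to be strict: equality in both would mean $P_1\supseteq\{i_0+1,i_0+2,\dots\}$ and $P_2\cap\{i_0+1,i_0+2,\dots\}=\emptyset$, giving $\beta_\infty(P_1)=\beta_\infty(C)+2^{-i_0}=\beta_\infty(P_2)$, a contradiction. In exactly that configuration the companion $w$-inequality is strict too, since all $w_j>0$: if $P_1\not\supseteq\{i_0+1,i_0+2,\dots\}$ then $w(P_1\setminus C)<\sum_{j>i_0}w_j\le w_{i_0}$, and if $P_2$ meets $\{i_0+1,i_0+2,\dots\}$ then $w(P_2)>w(C)+w_{i_0}$. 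Either way $w(P_1)<w(P_2)$.

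The only delicate point --- and the one I would be most careful about --- is the binary-expansion ambiguity $\beta_\infty(\{k\})=\beta_\infty(\{k+1,k+2,\dots\})$, which is exactly why an equality of $\beta_\infty$-values need not come from $P_1=P_2$. The argument is therefore organised so that the ambiguous configuration is excluded precisely when a strict conclusion is asserted, and it is here that the (possibly non-strict) super-increasing property, together with positivity of the weights, does the work.
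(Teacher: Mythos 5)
Your argument is essentially the same as the paper's: compare $P_1$ and $P_2$ at the first index of disagreement and play the super-increasing inequality $\sum_{j>i_0}w_j\le w_{i_0}$ off against the geometric identity $\sum_{j>i_0}2^{-j}=2^{-i_0}$. If anything you are a bit more careful: you take $i_0$ to be the minimum of the \emph{symmetric} difference and then branch on which side it falls, whereas the paper simply sets $i=\min(P_2\setminus P_1)$ and writes $w(P_2)-w(P_1)\ge w_i-\sum_{j>i}w_j$ — an inequality that needs $P_1\setminus P_2\subseteq\{i+1,i+2,\dots\}$, which is not checked there.

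That said, both your argument and the paper's run into the same trouble on the ``only if'' half of the biconditional, and the biconditional as literally stated is in fact false. Take $w_i=3^{-i}$, $P_1=\{1\}$, $P_2=\{2,3,4,\dots\}$: then $\beta_\infty(P_1)=\beta_\infty(P_2)=\tfrac12$, so $\beta_\infty(P_1)\le\beta_\infty(P_2)$, yet $w(P_1)=\tfrac13>\tfrac16=w(P_2)$. In your case split this is precisely the branch $i_0\in P_1\setminus P_2$ with $\beta_\infty$-equality: there you can only conclude $w(P_2)\le w(P_1)$, and the parenthetical ``one reads off the relevant case, using the strict statement below to discard the wrong one'' does not get you to $w(P_1)\le w(P_2)$. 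So the biconditional step in your write-up (and in the paper) is a genuine gap, rooted in the dyadic ambiguity you yourself flag in your closing paragraph. Your proof of the strict ``Furthermore'' implication, on the other hand, is complete and correct; note that its contrapositive already gives $w(P_1)\le w(P_2)\Rightarrow\beta_\infty(P_1)\le\beta_\infty(P_2)$, which together with the one-sided implication you do establish is all the paper actually uses, and is consistent with the remark the paper makes immediately after the lemma about equality of $\beta_\infty$-values.
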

\begin{proof}
 Suppose that $\beta_\infty(P_1) \leq \beta_\infty(P_2)$ and $P_1 \neq P_2$. Let $i = \min (P_2 \setminus P_1)$. Then
\[
w(P_2) - w(P_1) \geq w_i - \sum_{j=i+1}^\infty w_j \geq 0.
\]
Equality is only possible if $\max P_2 = i$ and $P_1 = P_2 \setminus \{i\} \cup \{i+1,\ldots,\infty\}$. However, in that case $\beta_\infty(P_1) = \beta_\infty(P_2)$.
\end{proof}

There is a subtlety involved here: we can have $\beta_\infty(P_1) = \beta_\infty(P_2)$ for $P_1 \neq P_2$. This is because dyadic rationals (numbers of the form $\frac{A}{2^B}$) have two different binary expansions. For example, $\frac12 = (0.1000\ldots)_2 = (0.0111\ldots)_2$. The lemma states (in this case) that $w(\{1\}) \geq w(\{2,3,4,\ldots\})$, but there need not be equality.

In the sequel, we will use the fact that each real $r \in (0,1)$ has a binary expansion with infinitely many $0$s (alternatively, a set $P$ such that $\beta_\infty(P) = r$ and there are infinitely many $n \notin P$), and a binary expansion with infinitely many $1$s (alternatively, a set $P$ such that $\beta_\infty(P) = r$ and there are infinitely many $n \in P$). If $r$ is not dyadic, then it has a unique binary expansion which has infinitely many $0$s and $1$s. If $r$ is dyadic, say $r=\frac12$, then it has one expansion $(0.1000\ldots)_2$ with infinitely many $0$s and another expansion $(0.0111\ldots)_2$ with infinitely many $1$s.

The following lemma, which forms the first part of Theorem~\ref{thm:super-increasing-limit}, describes the analog of the intervals $(w(P^-),w(P)]$ in the infinite case.

\begin{lemma} \label{lem:super-increasing-partition-inf}
 Let $q \in (0,w(\infty))$. There exists a non-empty subset $P$ of the positive integers such that either $q = w(P)$ or $P=\{\pseq_0,\ldots,\pseq_r\}$ is finite and $q \in (w(P^-),w(P)]$, where $P^- = \{\pseq_0,\ldots,\pseq_{r-1} \} \cup \{ \pseq_r+1,\ldots,\infty \}$.
\end{lemma}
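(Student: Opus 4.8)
The plan is to construct $P$ explicitly by a greedy procedure, mimicking Algorithm~\ref{alg:find set} in the infinite setting. Write $R_i = \sum_{j>i} w_j$ for the tail sums; these are finite, since the super-increasing condition gives $w(\infty) = \sum_j w_j \le 2w_1$, and moreover $R_i \to 0$ as $i\to\infty$. Define $P \subseteq \mathbb{Z}_+$ recursively by declaring that $i \in P$ if and only if $q > w(P_{<i}) + R_i$, where $P_{<i} = P \cap \{1,\dots,i-1\}$. This is a legitimate definition by strong induction on $i$, since deciding whether $i \in P$ uses only the already-determined set $P_{<i}$. It then remains to verify that this $P$ satisfies the required dichotomy.

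First I would check that $P \ne \emptyset$: if $P$ were empty, then $q \le w(\emptyset) + R_i = R_i$ for every $i$, and letting $i\to\infty$ would force $q \le 0$, contradicting $q > 0$. Next, suppose $P$ is finite, say $P = \{\pseq_0 < \cdots < \pseq_r\}$. For each $i > \pseq_r$ we have $i \notin P$ and $P_{<i} = P$, hence $q \le w(P) + R_i$; letting $i\to\infty$ gives $q \le w(P)$. On the other hand, the defining condition for $\pseq_r \in P$ reads $q > w(\{\pseq_0,\dots,\pseq_{r-1}\}) + R_{\pseq_r} = w(P^-)$. Thus $q \in (w(P^-), w(P)]$, which is precisely the finite alternative of the statement (and in particular $w(P^-) < w(P)$, so the interval is non-degenerate). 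Finally, suppose $P$ is infinite, $P = \{i_1 < i_2 < \cdots\}$; then each $i_k \in P$ gives $q > w(\{i_1,\dots,i_{k-1}\}) + R_{i_k} \ge w(\{i_1,\dots,i_{k-1}\})$, and letting $k\to\infty$ yields $q \ge w(P)$. For the reverse inequality I would first note $P \ne \mathbb{Z}_+$ (otherwise $q > w(\infty) - w_i$ for all $i$, contradicting $q < w(\infty)$), and then distinguish two cases. If $\mathbb{Z}_+\setminus P$ is infinite, then for $m \notin P$ one has $q \le w(P_{<m}) + R_m = w(P) + w(\{m+1,m+2,\dots\}\setminus P) \le w(P) + R_m$, and pushing $m\to\infty$ through $\mathbb{Z}_+\setminus P$ gives $q \le w(P)$. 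If $\mathbb{Z}_+\setminus P$ is (non-empty and) finite, put $m = \max(\mathbb{Z}_+\setminus P)$; then $P = P_{<m} \sqcup \{m+1,m+2,\dots\}$ exactly, so $w(P_{<m}) + R_m = w(P)$, while $m \notin P$ gives $q \le w(P_{<m}) + R_m = w(P)$. In all cases $q = w(P)$, completing the verification.

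The step I expect to be the main obstacle is the upper bound $q \le w(P)$ when $P$ is infinite: the naive ``send some index $i \notin P$ to infinity'' argument only works when the complement of $P$ is infinite, so one must separately rule out $P = \mathbb{Z}_+$ --- which is exactly where the strict bound $q < w(\infty)$ (i.e.\ the hypothesis $q \in (0,w(\infty))$ rather than $q \in (0,w(\infty)]$) is essential --- and treat cofinite $P$ via an exact telescoping identity for $w(P)$. The monotonicity statement of Lemma~\ref{lem:super-increasing-rep-inf} is not logically needed for this construction, but it is convenient for sanity-checking that the intervals $(w(P^-),w(P)]$ and the point values $w(P)$ fall where one expects.
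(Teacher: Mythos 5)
Your proof is correct, and it takes a genuinely different route from the paper's. The paper's argument works by truncation and limits: for each $n$ it looks at the finite greedy set $\pseqall|_n(q) \subseteq [n]$, observes that the associated binary value $\beta_\infty(\pseqall|_n)$ is monotone non-increasing, and then splits into two cases depending on whether the sequence $\pseqall|_n$ eventually stabilizes. The non-stabilizing case is handled by passing to the limit $b = \lim \beta_\infty(\pseqall|_n)$ and choosing a binary expansion of $b$ with infinitely many $0$'s to define the set $L$ with $q = w(L)$; this requires some care about dyadic rationals having two binary expansions. You instead run the greedy algorithm \emph{once, directly in the infinite setting}, defining $i \in P$ by the condition $q > w(P_{<i}) + R_i$ with tail sums $R_i$, and then verify the dichotomy by a clean case analysis on whether $P$ is finite, infinite-and-co-infinite, or cofinite. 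This is more elementary --- it never needs $\beta_\infty$, monotonicity of $\beta_\infty(\pseqall|_n)$, or the binary-expansion bookkeeping --- and it also makes the role of the hypothesis $q < w(\infty)$ completely transparent (it is exactly what rules out $P = \mathbb{Z}_+$). The telescoping identity $w(P_{<m}) + R_m = w(P) + w(\{m+1,\ldots\}\setminus P)$ that you use for $m\notin P$ is the key observation that lets you close the infinite case in both the co-infinite and cofinite subcases; the latter subcase is the one a naive ``send $m \to \infty$'' argument would miss, and you correctly flagged and handled it. One small remark: your parenthetical $w(P^-) < w(P)$ in the finite case is a \emph{consequence} of $q$ landing in a nonempty interval (in general for an infinite super-increasing sequence one only has $w_{a_r} \ge R_{a_r}$), but that is exactly how you used it, so no issue. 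All the steps check out.
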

\begin{proof}
 Since $q < w(\infty)$, for some $m$ we have $q \leq w([m])$. For $n \geq m$, let $\pseqall|_n = \pseqall|_n(q)$. Let $Q|_n$ be the subset of $[n]$ preceding $\pseqall|_n$, and let $R|_n$ be the subset of $[n+1]$ preceding $\pseqall|_n$; here ``preceding'' is in the sense of $X \mapsto X^-$. The interval $(w(Q|_n),w(\pseqall|_n)]$ splits into $(w(Q|_n),w(R|_n)] \cup (w(R|_n),w(\pseqall|_n)]$, and so $\pseqall|_{n+1} \in \{R|_n,\pseqall|_n\}$. Also $\beta_\infty(\pseqall|_{n+1}) \leq \beta_\infty(\pseqall|_n)$, with equality only if $\pseqall|_{n+1} = \pseqall|_n$.

We consider two cases. The first case is when for some integer $M$, for all $n \geq M$ we have $\pseqall|_n = \pseqall = \{\pseq_0,\ldots,\pseq_r\}$. In that case for all $n \geq M$,
\[ \sum_{t=0}^{r-1} w_{\pseq_t} + \sum_{t=\pseq_r+1}^n w_t < q \leq \sum_{t=0}^r w_{\pseq_t}, \]
and taking the limit $n \to \infty$ we obtain $q \in (w(\pseqall^-),w(\pseqall)]$.

 The other case is when $\pseqall|_n$ never stabilizes. The sequence $\beta_\infty(\pseqall|_n)$ is monotonically decreasing, and reaches a limit $b$ satisfying $b < \beta_\infty(\pseqall|_n)$ for all $n$. Since $w(\pseqall|_m) \in (w(Q|_n),w(\pseqall|_n)]$ for all integers $m \geq n \geq 1$, Lemma~\ref{lem:super-increasing-rep-inf} implies that $b \in [\beta_\infty(Q|_n),\beta_\infty(A|_n))$.

Let $L$ be a subset such that $b = \beta_\infty(L)$ and there are infinitely many $i \notin L$, and define $L|_n = L \cap [n]$. We have $b \in [\beta_\infty(L|_n),\beta_\infty(L|_n) + 2^{-n})$. Therefore $Q|_n = L|_n$, and so $q > w(Q|_n) = w(L|_n)$. Taking the limit $n \to \infty$, we deduce that $q \geq w(L)$.

If $n \notin L$ then $\pseqall|_n = Q|_n \cup \{n\}$, and so $q \leq w(\pseqall|_n) = w(L|_n) + w_n$. Since there are infinitely many such $n$, taking the limit $n \to \infty$ we conclude that $q \leq w(L)$ and so $q = w(L)$.
\end{proof}

We can now give an explicit formula for $\shapley^{(\infty)}_i$.
\superincreasinglimit*

We comment that the convergence of the sums in the theorem is guaranteed by Lemma~\ref{lem:super-increasing-sum-identity}.

\begin{proof}
The first part has been proved as Lemma~\ref{lem:super-increasing-partition-inf}, and it remains to prove the second part.

Suppose first that $P$ is finite $P$ and either $q = w(P)$ or $q \in (w(P^-),w(P)]$. For all $n \geq \max P$, $P|_n(q) = P$, and so Lemma~\ref{lem:super-increasing-extend} shows that $\shapley^{\vec w|_n}_i(q) = \shapley^{\vec w|_{\max P}}_i(q)$. Therefore the limit exists and equals the stated formula, which is the same as the one given by Theorem~\ref{thm:super-increasing-formula}.

Suppose next that $P$ is infinite and $q = w(P)$. Consider first the case in which we can also write $q = w(Q)$ for some finite $Q$, say $Q = \{q_0,\ldots,q_u\}$. Then $P = \{q_0,\ldots,q_{u-1}\} \cup \{q_u+1,q_u+2,\ldots,\infty\}$. We now consider several cases.

If $i < q_u$ and $i \notin P$ then $i \notin Q$ and
\[ \shapley^{(\infty)}_i(q) = \sum_{\substack{t \in \{0,\ldots,u\}\colon\\ q_t > i}} \frac{1}{q_t\binom{q_t-1}{t}} = \sum_{\substack{t \in \{0,\ldots,u-1\}\colon\\ q_t > i}} \frac{1}{q_t\binom{q_t-1}{t}} + \sum_{\ell = 1}^\infty \frac{1}{(q_u+\ell)\binom{q_u+\ell-1}{t+\ell-1}}, \]
using Lemma~\ref{lem:super-increasing-sum-identity}. The right-hand side is the expression we gave for $\shapley^{(\infty)}_i(w(P))$.

If $i < q_u$ and $i \in P$, say $i = q_s$, then $i \in Q$ and
\[ \shapley^{(\infty)}_i(q) = \frac{1}{i\binom{i-1}{s}} - \sum_{\substack{t \in \{0,\ldots,u\}\colon\\ q_t > i}} \frac{1}{q_t\binom{q_t-1}{t}} = \frac{1}{i\binom{i-1}{s}} - \sum_{\substack{t \in \{0,\ldots,u-1\}\colon\\ q_t > i}} \frac{1}{q_t\binom{q_t-1}{t}} - \sum_{\ell = 1}^\infty \frac{1}{(q_u+\ell)\binom{q_u+\ell-1}{t+\ell-1}}, \]
using Lemma~\ref{lem:super-increasing-sum-identity}. The right-hand side is the expression we gave for $\shapley^{(\infty)}_i(w(P))$.

If $i = q_u$ then $i \in Q$ and $i \notin P$. In that case
\[ \shapley^{(\infty)}_i(q) = \frac{1}{i\binom{i-1}{u}} = \sum_{\ell = 1}^\infty \frac{1}{(i+\ell)\binom{i+\ell-1}{u+\ell-1}}, \]
using Lemma~\ref{lem:super-increasing-sum-identity}. The right-hand side is the expression we gave for $\shapley^{(\infty)}_i(w(P))$.

Finally, if $i > q_u$ then $i \notin Q$ and $i \in P$. Suppose that $i$ is the $v$th member in $P$. In that case
\[ \shapley^{(\infty)}_i(q) = 0 = \frac{1}{i\binom{i-1}{v}} - \sum_{\ell=1}^\infty \frac{1}{(i+\ell)\binom{i+\ell-1}{v+\ell-1}}, \]
using Lemma~\ref{lem:super-increasing-sum-identity}. The right-hand side is the expression we gave for $\shapley^{(\infty)}_i(w(P))$.

It remains to consider the case in which $q$ cannot be written as $q = w(Q)$ for finite $Q$. In that case, there are infinitely many positive integers $n$ such that $n \in P$ and infinitely many such that $n \notin P$. This implies that for every positive integer $n$, $q \in (w(P \cap [n]),w(P \cap [n]) + w_n)$, and so $P|_n^{-}(q) = P \cap [n]$. Lemma~\ref{lem:super-increasing-jumps} shows that $|\shapley_n(q) - \shapley_n(w(P \cap [n]))| \leq \frac1n$. On the other hand, Theorem~\ref{thm:super-increasing-formula} readily implies that $\shapley_n(w(P \cap [n]))$ tends to the expression we gave for $\shapley^{(\infty)}_i(w(P))$. We conclude that $\shapley_n(q)$ tends to the same expression.
\end{proof}

\subsection{Proof of Theorem~\ref{thm:super-increasing-continuous}}
First, we recall the statement of Theorem~\ref{thm:super-increasing-continuous}.
\superincreasingcontinuous*
\begin{proof}
 Let $q \in (0,w(\infty))$. We start by showing that $\shapley^{(\infty)}_i$ is continuous from the right at $q$. Lemma~\ref{lem:super-increasing-partition-inf} shows that we can find a subset $P$ such that either $q = w(P)$ or $q \in (w(P^-),w(P)]$. If $q < w(P)$ then since $\shapley^{(\infty)}_i$ is constant on $(w(P^-),w(P)]$ according to Theorem~\ref{thm:super-increasing-limit}, clearly $\shapley^{(\infty)}_i$ is continuous from the right at $q$. Therefore we can assume that $q = w(P)$. Since $q < w(\infty)$, we can further assume that there are infinitely many $n \notin P$.

 Suppose that we have a sequence $q_j$ tending to $q$ strictly from the right. For each $j$ we can find a subset $P_j$ such that either $q_j = w(P_j)$ or $q_j \in (w(P_j^-),w(P_j)]$. We can assume that the second case doesn't happen by replacing $q_j$ with $w(P_j^-)$; the new sequence still tends to $q$ strictly from the right. So we can assume that $q_j = w(P_j) > w(P)$. Let $k(j) = \min (P_j \setminus P)$, and let $l(j) > k(j)$ be the smallest index larger than $k(j)$ such that $l(j) \notin P$. Then
\[ q_j - q = w(P_j) - w(P) \geq w_{k(j)} - \left(\sum_{t=k(j)+1}^\infty w_t - w_{l(j)}\right) \geq w_{l(j)}. \]
 As $j \to \infty$, $l(j) \to \infty$ and so $k(j) \to \infty$. Therefore we can assume without loss of generality that $k(j) > i$ for all $j$. Theorem~\ref{thm:super-increasing-limit} then implies that
\[ |\shapley^{(\infty)}_i(q_j) - \shapley^{(\infty)}_i(q)| \leq \sum_{s=0}^\infty \frac{1}{(k(j)+s)\binom{k(j)+s-1}{s}} = \frac{1}{k(j)-1}, \]
 using Lemma~\ref{lem:super-increasing-sum-identity}. Since $k(j) \to \infty$, $\shapley^{(\infty)}_i(q_j) \to \shapley^{(\infty)}_i(q)$.

 We proceed to show that $\shapley^{(\infty)}_i$ is continuous from the left at $q$. Lemma~\ref{lem:super-increasing-partition-inf} shows that we can find a subset $P$ such that either $q = w(P)$ or $q \in (w(P^-),w(P)]$. In the second case, since $\shapley^{(\infty)}_i$ is constant on $(w(P^-),w(P)]$ according to Theorem~\ref{thm:super-increasing-limit}, clearly $\shapley^{(\infty)}_i$ is continuous from the left at $q$. Therefore we can assume that $q = w(P)$. Since $q > 0$, we can further assume that there are infinitely many $n \in P$.

 Suppose that we have a sequence $q_j$ tending to $q$ strictly from the left. For each $j$ we can find a subset $P_j$ such that either $q_j = w(P_j)$ or $q_j \in (w(P_j^-),w(P_j)]$, and in both cases $q_j \leq w(P_j) < w(P)$. Let $k(j) = \min (P \setminus P_j)$, and let $l(j) > k(j)$ be the smallest index larger than $k(j)$ such that $l(j) \in P$. Then
\[ q - q_j \geq w(P) - w(P_j) \geq w_{k(j)} + w_{l(j)} - \sum_{t=k(j)+1}^\infty w_t \geq w_{l(j)}. \]
 At this point we can prove that $\shapley^{(\infty)}_i(q_j) \to \shapley^{(\infty)}_i(q)$ as in the preceding case.

 It remains to show that $\lim_{q \to 0} \shapley^{(\infty)}_i(q) = \lim_{q \to w(\infty)} \shapley^{(\infty)}_i(q) = 0$. We start by showing that $\lim_{q \to 0} \shapley^{(\infty)}_i(q) = 0$. Let $q_j$ be a sequence tending to $0$ strictly from the right. As before, we can assume that $q_j = w(P_j)$ for each $j$. Let $k(j) = \min P_j$. Since $q_j \geq w_{k(j)}$, $k(j) \to \infty$. Therefore we can assume without loss of generality that $k(j) > i$ for all $j$. Theorem~\ref{thm:super-increasing-limit} then implies that
\[
 \shapley^{(\infty)}_i(q_j) \leq \sum_{s=0}^\infty \frac{1}{(k(j)+s)\binom{k(j)+s-1}{s}} = \frac{1}{k(j)-1},
\]
 using Lemma~\ref{lem:super-increasing-sum-identity}. Since $k(j) \to \infty$, $\shapley^{(\infty)}_i(q_j) \to 0$.

 We finish the proof by showing that $\lim_{q \to w(\infty)} \shapley^{(\infty)}_i(q) = 0$. Let $q_j$ be a sequence tending to $M$ strictly from the left. As before, we can find subsets $P_j$ such that $q_j \leq w(P_j)$ and $\shapley^{(\infty)}_i(q_j) = \shapley^{(\infty)}_i(w(P_j))$. Let $k(j)$ be the minimal $k \notin P_j$. Since $q_j \leq w(\infty)-w_{k(j)}$, $k(j) \to \infty$. Therefore we can assume without loss of generality that $k(j) > i$ for all $j$. Theorem~\ref{thm:super-increasing-limit} implies that
\[
 \shapley^{(\infty)}_i(q_j) \leq \frac{1}{i\binom{i-1}{i-1}} - \sum_{\ell=1}^{k(j)-1-i} \frac{1}{(i+\ell)\binom{i+\ell-1}{i+\ell-2}} = \frac{1}{k(j)-1},
\]
 using Lemma~\ref{lem:super-increasing-sum-identity}. Since $k(j) \to \infty$, $\shapley^{(\infty)}_i(q_j) \to 0$.
 \end{proof}

\end{document}